\newsavebox\pandoc@box
\newcommand*\pandocbounded[1]{
  \sbox\pandoc@box{#1}%
  \Gscale@div\@tempa{\textheight}{\dimexpr\ht\pandoc@box+\dp\pandoc@box\relax}%
  \Gscale@div\@tempb{\linewidth}{\wd\pandoc@box}%
  \ifdim\@tempb\p@<\@tempa\p@\let\@tempa\@tempb\fi
  \ifdim\@tempa\p@<\p@\scalebox{\@tempa}{\usebox\pandoc@box}%
  \else\usebox{\pandoc@box}%
  \fi%
}
\def\fps@figure{htbp}
\NewDocumentCommand\citeproctext{}{}
\NewDocumentCommand\citeproc{mm}{%
  \begingroup\def\citeproctext{#2}\cite{#1}\endgroup}
 \let\@cite@ofmt\@firstofone
 \def\@biblabel#1{}
 \def\@cite#1#2{{#1\if@tempswa , #2\fi}}
\newlength{\cslhangindent}
\newlength{\csllabelwidth}
\newenvironment{CSLReferences}[2] 
 {\begin{list}{}{%
  \setlength{\itemindent}{0pt}
  \setlength{\leftmargin}{0pt}
  \setlength{\parsep}{0pt}
  \ifodd #1
   \setlength{\leftmargin}{\cslhangindent}
   \setlength{\itemindent}{-1\cslhangindent}
  \fi
  \setlength{\itemsep}{#2\baselineskip}}}
 {\end{list}}
\providecommand{\tightlist}{%
  \setlength{\itemsep}{0pt}\setlength{\parskip}{0pt}}
\renewcommand{\texttt}[1]{\textit{#1}}
\title{Beyond Pairwise Comparisons: A Distributional Test of
Distinctiveness for Machine-Generated Works in Intellectual Property
Law}
\author{}
\date{\vspace{-2.5em}}
\begin{document}
\maketitle

\quotingsetup{font={itshape}, leftmargin=2em, rightmargin=2em, vskip=1ex}

\vspace*{\fill}

\begin{center}
Anirban Mukherjee

Hannah Hanwen Chang

\bigskip

26 January, 2026
\end{center}

\vspace*{\fill}

\noindent \hrulefill

\noindent Anirban Mukherjee
(\href{mailto:anirban@avyayamholdings.com}{\nolinkurl{anirban@avyayamholdings.com}})
is Principal at Avyayam Holdings. Hannah H. Chang
(\href{mailto:hannahchang@smu.edu.sg}{\nolinkurl{hannahchang@smu.edu.sg}};
corresponding author) is Associate Professor of Marketing at the Lee
Kong Chian School of Business, Singapore Management University. This
research was supported by the Ministry of Education (MOE), Singapore,
under its Academic Research Fund (AcRF) Tier 2 Grant,
No.~MOE-T2EP40124-0005.

\newpage

\newtheorem{proposition}{Proposition} 

\begin{center} 
\noindent \textbf{Abstract}
\end{center}

\noindent Key doctrines, including novelty (patent), originality
(copyright), and distinctiveness (trademark), turn on a shared empirical
question: whether a body of work is meaningfully distinct from a
relevant reference class. Yet analyses typically operationalize this
set-level inquiry using item-level evidence: pairwise comparisons among
exemplars. That unit-of-analysis mismatch may be manageable for finite
corpora of human-created works, where it can be bridged by \emph{ad hoc}
aggregations. However, it becomes acute for machine-generated works,
where the object of evaluation is not a fixed set of works but a
generative process with an effectively unbounded output space.

We propose a distributional alternative: a two-sample test based on
maximum mean discrepancy computed on semantic embeddings to determine if
two creative processes---whether human or machine---produce
statistically distinguishable output distributions. The test requires no
task-specific training---obviating the need for discovery of proprietary
training data to characterize the generative process---and is
sample-efficient, often detecting differences with as few as 5--10
images and 7--20 texts.

We validate the framework across three domains: handwritten digits
(controlled images), patent abstracts (text), and AI-generated art
(real-world images). Substantively, we reveal a perceptual paradox: even
when human evaluators distinguish AI outputs from human-created art with
only \textasciitilde58\% accuracy, our method detects robust
distributional distinctiveness. Our results present evidence contrary to
the view that generative models act as mere regurgitators of training
data. Rather than producing outputs statistically indistinguishable from
a human baseline---as simple regurgitation would predict---they produce
outputs that are semantically human-like yet stochastically distinct,
suggesting their dominant function is as a semantic interpolator within
a learned latent space.

\begin{center}\rule{0.5\linewidth}{0.5pt}\end{center}

\noindent Keywords: Novelty, Originality, Distinctiveness, Copyright,
Patent, Trademark, Intellectual Property Law, Artificial Intelligence.

\noindent JEL codes: O34, K11, C12, O31, Z11.

\smallskip

\newpage
\doublespacing

\section{Introduction}\label{introduction}

Intellectual property law is built on a shared empirical premise: to
warrant protection, a work must be meaningfully distinct from a relevant
reference class. Whether assessing ``novelty'' in patent law,
``originality'' in copyright, or ``distinctiveness'' in trademark, the
inquiry turns on comparing a candidate work against a corpus of existing
material---be it prior art, the public domain, or registered
marks.\footnote{Foundational cases establishing this comparative
  framework include Graham v. John Deere Co., 383 U.S. 1 (1966) (patent
  nonobviousness); Feist Publ'ns, Inc.~v. Rural Tel. Serv. Co., 499 U.S.
  340 (1991) (copyright originality); and Abercrombie \& Fitch Co.~v.
  Hunting World, Inc., 537 F.2d 4 (2d Cir. 1976) (trademark
  distinctiveness).} While these legal standards differ in threshold and
scope, the underlying quantitative question is structurally analogous:
does the new work diverge sufficiently from the distribution of existing
works to justify an exclusive right?\footnote{\emph{See} 35 U.S.C. §§
  102, 103 (2018) (patent); 17 U.S.C. § 102(a) (2018) (copyright);
  Lanham Act, 15 U.S.C. §§ 1051--1141n (2018) (trademark).}

Yet a fundamental unit-of-analysis mismatch pervades current practice.
While inquiries often concern the distinctiveness of a work from a
portfolio of works, the available metrics are almost exclusively
item-wise and pairwise (\citeproc{ref-hain2022text}{Hain et al. 2022};
\citeproc{ref-vsavelka2022legal}{Šavelka and Ashley 2022}). For
instance, while a court can assess a single painting's features or
compare two specific paintings, neither approach directly reveals
whether that painting is systematically distinct from an entire
portfolio of paintings---or whether one portfolio, such as an artist's
body of work, is as a whole systematically distinct from another.
Attempts to capture such portfolio-level distinctiveness using existing
methods inevitably depend on either qualitative gestalt judgments or
\emph{ad hoc} aggregations of pairwise distance metrics---such as mean
or maximum similarity---that lack a principled statistical basis
(\citeproc{ref-helmers2019automating}{Helmers et al. 2019};
\citeproc{ref-lin2023measuring}{Lin et al. 2023}).

The emergence of generative AI renders this mismatch intractable.
Generative models are stochastic processes with effectively unbounded
output spaces (\citeproc{ref-chesterman2025good}{Chesterman 2025}).
Consequently, comparing only a finite sample of their outputs to a
reference class yields an inherently incomplete assessment---whether a
sample appears unduly similar or dissimilar to the reference class may
be a matter of happenstance rather than a systematic occurrence.
Furthermore, the generative process itself is often obscured behind
proprietary firewalls. In adversarial litigation, access to training
data or model weights is frequently unavailable or contested, preventing
direct audits of whether a system is memorizing or
interpolating.\footnote{\emph{See, e.g.}, Andersen v. Stability AI Ltd.,
  No.~3:23-cv-00201, 2023 WL 7132064 (N.D. Cal. Oct.~30, 2023) (granting
  motions to dismiss in substantial part while allowing direct
  infringement claim based on training-stage copying to proceed;
  requiring clearer allegations for output-based and derivative work
  theories); Getty Images (US), Inc.~v. Stability AI Ltd., {[}2025{]}
  EWHC 2863 (Ch) (allowing trademark claims to proceed where plaintiff's
  watermarks appeared in model outputs; dismissing most copyright claims
  for lack of UK jurisdiction).} Finally, the challenge is compounded
when AI outputs themselves enter the reference class. When both the
candidate source and the reference class possess effectively infinite
cardinality, pairwise exhaustion becomes impracticable.

Compounding this structural failure is a collapse of the law's primary
instrument of measurement: the human proxy. Courts typically rely on the
``ordinary observer'' or ``person having ordinary skill in the art'' to
intuit distance between creative works. Yet this reliance has become
untenable. Recent empirical work finds that human evaluators distinguish
AI-generated images from human creations only marginally better than
chance---about 58\% for AI-generated art
(\citeproc{ref-silva2024artbrain}{Silva et al. 2024}) and about 62\% for
photorealistic images (\citeproc{ref-roca2025good}{Roca et al. 2025});
given recent advances in generative AI, even these figures likely
\emph{overstate} current human evaluator performance
(\citeproc{ref-roca2025good}{Roca et al. 2025, p. 8}). Moreover, if
ordinary observers struggle, one might expect experts to fare better.
They do not. In 2022, Jason Allen's \emph{Théâtre D'opéra Spatial},
created with Midjourney, won the Colorado State Fair's digital arts
category; one judge later acknowledged he ``didn't realize that it was
generated by AI when judging it.''\footnote{\emph{See} Kevin Roose,
  \emph{An A.I.-Generated Picture Won an Art Prize. Artists Aren't
  Happy}, \textsc{N.Y. Times} (Sept.~2, 2022),
  \url{https://www.nytimes.com/2022/09/02/technology/ai-artificial-intelligence-artists.html}.}
In 2025, an entry in Clip Studio Paint's International Illustration
Contest won a prize before being withdrawn for using generative
AI---despite a multi-stage review that included both detection tools and
visual inspection.\footnote{\emph{See} Press Release, Celsys, Inc.,
  Announcing the Winners of the 44th International Illustration Contest
  (Aug.~19, 2025) (noting entry withdrawal).} When even trained
evaluators with professional stakes cannot reliably distinguish AI
outputs from human creations, the law is left without a stable
perceptual yardstick for adjudicating the distinctiveness of
machine-generated content.

This paper proposes a distributional alternative: a two-sample test
based on maximum mean discrepancy (MMD) computed on semantic
embeddings.\footnote{Our framework operationalizes the \emph{empirical}
  inquiry, not the \emph{normative} legal conclusion. The selection of
  the relevant reference class---what constitutes the ``prior art'' or
  ``market''---remains a matter of legal argumentation. Legal doctrines
  also often incorporate additional factors (e.g., market overlap,
  commercial purpose, or consumer sophistication). These lie beyond the
  purely semantic comparison that our metric provides.} MMD is a
kernel-based statistical metric that evaluates samples collectively to
determine if they are drawn from the same underlying distribution
(\citeproc{ref-gretton2012kernel}{Gretton et al. 2012}). To capture
\emph{meaning} rather than mere surface form, we pair MMD with semantic
embeddings---mappings of text or images into high-dimensional vector
spaces such that semantic relationships are preserved
(\citeproc{ref-mikolov2013efficient}{Mikolov et al. 2013};
\citeproc{ref-chalkidis2019deep}{Chalkidis and Kampas 2019};
\citeproc{ref-radford2021learning}{Radford et al. 2021}). This approach
resolves the tripartite challenge identified above: it is
\emph{distributional} (solving the unit-of-analysis mismatch),
\emph{training-free} (solving the opacity problem), and \emph{objective}
(solving the perceptual failure).

Unlike standard machine learning performance metrics (such as Fréchet
Inception Distance) that output raw scores
(\citeproc{ref-heusel2017gans}{Heusel et al. 2017}), we implement MMD as
a hypothesis test. By employing permutation testing, we convert the
distributional distance into a \emph{p}-value, allowing a fact-finder to
determine---at a chosen standard of proof---whether the generative
process is statistically distinct from the reference class. This
provides a principled way to evaluate whether generative models operate
as mere regurgitators of their training data
(\citeproc{ref-bender2021dangers}{Bender et al. 2021}). If an AI model
simply resamples patterns from its training corpus, its output
distribution should be statistically indistinguishable from a human
reference class.\footnote{If models merely ``stitch together'' their
  training patterns, their output distribution should converge to the
  training distribution in expectation. Support for this line of
  reasoning can be found in prior investigations. \emph{See, e.g.},
  Somepalli et al. (\citeproc{ref-somepalli2023diffusion}{2023})
  (investigating whether diffusion models merely replicate training data
  or generate novel compositions, finding that models can regurgitate
  training examples); \emph{cf.} Goodfellow et al.
  (\citeproc{ref-goodfellow2014generative}{2014}) (defining the
  theoretical objective of generative modeling as the minimization of
  divergence between the model distribution and the data distribution,
  implying that a model optimizing this objective tends toward
  statistical indistinguishability from its training set).} Conversely,
if this null hypothesis is rejected, it provides statistical evidence of
``interpolative distinctiveness'': the model is generating outputs that,
while semantically coherent, occupy a distinct topological region of the
creative space. While it may retain the capacity for rare memorization,
its dominant mode of operation is to function as a semantic interpolator
within a learned latent space.

We validate this methodology in three stages. First, we establish
statistical robustness using the MNIST dataset, confirming the method's
sensitivity to known ground-truth differences. Second, we demonstrate
domain versatility using patent abstracts, confirming the method's
ability to distinguish between technical fields based on textual
semantic embeddings. Third, we apply the framework to the AI-ArtBench
dataset (\citeproc{ref-silva2024artbrain}{Silva et al. 2024}), where we
reveal a perceptual paradox: even where human evaluators fail to
distinguish AI outputs from human art, our test detects robust
distributional distinctiveness. We then extend the analysis to recent
generative AI models, tracing how the distributional distinctiveness of
outputs evolves as generative AI advances. Our finding suggests that
these models are not merely replaying prior art but are engaging in
interpolative creativity. Practically, we demonstrate that this
distinctiveness can be detected with notable sample efficiency---often
requiring as few as 5 to 10 samples per group for images, and 7 to 20
for text---offering a scalable evidentiary tool for courts that must
evaluate novelty based on limited portfolios.

\subsection{Contribution}\label{contribution}

Our contribution is threefold. First, we introduce a framework that
shifts the legal unit of analysis from the \emph{item} to the
\emph{process}, resolving the infinite cardinality problem inherent in
machine creativity. Second, we provide a sample-efficient implementation
that requires no task-specific training, allowing courts and IP offices
to assess distinctiveness without requiring access to proprietary model
weights or massive training datasets. Third, we provide substantive
evidence that the dominant mode of generative models is not mere
regurgitation, identifying the phenomenon of ``interpolative
distinctiveness'' where AI outputs are semantically human-like yet
stochastically distinct. The remainder of the paper is organized as
follows: Section 2 reviews the limitations of current legal and
technical metrics; Section 3 details the mathematical derivation of our
MMD-based framework; Section 4 validates the methodology using the MNIST
dataset; Section 5 extends the validation to the textual domain using
patent abstracts; Section 6 applies the framework to the AI-ArtBench
dataset to reveal the perceptual paradox; and Section 7 discusses the
doctrinal and evidentiary implications for copyright, patent, and
trademark law.

\section{Distinctiveness in Literature and
Practice}\label{distinctiveness-in-literature-and-practice}

Distinctiveness requires a metric that matches the ontology of the
subject. Yet, whether in computer science or law, the prevailing tools
for measuring distinctiveness remain structurally misaligned with
generative AI. These tools attempt to adjudicate \emph{process-level}
phenomena using \emph{item-level} evidence, relying on pairwise
comparisons and subjective proxies that falter when applied to the
effectively unbounded output spaces of machine creativity, while
data-dependent metrics are impracticable in litigation where discovery
is limited.

In this section, we map this methodological gap. We first critique
existing academic metrics---from verbatim memorization checks to
pairwise semantic similarity and supervised distinctiveness
classifiers---demonstrating how they fail to capture the systemic nature
of generative models or require data inaccessible in litigation. We then
examine how patent, copyright, and trademark law currently
operationalize distinctiveness, showing why human-centric standards
cannot scale to the infinite cardinality of machine creativity. Finally,
we review the theoretical impasse between regurgitation-based critiques
and probabilistic notions of novelty, arguing that this debate remains
unresolved precisely because empirical researchers lack a distributional
framework capable of distinguishing between mechanical regurgitation and
creative interpolation.

\subsection{Distinctiveness in the Academic
Literature}\label{distinctiveness-in-the-academic-literature}

Research on quantifying the distinctiveness of creative works spans
legal scholarship, natural language processing (NLP), and cultural
evolution. Prevailing methodologies fall into three primary streams: (1)
verbatim memorization metrics, (2) pairwise semantic similarity
measures, and (3) supervised proxies for human judgment. These
approaches operate fundamentally at the \emph{item} level---scoring
specific sentences, images, or marks---rather than at the level of
creative processes. A fourth emerging stream---distributional metrics
from machine learning---adopts the right structural approach but remains
misaligned with legal objectives, prioritizing fidelity over
distinctiveness. Consequently, the literature currently lacks a
framework capable of quantifying the divergence of a generative process
or the distinctiveness of a set of works from its reference class.

\subsubsection{Verbatim Memorization and Exact
Matching}\label{verbatim-memorization-and-exact-matching}

Research in AI safety has developed rigorous metrics to detect when
models reproduce training data verbatim. For instance, Carlini et al.
(\citeproc{ref-carlini2023quantifying}{2023}) quantify memorization by
prompting models with prefixes taken from the training set and counting
how often the model reproduces long suffixes \emph{exactly}; they also
check whether generated continuations appear as verbatim substrings
elsewhere in the training corpus. Chang et al.
(\citeproc{ref-chang2023speak}{2023}) instead employ
membership-inference style tests: carefully constructed cloze prompts
over copyrighted books probe whether a model reliably fills in missing
passages, revealing when it has effectively memorized the underlying
text. The RAVEN framework by McCoy et al.
(\citeproc{ref-mccoy2023much}{2023}) assesses novelty by measuring, for
each generated text, the proportion of structural patterns (e.g.,
\(N\)-grams, syntactic configurations) that do not appear in the
training corpus.

While vital for detecting data leakages, these metrics are insufficient
for assessing distinctiveness. They are inherently brittle; a model that
paraphrases a copyrighted work or mimics an artistic style without exact
word-level or pixel-level replication would score as ``novel'' under
these metrics, despite potentially lacking independent creation.
Furthermore, they typically require access to the full training
corpus---which is often unavailable for proprietary foundation models in
litigation or administrative review. Ultimately, they measure
\emph{replication}, not \emph{distinctiveness}, and they do so at the
level of individual samples rather than the underlying creative process.
They cannot determine whether any replication is the system's governing
dynamic or merely a statistical anomaly.

\subsubsection{Pairwise Semantic
Similarity}\label{pairwise-semantic-similarity}

To capture meaning rather than mere surface form, the dominant
methodological paradigm is \emph{pairwise semantic similarity} using
semantic embeddings. In this framework, documents or images are mapped
to high-dimensional vectors (using models like BERT for text or
ResNet/VGG for images), and distinctiveness is measured as the geometric
distance (typically cosine or Euclidean) between a specific candidate
work and specific prior art references.

This approach is prevalent in legal informatics and patent analysis. For
example, Westermann et al. (\citeproc{ref-westermann2020sentence}{2020})
use sentence embeddings and approximate nearest-neighbor search to
retrieve case-law paragraphs that are semantically similar to a query
sentence. In patent and scientometrics, Hain et al.
(\citeproc{ref-hain2022text}{2022}) describe a text-embedding-based
method for computing cosine similarity between patents at scale, using
these pairwise distances to study technological relatedness and
knowledge flows, while Shibayama et al.
(\citeproc{ref-shibayama2021measuring}{2021}) use word embeddings to
define recombinant novelty scores based on how unusually distant a
paper's cited references are from one another. More recently, Lin et al.
(\citeproc{ref-lin2023measuring}{2023}) use multimodal embeddings to
measure patent similarity based on both text and image recognition, and
Chiba-Okabe and Su (\citeproc{ref-chibaokabe2024tackling}{2024}) measure
originality as the expected distance between a given image and samples
drawn from a baseline distribution of context-conditioned or ``generic''
images.

However, pairwise methods face a fundamental scaling limitation when
applied to generative AI. Assessing an AI model's distinctiveness
requires comparing its \emph{generative potential} against potentially
infinite sets (e.g., the entirety of prior art). One cannot feasibly
compute the distance between every potential AI output and every
existing human work. In practice, analysts must rely on finite samples,
treating observed distances as deterministic rather than as outcomes of
a probabilistic process. Moreover, aggregating pairwise distances---such
as taking the mean or minimum distance---yields a statistically
impoverished metric as it collapses many pairwise distances into
scalars, while also forcing an arbitrary choice among scalars (e.g.,
mean, median, or variance) that each characterize the distances
differently. As Lin et al. (\citeproc{ref-lin2023measuring}{2023})
argue, such ad hoc aggregations often depend on arbitrary choices that
lack a principled statistical basis (p.~2). This reduction also discards
the full distributional structure; without this context, the aggregates
lack a null model, leaving courts without a way to determine whether an
observed distance is statistically significant (i.e., unlikely to occur
purely by chance).

\subsubsection{Subjective Proxies and Supervised
Classification}\label{subjective-proxies-and-supervised-classification}

Studies have sought to automate distinctiveness judgments by training
models to mimic legal decision-makers. In trademark, Adarsh et al.
(\citeproc{ref-adarsh2024automating}{2024}) build a classifier over
roughly 1.5 million USPTO applications, using examiner office actions to
infer whether a mark was treated as inherently distinctive under the
Abercrombie spectrum, and then fine-tune transformer-based language
models to predict that label. Similarly, Xu and Ashley
(\citeproc{ref-xu2025labelfree}{2025}) generate synthetic ``anchor''
marks spanning Abercrombie categories, obtain pairwise distinctiveness
judgments for anchor--anchor and anchor--real-mark comparisons from
large language models (LLMs), and fit a Bradley--Terry model to these
comparisons to derive a continuous ``distinctiveness score.''

While these methods align closely with legal doctrine, they are
inherently limited by their reliance on labeled or pseudo-labeled
judgments. First, they are predictive rather than metric: they forecast
a \emph{legal conclusion} (e.g., ``inherently distinctive'' versus
``descriptive'') rather than quantifying the \emph{empirical distance}
between creative processes. Second, they ultimately inherit the limits
of human (or human-simulating) perception: Adarsh et al.
(\citeproc{ref-adarsh2024automating}{2024}) use examiner decisions as
ground truth, and Xu and Ashley (\citeproc{ref-xu2025labelfree}{2025})
anchor their scale in Abercrombie-style categories and LLMs trained to
emulate legal intuition. This is especially problematic in the context
of AI-generated art, where recent empirical work demonstrates that human
evaluators distinguish AI-generated images from human art with only
approximately 58\% accuracy (\citeproc{ref-silva2024artbrain}{Silva et
al. 2024}). If human perception collapses at the item level, supervised
models trained to replicate those judgments lose their utility as
ground-truth metrics for distinctiveness.

\subsubsection{The Distributional Gap}\label{the-distributional-gap}

Theoretical legal scholarship has long recognized that distinctiveness
should ideally be conceptualized probabilistically. Vermont
(\citeproc{ref-vermont2012sine}{2012}) argues that copyright's proper
domain is \emph{unique} works---those that are ``novel and
unrepeatable,'' in the sense that no other creator is likely to
independently produce the same work. Byron
(\citeproc{ref-byron2006tying}{2006}) develops a probability theory of
``copyrightable creativity,'' under which a work's protectability turns
on how unlikely it is to be created given the constraints of the
expressive field. Both accounts treat distinctiveness as a property of
an underlying \emph{distribution}: originality becomes ``probabilistic
uniqueness,'' the chance that a work would \emph{not} arise again from
another draw on the same cultural and technological resources.

However, while legal theory anticipates a distributional framework,
existing distributional metrics in machine learning were developed with
an inverted objective: establishing fidelity rather than
distinctiveness. Metrics such as FID and KID compare the distributions
of deep features for real and generated images. They are used as
\emph{performance metrics}: lower FID or KID indicates that a generative
model more faithfully reproduces the empirical image distribution
(\citeproc{ref-heusel2017gans}{Heusel et al. 2017};
\citeproc{ref-binkowski2018demystifying}{Bińkowski et al. 2018};
\citeproc{ref-naeem2020reliable}{Naeem et al. 2020};
\citeproc{ref-wang2025distributed}{Wang et al. 2025}). While these
metrics are undeniably distributional, they function as uncalibrated
\emph{scores}: a smaller number is ``better,'' yet there is no
principled answer to questions like whether a FID difference of 10
versus 20 corresponds to a meaningful degree of distinctiveness.
Crucially, they do not present hypothesis tests; there is no associated
null distribution, \emph{p}-value, or confidence interval, and thus no
way to translate a particular score into a statement such as ``with 99\%
confidence, these two generative processes differ.''

Recent work on ``novelty'' pushes closer to the legal question but still
falls short of evidentiary needs. Zhang et al.
(\citeproc{ref-zhang2024interpretable}{2024}) propose Kernel-based
Entropic Novelty (KEN) to quantify \emph{mode-based novelty},
identifying types that occur more frequently in a generative model than
in a reference model. Kim et al. (\citeproc{ref-kim2022mutual}{2022})
introduce mutual information divergence (MID) to evaluate text--image
generative models. While both metrics directly target distributional
differences and correlate well with human judgments, they are both
scalar metrics. Neither yields a probability or level of confidence that
a distribution is distinct from another.

Consequently, a methodological gap remains. There is currently no
established framework, particularly in legal or forensic contexts, that
combines the \emph{semantic richness} of modern embeddings with the
\emph{rigorous hypothesis testing} of statistics. Our proposed framework
fills this gap by repurposing kernel methods from performance evaluation
to forensic analysis. By wrapping kernel mean embeddings in a two-sample
testing procedure, we convert raw distances into p‑values, providing a
way to determine---at an explicitly chosen standard of proof---whether
two creative processes are statistically distinct.

Thus, we offer a complementary analytical pathway to traditional
verbatim memorization checks. While these metrics assess whether
\emph{any} specific example of a candidate class or output of a
candidate process is similar to a member of a reference
class---detecting even rare instances of potential infringement---we
assess the \emph{systematic} tendencies of a candidate class or process.
A process that is predominantly distributionally distinct may yet
produce regurgitative content; conversely, a process that avoids exact
verbatim matches may yet produce outputs with substantial similarity to
a reference class. Legal adjudication therefore is likely to benefit
from both forms of scrutiny: memorization tests to detect specific data
leakage, and distributional tests to evaluate the creative independence
of the generative process.

\subsection{Distinctiveness in
Practice}\label{distinctiveness-in-practice}

Legal inquiries of novelty, originality, and distinctiveness share a
common structure: they are comparative and anthropocentric. In each
domain, the law relies on a human proxy---the ``person having ordinary
skill in the art'' (patent), the ``ordinary observer'' (copyright), or
the ``reasonable consumer'' (trademark)---to perform a pairwise
comparison between a specific new work and specific prior art. As we
show below, this reliance on pairwise comparisons and human intuition
creates a methodological gap when applied to the effectively infinite
and non-human output distributions of machines.

\subsubsection{Patent}\label{patent}

Patent law demands a rigorous, structured comparison between a claimed
invention and the ``prior art''---a standard codified as novelty and
non-obviousness. The Supreme Court formalized this process in
\emph{Graham v. John Deere Co.}, establishing a framework that requires
the fact-finder to determine ``the scope and content of the prior art''
and to ascertain the ``differences between the prior art and the claims
at issue.''\footnote{\label{graham}383 U.S. 1, 17 (1966) (establishing
  the framework for non-obviousness).} This is fundamentally a pairwise
exercise: an examiner or court places the claimed invention side-by-side
with specific prior art references to determine if the new claim is
anticipated or rendered obvious by the old. The benchmark for this
measurement is not a quantitative metric, but the PHOSITA---a
hypothetical construct used to determine if the difference between the
two items would have been ``obvious to a person having ordinary skill in
the art.''\footnote{35 U.S.C. § 103 (2018).} This framework has evolved
to prioritize functional predictability over rigid rules, yet it remains
tethered to human cognition.\footnote{In \emph{KSR International Co.~v.
  Teleflex Inc.}, the Supreme Court rejected rigid tests for
  obviousness, holding instead that if a combination of known elements
  yields ``predictable results,'' it lacks the requisite
  distinctiveness. \emph{See} 550 U.S. 398, 416 (2007).}

In the context of AI, the inquiry has shifted from the distinctiveness
of the output to the distinctiveness of the human contribution. In
\emph{Thaler v. Vidal}, the Federal Circuit held that an ``inventor''
under the Patent Act must be a natural person, affirming the rejection
of AI systems as inventors.\footnote{43 F.4th 1207, 1213 (Fed. Cir.
  2022).} Under current USPTO guidance, inventorship for AI-assisted
inventions turns on traditional conception standards: whether a natural
person formed a ``definite and permanent idea of the complete and
operative invention,'' with AI systems treated as tools analogous to
laboratory equipment, effectively rendering the probabilistic novelty
(or lack thereof) introduced by AI outputs irrelevant absent significant
human contribution.\footnote{\emph{See} Revised Inventorship Guidance
  for AI-Assisted Inventions, 90 \textsc{Fed. Reg.} 54,636 (Nov.~28,
  2025) (rescinding prior guidance and clarifying that traditional
  conception standards govern all inventions regardless of AI
  involvement).} Furthermore, for visual innovations---the domain most
relevant to generative art---design patent law employs the ``ordinary
observer'' test established in \emph{Egyptian Goddess, Inc.~v. Swisa,
Inc.}\footnote{543 F.3d 665, 678 (Fed. Cir. 2008) (en banc)
  (establishing the ``ordinary observer'' test for design patent
  infringement).} This test asks whether an ordinary observer, giving
such attention as a purchaser usually gives, would find the two designs
substantially the same. Like the utility patent framework, this test
relies on a gestalt human impression of pairwise similarity rather than
a distributional analysis of the design space.

\subsubsection{Copyright}\label{copyright}

In copyright jurisprudence, the distinctiveness inquiry operates at two
separate stages: the threshold determination of ``originality''
(validity) and the assessment of ``substantial similarity''
(infringement). For validity, the Supreme Court in \emph{Feist
Publications, Inc.~v. Rural Telephone Service Co.} established that
originality requires only independent creation and a ``modicum of
creativity.''\footnote{\label{feist}499 U.S. 340, 346 (1991).} This
threshold is operationally binary, offering no metric for the degree of
divergence. By contrast, the standard for \emph{transformative}
distinctiveness---relevant to fair use---is substantially more
stringent. In \emph{Andy Warhol Foundation v. Goldsmith}, the Supreme
Court held that even a work with a distinct aesthetic (e.g., a
silkscreen treatment of a photograph) fails to be transformative if it
shares the same commercial purpose as the original.\footnote{598 U.S.
  508 (2023).} This suggests that distinctiveness is not merely a
measure of visual difference but also of market function---a nuance that
purely geometric metrics may miss.

When measuring infringement, courts employ a bifurcated approach to
assess the distance between two specific works. The Ninth Circuit's
``extrinsic/intrinsic'' test, articulated in \emph{Sid \& Marty Krofft
Television Productions v. McDonald's Corp.}, exemplifies this pairwise
methodology.\footnote{562 F.2d 1157, 1164 (9th Cir. 1977).} The
``extrinsic'' test employs expert testimony to analytically dissect the
works and compare objective elements (e.g., plot, themes, dialogue),
effectively filtering out unprotectable ideas. If similarity survives
this dissection, the ``intrinsic'' test asks whether an ``ordinary
reasonable person'' would perceive the ``total concept and feel'' of the
works as substantially similar.\footnote{\emph{Id.} at 1167.} The Second
Circuit's ``abstraction-filtration-comparison'' test in \emph{Computer
Associates International, Inc.~v. Altai, Inc.} systematizes this
process, filtering out non-distinctive elements before comparing the
core of protectable expression.\footnote{\emph{See} 982 F.2d 693, 706
  (2d Cir. 1992).} In both frameworks, distinctiveness is measured by
the subjective impression of similarity between two works, rather than
by any objective measure of a work's standing within a broader portfolio
or catalog.

Recent cases involving generative AI expose the limitations of these
metrics. In \emph{Thaler v. Perlmutter}, affirmed by the D.C. Circuit in
2025, courts applied a strict ``human authorship'' requirement, refusing
to consider the distinctiveness of AI outputs---regardless of their
novelty---because they lack a human origin.\footnote{\emph{See} Thaler
  v. Perlmutter, 130 F.4th 1039 (D.C. Cir. 2025), \emph{aff'g} 687 F.
  Supp. 3d 140 (D.D.C. 2023).} Conversely, in infringement litigation
such as \emph{Andersen v. Stability AI Ltd.}, plaintiffs have struggled
to plausibly plead certain theories---such as that AI models contain
``compressed copies'' of training data or that outputs are infringing
derivative works---without showing substantial similarity between
specific outputs and specific copyrighted works.\footnote{\emph{See}
  No.~3:23-cv-00201, 2023 WL 7132064, at *7--8 (N.D. Cal. Oct.~30, 2023)
  (requiring plaintiffs to clarify ``compressed copies'' theory and
  noting that derivative work claims require substantial similarity
  allegations).} Where plaintiffs have succeeded in demonstrating a lack
of distinctiveness, it has been through allegations of verbatim or
near‑verbatim memorization, such as the evidence of text reproduction
cited in \emph{The New York Times Co.~v. Microsoft Corp.} to demonstrate
the model's ``regurgitation'' of articles (e.g., Exhibit J's red‑marked
overlaps).\footnote{\emph{See} Complaint at ¶¶ 105--07, Ex. J, N.Y.
  Times Co.~v. Microsoft Corp., No.~1:23-cv-11195 (S.D.N.Y. Dec.~27,
  2023).} This reliance on verbatim metrics confirms that the law
currently lacks a tool to measure distinctiveness absent exact
replication.

\subsubsection{Trademark}\label{trademark}

Unlike the creativity-focused standards of patent and copyright,
trademark distinctiveness is explicitly market-facing: it asks whether a
mark can differentiate goods or services in the minds of consumers.
Courts operationalize this through a semantic taxonomy known as the
``Abercrombie spectrum.'' Established in \emph{Abercrombie \& Fitch
Co.~v. Hunting World, Inc.}, this framework categorizes marks based on
their linguistic relationship to the product: generic, descriptive,
suggestive, arbitrary, or fanciful.\footnote{537 F.2d 4, 9 (2d Cir.
  1976).} A mark's placement on this spectrum determines its legal
strength. For example, ``arbitrary'' marks (e.g., ``Apple'' for
computers) are deemed inherently distinctive, while ``descriptive''
marks require proof of ``secondary meaning''---an empirical
demonstration that the public associates the term with a specific
source.\footnote{\emph{Id.} at 10.} Courts operationalize this inquiry
through multi-factor tests, such as the \emph{Polaroid} factors, which
explicitly require a pairwise assessment of the ``similarity of the
marks'' against specific prior uses.\footnote{Polaroid Corp.~v. Polarad
  Elecs. Corp., 287 F.2d 492 (2d Cir. 1961).} The measurement is
primarily conceptual and linguistic, relying on judicial intuition
regarding consumer understanding rather than a systematic comparison of
the mark against the distribution of existing commercial symbols.

This market-facing inquiry bifurcates when applied to visual trade
dress. In \emph{Two Pesos, Inc.~v. Taco Cabana, Inc.}, the Supreme Court
held that the décor of a Mexican restaurant could be ``inherently
distinctive'' based on a jury's gestalt judgment of its overall look and
feel.\footnote{\emph{See} 505 U.S. 763, 773--74 (1992).} In contrast,
the Court in \emph{Wal-Mart Stores, Inc.~v. Samara Brothers, Inc.} ruled
that product design (e.g., the cut of children's clothing) can
\emph{never} be inherently distinctive and always requires empirical
proof of secondary meaning, typically via consumer surveys.\footnote{529
  U.S. 205, 212--13 (2000).} These arguments have recently been extended
to digital assets. In \emph{Hermès International v. Rothschild} (the
``MetaBirkins'' case), the court denied summary judgment, and a jury
subsequently found that NFT iterations of physical handbags were not
distinct enough to avoid confusion, prioritizing consumer perception
over the technical novelty of the digital medium.\footnote{Hermès Int'l
  v. Rothschild, 654 F. Supp. 3d 268 (S.D.N.Y. 2023) (denying
  cross-motions for summary judgment); \emph{see also} Verdict Form at
  1, Hermès Int'l v. Rothschild, No.~1:22-cv-00384 (S.D.N.Y. Feb.~8,
  2023), ECF No.~146 (finding for Hermès on trademark infringement,
  dilution, and cybersquatting claims).} Thus, the law oscillates
between subjective ``gestalt'' impressions and ad hoc empirical data,
lacking a unified metric for visual distinctiveness.

Paradoxically, this reliance on cognitive association may enable
trademark claims based on AI ``hallucinated'' marks. In \emph{Getty
Images (US), Inc.~v. Stability AI, Inc.}, the plaintiff alleged that
AI-generated images containing distorted, illegible versions of the
Getty watermark nonetheless infringed its trademark.\footnote{Complaint,
  Getty Images (US), Inc.~v. Stability AI, Inc., No.~1:23-cv-00135 (D.
  Del. Feb.~3, 2023).} A recent decision by the English High Court in
the parallel UK litigation found limited trademark infringement where
early versions of Stable Diffusion generated outputs displaying Getty's
watermarks, though the court rejected Getty's secondary copyright claims
and characterized its trademark findings as ``extremely
limited.''\footnote{Getty Images (US) Inc.~v. Stability AI
  Ltd.~{[}2025{]} EWHC 2863 (Ch) (Eng.) (finding trademark infringement
  under ss. 10(1) and 10(2) of the Trade Marks Act 1994 for certain
  watermark-bearing outputs, while dismissing the s. 10(3) claim and
  rejecting the secondary copyright infringement theory).} Under
traditional confusion analysis, the distinctiveness of the watermark is
measured not by its fidelity to the original, but by its capacity to
trigger a cognitive association with the source in the mind of the
consumer. If the ``hallucination'' retains the essential visual
characteristics of the mark, it is legally indistinguishable from the
genuine article, regardless of its origins.

\subsection{Regurgitation vs.~Interpolation: Machine Creativity and
Distinctiveness}\label{regurgitation-vs.-interpolation-machine-creativity-and-distinctiveness}

Current intellectual property frameworks operate on a strong presumption
against AI distinctiveness, viewing generative systems as tools rather
than independent creators. This position relies on the ``human
authorship'' requirement, which posits that AI systems lack the
requisite mental conception to produce protectable work. As Ginsburg and
Budiardjo (\citeproc{ref-ginsburg2019authors}{2019}) argue, because
machines cannot formulate creative plans, they lack the ``initiative
that characterizes human authorship'' and are ``closer to amanuenses
than to true `authors'\,'' (p.~349). Under this view, the human
programmer or prompter provides the creative rules, and the machine
merely executes them (\citeproc{ref-bridy2012coding}{Bridy 2012};
\citeproc{ref-lemley2023generative}{Lemley 2023}). This doctrinal stance
has been formalized in recent administrative and judicial decisions. The
U.S. Copyright Office has declined to extend copyright protection to
AI-generated images on the grounds that they are not the product of
human creative control,\footnote{\emph{See} Letter from U.S. Copyright
  Office to Van Lindberg, Counsel for Kristina Kashtanova, re: Zarya of
  the Dawn (Registration \# VAu001480196) (Feb.~21, 2023) (canceling
  original registration and reissuing certificate excluding AI-generated
  images).} and federal courts have affirmed that works generated solely
by AI are ineligible for protection.\footnote{\emph{See} Thaler v.
  Perlmutter, No.~23-5233 (D.C. Cir. Mar.~18, 2025).} Similarly, patent
law now requires that a natural person provide a ``significant
contribution'' to the conception of the invention, effectively treating
the AI's output as non-distinctive absent human intervention.\footnote{\emph{See}
  Inventorship Guidance for AI-Assisted Inventions, 89
  \textsc{Fed. Reg.} 10,043 (Feb.~13, 2024).}

The theoretical underpinning of this legal exclusion is the
regurgitation critique. This view posits that LLMs and diffusion models
merely stitch together patterns observed in training data without
genuine understanding or intent (\citeproc{ref-bender2021dangers}{Bender
et al. 2021}). Viewed through this lens, AI outputs are ``functionally
derivative''---operational recombinations of prior art that rely on
statistical correlations rather than original expression. This theory is
central to current infringement litigation, where plaintiffs allege that
AI models contain ``compressed copies'' of their training corpora and
``regurgitate'' memorized content.\footnote{\emph{See, e.g.}, Complaint,
  \emph{supra} note 21, at ¶¶ 105--07; \emph{Andersen}, 2023 WL 7132064,
  at *7--8.} If the regurgitation critique holds, AI outputs should be
statistically indistinguishable from the distribution of works upon
which they were trained, representing no meaningful departure from the
prior art.

However, a competing theoretical perspective suggests that AI generative
processes are fundamentally interpolative, potentially yielding
``probabilistic novelty.'' Because generative models map
high-dimensional latent spaces to output spaces, they do not simply
retrieve existing data points but interpolate between them, often
building internal representations of the underlying concepts
(\citeproc{ref-bubeck2023sparks}{Bubeck et al. 2023}). Consequently, the
resulting outputs are almost always structurally distinct from any
single training example. Indeed, the phenomenon of ``hallucination'' or
``confabulation''---where models generate plausible but non-factual
content---serves as evidence that the system is diverging from its
training distribution rather than merely reproducing it
(\citeproc{ref-Ji2023}{Ji et al. 2023};
\citeproc{ref-mukherjee2023managing}{Mukherjee and Chang 2023}). This
argument has been deployed in defense of AI training, with developers
arguing that because models learn statistical relationships to generate
new expression, the outputs are transformative rather than
derivative.\footnote{\emph{See} Defendant Meta Platforms, Inc.'s Motion
  to Dismiss, Kadrey v. Meta Platforms, Inc., No.~3:23-cv-03417 (N.D.
  Cal. Sept.~18, 2023), ECF No.~23.}

Empirical attempts to resolve this tension have reached an impasse. One
strand of research supports the regurgitation view, documenting
instances where models memorize and reproduce training data verbatim,
particularly when prompted with specific prefixes
(\citeproc{ref-chang2023speak}{Chang et al. 2023};
\citeproc{ref-carlini2023extracting}{Nasr et al. 2023};
\citeproc{ref-diakopoulos2023memorized}{Diakopoulos 2023}). Conversely,
other studies utilizing semantic analysis suggest that AI systems can
achieve high degrees of structural novelty and systematic generalization
that go beyond mere memorization (\citeproc{ref-lake2023human}{Lake and
Baroni 2023}; \citeproc{ref-mccoy2023much}{McCoy et al. 2023};
\citeproc{ref-lin2024evaluating}{Lin et al. 2024}). This apparent
contradiction arises in part from methodological limitations: existing
studies typically rely on \emph{item-level} metrics---checking for exact
string matches or pairwise similarity between specific outputs and
specific training examples. These metrics cannot adjudicate the
regurgitation-versus-interpolation debate because they fail to capture
the \emph{distributional} nature of the claim: if a model simply
regurgitates, its output distribution should converge to the training
distribution. If it interpolates, the distribution should shift.
Item-level analysis misses this distinction entirely, as it conflates
the occasional novel generation of an output similar to a training
example with the regurgitative replication of the training example.

\subsection{The Case for a Distributional
Framework}\label{the-case-for-a-distributional-framework}

The foregoing review reveals a tripartite challenge. Academic metrics
face a scaling crisis: verbatim checks are too narrow to capture
stylistic mimicry, while pairwise semantic comparisons succumb to the
``infinite cardinality'' of generative output spaces. Legal tests face
an adjudicability crisis: doctrines relying on human proxies (the
PHOSITA or ordinary observer) are strained as AI outputs become
increasingly difficult for humans to distinguish from human work, yet
legal institutions lack the resources to audit massive training sets.
Theoretical debates face an empirical impasse: the regurgitation
critique cannot be adjudicated by item-level analysis, as even a pure
regurgitator can occasionally produce a unique sentence by chance.
Resolving these challenges requires a metric that is
\emph{distributional} (to solve the scaling problem), \emph{semantically
aware} (to capture meaning over form), and \emph{sample-efficient} (to
be usable in court).

We propose a framework based on MMD as the specific quantitative remedy.
MMD evaluates samples collectively to determine if they are drawn from
the same underlying distribution, shifting the unit of analysis from the
\emph{item} to the \emph{process}. By pairing MMD with semantic
embeddings and permutation-based hypothesis testing, we convert raw
distributional distances into \emph{p}-values, allowing fact-finders to
assess---at a chosen standard of proof---whether two creative processes
are statistically distinct.

This formulation clarifies a critical limitation. Formally, we can view
the generative output distribution \(Q\) as a mixture:
\(Q = (1-\epsilon)Q_{novel} + \epsilon P_{training}\), where
\(\epsilon\) represents the rate at which the model regurgitates
training examples rather than generating novel outputs. If \(\epsilon\)
is small, the MMD between \(Q\) and \(P\) will remain high (indicating
distributional distinctiveness), effectively masking the infringing
tail. Thus, MMD measures the dominant creative mode (\(1-\epsilon\)),
while memorization audits are required to detect and quantify the
regurgitative component (\(\epsilon\)). This mixture formulation
explains why both tests are necessary: MMD evaluates whether
\(Q \neq P\) (process distinctiveness), while memorization audits detect
the presence and magnitude of the \(\epsilon P_{training}\) component
(item-level infringement).

We position this test not as a replacement for memorization audits, but
as a necessary counterpart. Because a generative process may be
distributionally distinct yet still produce rare instances of verbatim
regurgitation, we advocate for a bifurcated inquiry: using MMD to assess
process-level distinctiveness, while employing nearest-neighbor audits
to detect specific outliers or ``needles in the haystack.'' The next
section develops the mathematical foundations of our approach.

\section{Method Development}\label{method-development}

To resolve the unit-of-analysis mismatch identified in Section 2, we
must re-conceptualize the object of inquiry. Rather than focusing on
individual items, we represent creative sources---whether a human artist
or a generative AI model---as ``creative processes,'' treating specific
works as samples drawn from the high-dimensional probability
distribution characterizing the process. For instance, the corpus of
William Shakespeare's works can be understood as realizations of a
specific stochastic process, providing an empirical outline of the
random variable describing his creativity.

``Distinctiveness,'' in turn, is not a measure of the geometric distance
between two specific works (such as a new work and a specific prior art
reference). Instead, it is the statistical divergence between the
distributions of the corresponding stochastic processes, \(P\) and
\(Q\). If the processes are identical, \(P = Q\); if they are distinct,
\(P \neq Q\).

This conceptual shift accounts for the probabilistic nature of
creativity. As the ``infinite monkey theorem'' suggests, distinct
processes may occasionally produce the same output: given infinite time,
the creative process of Shakespeare and that of a monkey on a typewriter
might both output the same sonnet. However, such overlaps become
increasingly unlikely as the processes diverge---on average and in
finite time, we would expect a monkey to only produce output that vastly
differs from a sonnet. Conversely, if an AI model regurgitates
Shakespeare, we would expect high-probability regions of their
distributions to overlap, and for the AI's outputs to more frequently
and more closely resemble Shakespeare's works. Thus, given finite
samples of works from two processes, it is the aggregate pattern of
pairwise distances---and not any incidental pairwise distances---that
reveals distinctiveness; systemic distinctiveness manifests at the
distributional level and not at the item level.

To operationalize this inquiry, we propose a statistical framework based
on KMEs (for detailed technical derivations and properties, see
\citeproc{ref-gretton2012kernel}{Gretton et al. 2012};
\citeproc{ref-muandet2017kernel}{Muandet et al. 2017})\footnote{The
  mathematics underlying KMEs is complex. We provide a discussion
  tailored to our specific use; additional details can be found in the
  referenced works, with an exhaustive presentation in Berlinet and
  Thomas-Agnan (\citeproc{ref-berlinet2011reproducing}{2004}).}, MMD,
and semantic embeddings. The methodology integrates two complementary
strands of research on embeddings. The first discusses more abstract
notions of embeddings and establishes formal properties useful for
theoretical analysis
(\citeproc{ref-sriperumbudur2010hilbert}{Sriperumbudur et al. 2010}).
The second develops effective semantic embeddings for non-numerical
objects, such as text and images
(\citeproc{ref-mikolov2013efficient}{Mikolov et al. 2013}). We combine
these approaches to create a unified framework for distributional
distinctiveness analysis.

Our approach comprises three steps. First, we employ a semantic
embedding to map non-numerical data (such as prior art and AI-generated
images) into a numerical vector space where distances reflect semantic
relationships (\citeproc{ref-stammbach2021docscan}{Stammbach and Ash
2021}). Second, we use the vector representations to construct KMEs,
which map the probability distributions of the works into a reproducing
kernel Hilbert space (RKHS) of functions. Third, we compute the MMD
between these KMEs---a type of integral probability metric (IPM)---to
quantify the statistical distance between the creative processes
themselves.

\subsection{Definitions and
Background}\label{definitions-and-background}

Let \(X = \{x_1, x_2, \dots, x_m\}\) be a sample of works drawn from a
process with unknown probability distribution \(P\), and
\(Y = \{y_1, y_2, \dots, y_n\}\) be a sample of works from another
process with unknown probability distribution \(Q\). Our goal is to test
the null hypothesis \(H_0: P = Q\) (the distributions of the processes
are identical) against the alternative hypothesis \(H_1: P \neq Q\) (the
distributions differ).

In the context of intellectual property adjudication, these variables
map directly to the evidentiary record. For instance, \(P\) may
represent the distribution of a relevant Reference Class (e.g., the
corpus of prior art in patent, or the market of registered marks in
trademark), while \(Q\) may represent the Candidate Process (e.g., a
specific AI model). Accordingly, \(X\) and \(Y\) constitute the observed
portfolios presented to the fact-finder. The kernel function \(k\),
defined below, acts as the formal proxy for the Ordinary Observer or
PHOSITA, providing a consistent metric for the semantic similarity
between any two individual works.

An RKHS \(\mathcal{H}\) is a Hilbert space of functions defined by a
positive definite kernel function
\(k: \mathcal{X} \times \mathcal{X} \rightarrow \mathbb{R}\), where
\(\mathcal{X}\) is the input space (e.g., the space of possible creative
outputs). The RKHS is distinguished by the reproducing property: for
every function \(f\) in the RKHS and every point \(x \in \mathcal{X}\),
the value of \(f\) at \(x\), \(f(x)\), is reproduced by the inner
product of \(f\) with the kernel evaluation function, which is the
kernel function centered at \(x\), \(k(\cdot, x)\): \[
f(x) = \langle f, k(\cdot, x) \rangle_{\mathcal{H}}.
\] \(k(x, \cdot)\) denotes the kernel evaluation function: a function in
the RKHS defined by fixing one argument of the kernel,
\(y \mapsto k(x, y)\). The kernel function provides a way to ``probe''
the function \(f\) at any point \(x\) through the inner product. It
allows us to represent high-dimensional or even infinite-dimensional
feature spaces implicitly, which is a cornerstone of kernel methods in
machine learning (\citeproc{ref-shawe2004kernel}{Shawe-Taylor and
Cristianini 2004}; \citeproc{ref-steinwart2008support}{Steinwart and
Christmann 2008}).

KME leverages the machinery of RKHS to embed probability distributions
into a Hilbert space. Given a probability distribution \(P\) and a
reproducing kernel \(k\) that induces the RKHS \(\mathcal{H}\), the KME
of \(P\) into \(\mathcal{H}\), denoted \(\mu_P\), is the expected value
of the kernel evaluation function over \(P\): \[
\mu_P = \mathbb{E}_{x \sim P}[k(x, \cdot)] = \int_{\mathcal{X}} k(x, \cdot) \, dP(x),
\] where the integral is a Bochner integral.

A KME maps \(P\) to \(\mu_P\), a function in the RKHS \(\mathcal{H}\).
If the kernel \(k\) is \emph{characteristic}, then this mapping is
\emph{injective} (one-to-one). That is, if two probability distributions
differ, their KMEs also differ (\(P \neq Q \implies \mu_P \neq \mu_Q\)).
Equivalently, if the mean embeddings are identical, the underlying
distributions must be identical (\(\mu_P = \mu_Q \implies P = Q\)). This
builds on the notion that a kernel function measures the similarity
between two points in the input space; if the kernel is sufficiently
`rich' (formally, characteristic), the distance between two KMEs
strictly corresponds to the distance between the distributions
themselves.

An IPM between distributions \(P\) and \(Q\) is defined as: \[
\text{IPM}(P, Q) = \sup_{f \in \mathcal{F}} \left| \int_{\mathcal{X}} f(x) \, dP(x) - \int_{\mathcal{X}} f(x) \, dQ(x) \right|,
\] where \(\mathcal{F}\) is a class of functions.

MMD is a type of IPM where the class of functions \(\mathcal{F}\) is the
unit ball in the RKHS. The MMD quantifies the distance between \(P\) and
\(Q\) as the distance between their respective KMEs in the RKHS
(\citeproc{ref-gretton2012kernel}{Gretton et al. 2012}): \[
\text{MMD}^2(P, Q) = \| \mu_P - \mu_Q \|_{\mathcal{H}}^2,
\] where \(\|\cdot\|_{\mathcal{H}}\) denotes the norm in the RKHS.

\subsection{Employing MMD to Measure Distributional
Distinctiveness}\label{employing-mmd-to-measure-distributional-distinctiveness}

Suppose two creative processes produce only numerical data. Given
samples \(X\) and \(Y\) from their distributions \(P\) and \(Q\)
respectively, we can compute an \emph{unbiased} empirical estimator of
\(\text{MMD}^2\): \begin{align}
\widehat{\text{MMD}}^2_u(X, Y) = \frac{1}{m(m-1)} \sum_{i=1}^{m} \sum_{\substack{j=1 \\ j \neq i}}^{m} k(x_i, x_j) + \frac{1}{n(n-1)} \sum_{i=1}^{n} \sum_{\substack{j=1 \\ j \neq i}}^{n} k(y_i, y_j) - \frac{2}{mn} \sum_{i=1}^{m} \sum_{j=1}^{n} k(x_i, y_j). \label{eqn:numerical_MMD}
\end{align}

This estimator can be computed efficiently using the \emph{kernel
trick}, avoiding explicit computation of the feature maps
\(k(\cdot, x)\).\footnote{While the population \(\text{MMD}^2\) is
  strictly non-negative, this unbiased estimator can yield negative
  values in finite samples if the cross-sample similarity exceeds the
  within-sample similarity due to sampling noise.} The components of
this equation are interpreted as follows:

\begin{itemize}
\tightlist
\item
  \(k(\cdot, \cdot)\) is the kernel function.
\item
  \(x_i\) and \(x_j\) are samples from distribution \(P\).
\item
  \(y_i\) and \(y_j\) are samples from distribution \(Q\).
\item
  \(m\) and \(n\) are the sizes of the samples from \(P\) and \(Q\),
  respectively.
\item
  The first term,
  \(\frac{1}{m(m-1)} \sum_{i=1}^{m} \sum_{\substack{j=1 \\ j \neq i}}^{m} k(x_i, x_j)\),
  is the average of the kernel evaluations over all distinct ordered
  pairs of samples from \(P\).
\item
  The second term,
  \(\frac{1}{n(n-1)} \sum_{i=1}^{n} \sum_{\substack{j=1 \\ j \neq i}}^{n} k(y_i, y_j)\),
  is the average of the kernel evaluations over all distinct ordered
  pairs of samples from \(Q\).
\item
  The third term,
  \(-\frac{2}{mn} \sum_{i=1}^{m} \sum_{j=1}^{n} k(x_i, y_j)\), subtracts
  twice the average of the kernel evaluations between samples from \(P\)
  and samples from \(Q\).
\end{itemize}

\(\text{MMD}^2\) can be used to establish the distributional
distinctiveness of \(P\) and \(Q\). Most creative processes, however, do
not generate numerical data. Therefore, to apply this framework to
non-numerical data (e.g., text, images), we propose mapping all works
into a numerical vector space using a machine learning embedding.

Let \(\phi_x: \mathcal{X} \rightarrow \mathcal{Z}\) represent such an
embedding, where \(\mathcal{Z}\) is often \(\mathbb{R}^d\), \(d\) being
the dimensionality of the embedding space. The choice of embedding
depends on the specific data type (e.g., a text embedding for text data,
a convolutional neural network (CNN) embedding for images) and the
relevant semantic relationships between the data points (e.g.,
similarity in meaning for text, visual similarity for art).

We propose the compositional structure: \[
\phi(x) = \phi_k(\phi_x(x)), \quad x \in \mathcal{X},
\] where:

\begin{itemize}
\tightlist
\item
  \(\phi_x(x)\) is the machine learning embedding of the `raw' data
  point \(x\).
\item
  \(\phi_k\) is the feature map defined implicitly by the kernel \(k\)
  in the RKHS.
\end{itemize}

Proposition \ref{prop:composed_kernel} establishes the theoretical
validity of this approach. It shows that the kernel defined by the
compositional structure \(k(\phi_x(x_i), \phi_x(x_j))\) is
characteristic if the machine learning embedding \(\phi_x\) is injective
and if the kernel \(k\) in the RKHS is characteristic. That is, if the
machine learning embedding \(\phi_x\) preserves the distinctness of
inputs (injectivity) and the kernel \(k\) is capable of distinguishing
distributions in the vector space (characteristic), then the combined
framework can reliably detect distributional differences in the original
creative works.

Therefore, we can apply our MMD framework to any data type for which a
suitable embedding \(\phi_x\) and a suitable kernel \(k\) can be found.
This estimator is computed by evaluating the kernel function on the
embedded data, replacing \(k(x_i, x_j)\) with
\(k(\phi_x(x_i), \phi_x(x_j))\) in Equation \ref{eqn:numerical_MMD}.

\begin{proposition} \label{prop:composed_kernel}
Let \(\phi_x: \mathcal{X} \to \mathcal{Z}\) be an injective mapping and \(k: \mathcal{Z} \times \mathcal{Z} \to \mathbb{R}\) be a characteristic kernel on \(\mathcal{Z}\). Then the composed kernel \(k_{\phi}: \mathcal{X} \times \mathcal{X} \to \mathbb{R}\), defined as \(k_{\phi}(x, x') = k(\phi_x(x), \phi_x(x'))\), is characteristic on \(\mathcal{X}\).
\end{proposition}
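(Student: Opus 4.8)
The plan is to reduce the claim to the assumed characteristic property of $k$ on $\mathcal{Z}$ by transporting probability measures along $\phi_x$. First I would check that $k_\phi$ is a genuine positive-definite kernel: for any finite collection $x_1,\dots,x_n \in \mathcal{X}$ and scalars $c_1,\dots,c_n$, the Gram matrix $(k_\phi(x_i,x_j))_{ij}$ coincides with the Gram matrix of $k$ evaluated at the points $\phi_x(x_1),\dots,\phi_x(x_n) \in \mathcal{Z}$, so $\sum_{i,j} c_i c_j k_\phi(x_i,x_j) \ge 0$ is immediate from positive-definiteness of $k$. Hence $k_\phi$ induces an RKHS $\mathcal{H}_{k_\phi}$ and a well-defined kernel mean embedding $\mu^{\phi}_P = \mathbb{E}_{x\sim P}[k_\phi(x,\cdot)]$, and ``$k_\phi$ is characteristic'' means precisely that $P \mapsto \mu^{\phi}_P$ is injective on Borel probability measures over $\mathcal{X}$.

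The key identity is that the composed kernel's discrepancy on $\mathcal{X}$ equals $k$'s discrepancy on the pushforward measures. Write $\phi_{x\#}P$ for the pushforward of $P$ under $\phi_x$, i.e.\ $(\phi_{x\#}P)(A) = P(\phi_x^{-1}(A))$ for Borel $A\subseteq\mathcal{Z}$ (here I implicitly assume, as is standard, that $\phi_x$ is measurable). By change of variables, $\mathbb{E}_{x,x'\sim P}[k(\phi_x(x),\phi_x(x'))] = \mathbb{E}_{z,z'\sim \phi_{x\#}P}[k(z,z')]$, and analogously for the within-$Q$ and cross terms; expanding $\text{MMD}^2$ into these three expectation terms then gives
\[
\text{MMD}^2_{k_\phi}(P,Q) \;=\; \text{MMD}^2_{k}\bigl(\phi_{x\#}P,\ \phi_{x\#}Q\bigr),
\]
or equivalently $\mu^{\phi}_P = \mu^{\phi}_Q$ if and only if $\mu^{k}_{\phi_{x\#}P} = \mu^{k}_{\phi_{x\#}Q}$.

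Now suppose $\mu^{\phi}_P = \mu^{\phi}_Q$. The identity above yields $\mu^{k}_{\phi_{x\#}P} = \mu^{k}_{\phi_{x\#}Q}$, and since $k$ is characteristic on $\mathcal{Z}$ we obtain $\phi_{x\#}P = \phi_{x\#}Q$. It remains to pass back from equality of pushforwards to $P = Q$, and this is where I expect the real work to lie: it is a measure-theoretic point, not a kernel-theoretic one. Set-injectivity of $\phi_x$ alone is not quite enough; what one needs is that $\phi_x$, viewed as a map onto its image, admits a measurable left inverse, so that $P$ is recoverable as the pushforward of $\phi_{x\#}P$ under that inverse. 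Under the mild standing assumption that $\mathcal{X}$ and $\mathcal{Z}$ are standard Borel spaces --- which covers every text or image embedding considered here --- the Lusin--Souslin theorem guarantees that an injective measurable $\phi_x$ has Borel image and a Borel inverse on that image, whence $P(A) = (\phi_{x\#}P)\bigl(\phi_x(A)\bigr)$ for all Borel $A \subseteq \mathcal{X}$ and therefore $P = Q$. Chaining $\mu^{\phi}_P = \mu^{\phi}_Q \Rightarrow \phi_{x\#}P = \phi_{x\#}Q \Rightarrow P = Q$ shows the embedding induced by $k_\phi$ is injective, i.e.\ $k_\phi$ is characteristic on $\mathcal{X}$. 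I would close by noting that injectivity of $\phi_x$ is indispensable: were $\phi_x$ to identify two distinct points, any two distributions differing only in how they split mass between those points would share a pushforward and hence a $k_\phi$-embedding, defeating characteristicness.
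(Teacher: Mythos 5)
Your proof follows the same route as the paper's: push $P$ and $Q$ forward along $\phi_x$ to measures $P\circ\phi_x^{-1}$ and $Q\circ\phi_x^{-1}$ on $\mathcal{Z}$, observe via change of variables that $\mathrm{MMD}^2_{k_\phi}(P,Q)=\mathrm{MMD}^2_{k}(P\circ\phi_x^{-1},\,Q\circ\phi_x^{-1})$, invoke the characteristic property of $k$ on $\mathcal{Z}$ to equate the pushforwards, and recover $P=Q$ from injectivity of $\phi_x$. You are in fact more careful than the paper at the last step: the paper evaluates the pushforward measure on the image set $\phi_x(A)$ without verifying that this set is measurable, whereas your appeal to the Lusin--Souslin theorem under a standard-Borel hypothesis supplies exactly the measurable left inverse needed to make that step rigorous.
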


\begin{proof}
A kernel is characteristic if the map from probability measures to kernel mean embeddings is injective. That is, for any two probability measures \(P\) and \(Q\) on \(\mathcal{X}\), \(\|\mu_P - \mu_Q\|_{\mathcal{H}_\phi} = 0\) must imply \(P = Q\).

Let \(P_x\) and \(Q_x\) be probability measures on \(\mathcal{X}\). We define their pushforward measures on \(\mathcal{Z}\) as \(P_z = P \circ \phi_x^{-1}\) and \(Q_z = Q \circ \phi_x^{-1}\).

First, we establish that the mapping from measures on \(\mathcal{X}\) to measures on \(\mathcal{Z}\) is injective. By definition of the pushforward, \(P_z(B) = P_x(\phi_x^{-1}(B))\) for any measurable set \(B \subseteq \mathcal{Z}\). Applying this to \(B = \phi_x(A)\) for some \(A \subseteq \mathcal{X}\), we obtain \(P_z(\phi_x(A)) = P_x(\phi_x^{-1}(\phi_x(A)))\). Because \(\phi_x\) is injective, \(\phi_x^{-1}(\phi_x(A)) = A\), and thus \(P_z(\phi_x(A)) = P_x(A)\). If \(P_x \neq Q_x\), there exists a set \(A\) such that \(P_x(A) \neq Q_x(A)\), which implies \(P_z(\phi_x(A)) \neq Q_z(\phi_x(A))\), and thus \(P_z \neq Q_z\). Equivalently, \(P_z = Q_z\) implies \(P_x = Q_x\).

Next, we examine the distance between the kernel mean embeddings. The squared MMD distance in the RKHS \(\mathcal{H}_{\phi}\) associated with the composed kernel \(k_\phi\) is:
\[
\| \mu_{P_x} - \mu_{Q_x} \|_{\mathcal{H}_{\phi}}^2 = \mathbb{E}_{x,x'}[k_\phi(x,x')] - 2\mathbb{E}_{x,y}[k_\phi(x,y)] + \mathbb{E}_{y,y'}[k_\phi(y,y')],
\]
where \(x, x' \sim P_x\) and \(y, y' \sim Q_x\).

By substituting the definition \(k_\phi(a, b) = k(\phi_x(a), \phi_x(b))\) and applying the change of variables (where \(z = \phi_x(x) \sim P_z\) and \(w = \phi_x(y) \sim Q_z\)), this equates to:
\[
\mathbb{E}_{z,z'}[k(z,z')] - 2\mathbb{E}_{z,w}[k(z,w)] + \mathbb{E}_{w,w'}[k(w,w')] = \| \mu_{P_z} - \mu_{Q_z} \|_{\mathcal{H}_z}^2.
\]

This establishes an isometry between the embeddings in the two spaces:
\[
\| \mu_{P_x} - \mu_{Q_x} \|_{\mathcal{H}_{\phi}} = \| \mu_{P_z} - \mu_{Q_z} \|_{\mathcal{H}_z}.
\]

\(\| \mu_{P_x} - \mu_{Q_x} \|_{\mathcal{H}_{\phi}} = 0\) implies \(\| \mu_{P_z} - \mu_{Q_z} \|_{\mathcal{H}_z} = 0\), and therefore \(\mu_{P_z} = \mu_{Q_z}\). Since kernel \(k\) is characteristic on \(\mathcal{Z}\), the equality of mean embeddings \(\mu_{P_z} = \mu_{Q_z}\) implies the equality of distributions \(P_z = Q_z\). As established, the injectivity of \(\phi_x\) ensures that \(P_z = Q_z \implies P_x = Q_x\).

Therefore, \(\| \mu_{P_x} - \mu_{Q_x} \|_{\mathcal{H}_{\phi}} = 0\) implies \(P_x = Q_x\), proving that the mapping \(P_x \mapsto \mu_{P_x}\) is injective and the composed kernel \(k_{\phi}\) is characteristic on \(\mathcal{X}\).
\end{proof}

\subsection{Hypothesis Testing}\label{hypothesis-testing}

In forensic contexts, a raw distance metric is of limited evidentiary
value; a score of ``0.1'' or ``100'' is meaningless without a baseline
for stochastic variation. To render the MMD metric legally actionable,
we must determine whether an observed divergence exceeds what would be
expected by chance between two samples drawn from the same process.

We therefore frame the inquiry not as a measurement of magnitude, but as
a hypothesis test. This approach allows the fact-finder to explicitly
set the significance level (\(\alpha\))---the statistical analogue to
the legal standard of proof---thereby controlling the risk of a False
Positive (Type I error) where a process is incorrectly deemed
distinctive.

To test the null hypothesis \(H_0: P = Q\), we use the empirical
estimator \(\widehat{\text{MMD}}^2_u\) as our test statistic. Because
MMD is a measure of distance, the alternative hypothesis
\(H_1: P \neq Q\) implies a strictly positive deviation. Therefore, we
employ a one-sided (right-tailed) permutation test: we reject the null
hypothesis only if the observed statistic is significantly larger than
what would be expected by chance. Negative values of the unbiased
estimator, which arise solely from finite sampling noise, are consistent
with the null hypothesis and do not constitute evidence of
distinctiveness.

\begin{algorithm}[htbp]
\caption{Permutation-Based Hypothesis Test for MMD}
\label{alg:mmd_permutation_based}
\begin{algorithmic}[1]
\REQUIRE Samples \(X = \{x_1, \dots, x_m\}\) from distribution \(P\), samples \(Y = \{y_1, \dots, y_n\}\) from distribution \(Q\), kernel function \(k(\cdot,\cdot)\), number of permutation iterations \(R\), significance level \(\alpha\).
\STATE Compute:
    \[
    \Delta_{\text{obs}} \gets \widehat{\text{MMD}}^2_u(X, Y),
    \]
    \[
    \widehat{\text{MMD}}^2_u(X, Y) = \frac{1}{m(m-1)} \sum_{i=1}^{m}\sum_{\substack{j=1 \\ j\neq i}}^{m} k(x_i,x_j) + \frac{1}{n(n-1)} \sum_{i=1}^{n}\sum_{\substack{j=1 \\ j\neq i}}^{n} k(y_i,y_j) - \frac{2}{mn}\sum_{i=1}^{m}\sum_{j=1}^{n} k(x_i,y_j).
    \]
\STATE Pool samples into a single dataset of size \(m+n\):
    \[
    Z \gets X \cup Y.
    \]
\FOR{\(r = 1\) \TO \(R\)}
    \STATE Randomly permute the pooled sample \(Z\). Let the permuted sample be \(Z^*\).
    \STATE Partition \(Z^*\) into two sets: \(X^*_r\) containing the first \(m\) elements, and \(Y^*_r\) containing the remaining \(n\) elements.
    \STATE Compute the statistic on the permuted partition:
        \[
        \Delta^*_r \gets \widehat{\text{MMD}}^2_u(X^*_r, Y^*_r).
        \]
\ENDFOR
\STATE Calculate the *p*-value:
    \[
    p \gets \frac{1 + \sum_{r=1}^{R} \mathbf{1}\{\Delta^*_r \ge \Delta_{\text{obs}}\}}{R + 1},
    \]
    where
    \[
    \mathbf{1}\{A\} = \begin{cases}
    1 & \text{if } A \text{ is true}\\[6pt]
    0 & \text{otherwise}
    \end{cases}.
    \]
\STATE Determine the critical value \(c_\alpha\) from the permutation distribution:
    \[
    c_\alpha \gets Q_{1-\alpha}\left(\{\Delta^*_r\}_{r=1}^R\right),
    \]
    where \(Q_{\gamma}(\cdot)\) denotes the \(\gamma\)-quantile of the permutation-based statistics.
\IF{\(\Delta_{\text{obs}} > c_\alpha\) (or equivalently, if \(p < \alpha\))}
    \STATE Reject \(H_0: P=Q\).
\ELSE
    \STATE Do not reject \(H_0\).
\ENDIF
\end{algorithmic}
\end{algorithm}

Algorithm \ref{alg:mmd_permutation_based} constructs the null
distribution of the MMD statistic by repeatedly resampling the pooled
data. This process destroys any systematic distributional differences
between the two groups while preserving the marginal distribution of the
combined data. The ``\(+1\)'' terms in the numerator and denominator of
the \emph{p}-value formula (step 8) ensure exact Type I error control by
treating the observed statistic as an additional permutation under the
null hypothesis (\citeproc{ref-phipson2010permutation}{Phipson and Smyth
2010}). We calculate the critical value \(c_\alpha\) as the
\((1-\alpha)\)-quantile of these permutation statistics. If the observed
statistic \(\widehat{\text{MMD}}^2_u(X, Y)\) exceeds this threshold---or
equivalently, if the \emph{p}-value is less than the significance level
\(\alpha\)---we reject the null hypothesis and conclude that the
creative processes exhibit statistically significant distributional
distinctiveness.

\subsection{Stability of MMD Under Approximate
Embeddings}\label{stability-of-mmd-under-approximate-embeddings}

Proposition \ref{prop:composed_kernel} establishes that if the machine
learning embedding \(\phi_x\) is injective, the resulting MMD is a valid
metric for distributional distinctiveness. Dimensionality reduction
necessarily involves information loss, and one might worry that the
pigeonhole principle implies distinct inputs must collide when mapped to
a lower-dimensional space.

However, in the context of real-world data, the injectivity condition is
not a binding constraint. The pigeonhole principle requires more items
(inputs) than slots (outputs). In the context of semantic
distinctiveness, this condition is rarely met. Digital images occupy a
discrete subset of the input space: pixels take integer values
(typically 0--255), and the set of images constituting meaningful
creative works is a sparse manifold within that space. Meanwhile, the
embedding output space, though discrete at the floating-point level, has
vast capacity. For a \(d\)-dimensional embedding using standard
representations, the number of distinct representable vectors is
approximately \(C^d\), where \(C\) is the capacity of the floating-point
format. For a 1024-dimensional embedding, this yields a state space
exceeding \(10^{5000}\) distinct vectors---a number exceeding any
plausible enumeration of semantically distinct visual configurations.
Moreover, contrastive embeddings are trained precisely to preserve
distinctions that matter: two works map to similar vectors if and only
if they are semantically similar. Consequently, collisions would require
two perceptually distinct works to map to bit-identical vectors, an
event with negligible probability.

Nevertheless, a practical concern arises from \emph{approximation
error}. In forensic contexts, analysts may rely on off-the-shelf
foundation models or dimensionally reduced representations to ensure
computational tractability. These networks may not exploit their
complete theoretical capacity and may map distinct inputs to proximal
representations. It therefore becomes necessary to characterize how such
item-level approximation errors propagate to the process-level MMD
metric.

We treat approximate embeddings as inducing a bounded perturbation of
the kernel values. Proposition \ref{prop:mmd_stability} demonstrates
that the squared MMD is Lipschitz-continuous with respect to this
perturbation: if the embedding error is bounded, the resulting error in
the distinctiveness metric is also strictly bounded.

\begin{proposition}[Stability under Kernel Perturbations]
\label{prop:mmd_stability}
Let \(k_\phi\) be the "ideal" composed kernel derived from a high-fidelity embedding, and \(k_d\) be an approximate kernel derived from a lower-dimensional or approximate embedding. Suppose the approximation error in the kernel evaluation is uniformly bounded by \(\varepsilon\), such that:
\[
\sup_{x,x' \in \mathcal{X}} \bigl|k_d(x,x') - k_\phi(x,x')\bigr| \le \varepsilon.
\]
Then for any probability measures \(P\) and \(Q\), the absolute error in the squared MMD is bounded by \(4\varepsilon\):
\[
\bigl|\mathrm{MMD}^2_{d}(P,Q) - \mathrm{MMD}^2_{\phi}(P,Q)\bigr| \le 4\varepsilon.
\]
\end{proposition}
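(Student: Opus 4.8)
The plan is to expand both squared MMD terms using the standard expectation identity for the squared RKHS distance between kernel mean embeddings (the same identity already used in the proof of Proposition~\ref{prop:composed_kernel}), and then bound the difference term by term. Concretely, for any bounded kernel $k$ and distributions $P,Q$ one has
\[
\mathrm{MMD}^2_k(P,Q) = \mathbb{E}_{x,x'\sim P}[k(x,x')] - 2\,\mathbb{E}_{x\sim P,\,y\sim Q}[k(x,y)] + \mathbb{E}_{y,y'\sim Q}[k(y,y')],
\]
where $x,x'$ are independent copies from $P$ and $y,y'$ independent copies from $Q$. Writing this identity for both $k_d$ and $k_\phi$ and subtracting, the difference of the two squared MMDs collapses to a sum of three expectations of the pointwise kernel discrepancy $\delta(u,v) := k_d(u,v) - k_\phi(u,v)$, namely $\mathbb{E}_{P\times P}[\delta] - 2\,\mathbb{E}_{P\times Q}[\delta] + \mathbb{E}_{Q\times Q}[\delta]$.

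Next I would invoke the uniform bound $\sup_{u,v}|\delta(u,v)| \le \varepsilon$ assumed in the statement. Each of the three expectations is an average of a quantity bounded in absolute value by $\varepsilon$, so each obeys $|\mathbb{E}[\delta]| \le \varepsilon$. Applying the triangle inequality to the three-term combination, with coefficients $1,2,1$, gives
\[
\bigl|\mathrm{MMD}^2_d(P,Q) - \mathrm{MMD}^2_\phi(P,Q)\bigr| \le \varepsilon + 2\varepsilon + \varepsilon = 4\varepsilon,
\]
which is precisely the claimed bound. Along the way one obtains, before applying the uniform bound, the sharper data-dependent estimate $|\mathrm{MMD}^2_d - \mathrm{MMD}^2_\phi| \le \mathbb{E}_{P\times P}|\delta| + 2\,\mathbb{E}_{P\times Q}|\delta| + \mathbb{E}_{Q\times Q}|\delta|$, which I might note in passing but which is not needed for the stated proposition.

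There is essentially no hard step here; the only points deserving a word of care are (i) that the expectation identity for $\mathrm{MMD}^2$ is legitimate, which holds because $k_\phi$ is a bounded kernel (for the normalized characteristic kernels we use, e.g.\ the Gaussian, it is bounded by $1$) and hence, by the $\varepsilon$-bound, $k_d$ is bounded as well, so the relevant Bochner integrals exist and Fubini applies; and (ii) that $k_d$ is itself positive definite, so that $\mathrm{MMD}^2_d$ is genuinely a squared RKHS norm---if one prefers not to assume this, the displayed expectation identity can simply be taken as the definition of the quantity $\mathrm{MMD}^2_d$ and the argument is unchanged. Since no tightness or optimality of the constant $4$ is asserted, the proof is complete once the term-by-term triangle-inequality estimate is assembled.
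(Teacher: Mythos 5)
Your proof is correct and follows exactly the same route as the paper's: expand both squared MMDs via the expectation identity, subtract to get the three $\delta$-terms with coefficients $1,2,1$, and apply the triangle inequality with the uniform bound to obtain $4\varepsilon$. Your additional remarks on boundedness and positive definiteness of $k_d$ are sensible housekeeping the paper omits, but they do not change the argument.
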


\begin{proof}
Recall the expansion of the squared MMD:
\[
\mathrm{MMD}^2(P,Q) = \mathbb{E}_{x,x'}[k(x,x')] - 2\mathbb{E}_{x,y}[k(x,y)] + \mathbb{E}_{y,y'}[k(y,y')].
\]
Let \(\delta(x,x') = k_d(x,x') - k_\phi(x,x')\). The difference between the two MMD estimators is:
\[
\Delta_{\text{err}} = \mathbb{E}_{x,x'}[\delta(x,x')] - 2\mathbb{E}_{x,y}[\delta(x,y)] + \mathbb{E}_{y,y'}[\delta(y,y')].
\]
Applying the triangle inequality and the uniform bound \(|\delta(\cdot, \cdot)| \le \varepsilon\):
\[
|\Delta_{\text{err}}| \le \bigl|\mathbb{E}[\delta(x,x')]\bigr| + 2\bigl|\mathbb{E}[\delta(x,y)]\bigr| + \bigl|\mathbb{E}[\delta(y,y')]\bigr| \le \varepsilon + 2\varepsilon + \varepsilon = 4\varepsilon.
\]
The factor of four arises because the MMD expansion involves four expectation terms—two within-group terms and one cross-group term (counted twice)—each of which can accumulate error up to \(\varepsilon\).
\end{proof}

This result has specific implications for the Gaussian RBF kernel
employed in our empirical analysis, defined as
\(k(z,z') = \exp(-\|z-z'\|^2 / 2\sigma^2)\). The function
\(f(u) = e^{-u}\) is 1-Lipschitz on \([0, \infty)\). Therefore, if an
approximate embedding distorts the squared Euclidean distance between
any two works by at most \(\eta\), the kernel value is distorted by at
most \(\eta / 2\sigma^2\).

Combining this with Proposition \ref{prop:mmd_stability}, we obtain a
specific bound for Gaussian kernels: \[
\bigl|\mathrm{MMD}^2_{d}(P,Q) - \mathrm{MMD}^2_{\phi}(P,Q)\bigr| \le \frac{2\eta}{\sigma^2}.
\]

This provides a ``safety guarantee'' for the metric. For kernels bounded
in \([0,1]\) (such as the Gaussian RBF), the squared MMD always lies in
the range \([0,2]\). If the embedding approximation changes pairwise
similarities by at most 1\% (\(\varepsilon=0.01\)), the final
distinctiveness score changes by at most 0.04. This ensures the method
remains robust even when operating under computational constraints.

Moreover, it is important to note that the permutation-based hypothesis
test described in Algorithm \ref{alg:mmd_permutation_based} retains
exact Type I error control regardless of the embedding employed---a
critical feature for admissibility. Under the null hypothesis
\(H_0: P = Q\), the permutation distribution is constructed from the
same kernel \(k_d\) used for the test statistic, ensuring that the
critical value \(c_\alpha\) correctly bounds the false positive rate at
level \(\alpha\). The primary consequence of embedding approximation is
a potential loss of statistical power: if \(k_d\) introduces systematic
distortions, the test may require larger samples to detect true
distributional differences. The bounds above quantify this trade-off,
allowing practitioners to select embeddings that balance computational
efficiency against measurement precision.

Beyond computational tractability, there are substantive reasons to
consider dimensionality reduction when applying MMD to machine learning
embeddings. Conventional embeddings from foundation models---whether
text transformers producing 384- or 768-dimensional vectors or vision
models producing 1024-dimensional representations---present two
practical challenges. First, these embeddings can be sensitive to input
noise: minor perturbations in the original data (compression artifacts,
watermarks, or sensor noise) may induce disproportionate shifts in the
high-dimensional representation. Second, in very high-dimensional
spaces, data points become increasingly sparse---the well-known ``curse
of dimensionality''---which can degrade the reliability of
distance-based statistics. Dimensionality reduction algorithms such as
UMAP (\citeproc{ref-mcinnes2018umap}{McInnes et al. 2018}) address both
concerns: by projecting embeddings onto a lower-dimensional manifold,
they can attenuate the influence of noise while increasing the effective
density of observations in the reduced space. In our empirical analyses,
we apply UMAP reduction (to 64 dimensions) to the raw embeddings before
computing MMD, finding that this improves robustness to perturbations
while preserving---and in some cases enhancing---the test's ability to
detect genuine distributional differences.

\section{Validation: MNIST Handwritten
Digits}\label{validation-mnist-handwritten-digits}

Before applying our methodology to real-world data, we validate its
statistical properties and practical utility in a controlled setting
with known ground truth. We use the MNIST dataset of handwritten digits
(\citeproc{ref-lecun1998gradient}{LeCun et al. 1998}), a widely
recognized benchmark in machine learning. MNIST comprises 70,000
grayscale images (28×28 pixels) of handwritten digits from 0 to 9, split
into a training set of 60,000 images and a test set of 10,000 images
(containing approximately 1000 examples per digit class). Each image
represents a single digit, providing a convenient ground truth: it is
reasonable to treat the class-conditional distributions of the images of
different digits as distinct.

To represent the images in a vector space suitable for MMD calculation,
we employ a convolutional neural network (CNN) embedding designed for
handwritten digit recognition. We use a LeNet-5-style architecture
(\citeproc{ref-lecun1998gradient}{LeCun et al. 1998}), adapted to use
ReLU activations, that consists of two convolutional layers with average
pooling, followed by three fully connected (dense) layers; we include
dropout regularization (rate = 0.1) after the 120-unit layer and after
the 84-unit embedding layer. The architecture is summarized in Table
\ref{tab:lenet5}.

\begin{table}[htbp]
\centering
\begin{tabular}{lll}
\toprule
Layer Type                          & Output Shape & Parameters \\
\midrule
Input                               & 28×28×1      & 0          \\
Conv2D (6 filters, 5×5 kernel, ReLU)& 24×24×6      & 156        \\
AvgPool2D (2×2)                     & 12×12×6      & 0          \\
Conv2D (16 filters, 5×5 kernel, ReLU)& 8×8×16      & 2,416      \\
AvgPool2D (2×2)                     & 4×4×16       & 0          \\
Flatten                             & 256          & 0          \\
Dense (120 units, ReLU)             & 120          & 30,840     \\
Dropout (rate = 0.1)                & 120          & 0          \\
Dense (84 units, ReLU; embedding)   & 84           & 10,164     \\
Dropout (rate = 0.1)                & 84           & 0          \\
Dense (10 units, Softmax)           & 10           & 850        \\
\bottomrule
\end{tabular}
\caption{LeNet-5-style Architecture Details}
\label{tab:lenet5}
\end{table}

We train the model on the MNIST training set using the Adam optimizer
(learning rate = 0.001), categorical cross-entropy loss, and a batch
size of 64. To ensure the learned feature space is robust to input
perturbations, we apply data augmentation during training, including
random affine transformations and additive Gaussian noise
(\(\sigma \sim \text{Uniform}(0, 1.0)\)). Training employs early
stopping (patience = 10 epochs) and model checkpointing (saving the best
model based on validation loss calculated on a 10\% hold-out split). Our
final trained model achieves a test accuracy of 99.1\%, indicating that
the learned embeddings effectively capture the distinguishing visual
features of each digit while remaining relatively invariant to noise.

\subsection{MMD Analysis Procedure and
Setup}\label{mmd-analysis-procedure-and-setup}

Our validation procedure comprises the following steps:

\begin{enumerate}
\def\labelenumi{\arabic{enumi}.}
\item
  \textbf{Embedding Extraction:} We process all MNIST test images,
  extracting 84-dimensional embeddings from the penultimate dense layer
  of the trained model and then applying UMAP reduction to 64
  dimensions. These embeddings represent each digit as a numerical
  vector capturing the high-level visual features identified by the
  network.
\item
  \textbf{Sample Generation:} We perform comparisons for all
  \(10 \times 10\) combinations of digit classes. For each comparison,
  as a negative control, we compare two \emph{disjoint} random
  subsamples drawn from the \emph{same} digit class (e.g., digit 3 vs.~a
  different set of digit 3s). Here, we expect the MMD statistic to be
  near zero and the null hypothesis not to be rejected, providing a
  baseline for evaluating the method's Type I error rate. As a positive
  control, we compare two \emph{disjoint} random subsamples drawn from
  \emph{different} digit classes. To ensure balanced comparisons and
  maintain computational feasibility for the permutation-based
  hypothesis testing, we cap sample sizes at 500 embeddings for each
  distribution.
\item
  \textbf{MMD Calculation and Hypothesis Testing:} For each pair and
  sample size, we compute the unbiased squared MMD statistic using a
  Gaussian radial basis function (RBF) kernel.\footnote{Throughout the
    empirical sections, we report values of the unbiased squared MMD
    estimator (\(\widehat{\text{MMD}}^2_u\); see Equation
    \ref{eqn:numerical_MMD}). For brevity, we denote this quantity as
    \(\text{MMD}^2\) when reporting numerical results.} We opt for the
  Gaussian RBF kernel as it is \emph{characteristic}, ensuring that the
  test statistic is a proper metric where a distance of zero implies
  identical distributions (\citeproc{ref-gretton2012kernel}{Gretton et
  al. 2012}). For the bandwidth parameter (\(\sigma\)), we implement the
  median heuristic, setting \(\sigma\) to the median of all pairwise
  Euclidean distances in the combined sample for that specific
  comparison. This data-adaptive approach scales the kernel
  appropriately to the data's dimensionality; we test the sensitivity of
  our conclusions to this heuristic in ablation studies. We perform the
  permutation-based hypothesis test described in Algorithm 1, with
  \(R=500\) permutation iterations and a significance level of
  \(\alpha=0.01\).
\item
  \textbf{Sample Size Variation and Rejection Rate Estimation:} To
  evaluate the method's sensitivity and data efficiency, we repeat steps
  2 and 3 across smaller sample sizes: 3, 4, 5, 6, 7, 8, 9, and 10. For
  each digit pair and sample size, we perform 500 independent trials,
  each involving fresh random sampling and a full permutation test.
  Averaging the outcomes (reject/fail-to-reject \(H_0\)) across these
  500 trials provides a robust estimate of the rejection rate (empirical
  statistical power) for that scenario. We focus on a representative
  subset of digit pairs that vary in visual similarity---(0 vs.~1), (1
  vs.~7), (2 vs.~8), (3 vs.~5), and (4 vs.~9)---to evaluate performance
  across both easy and challenging comparisons. This systematic
  exploration helps characterize the minimum data requirements for
  reliable distribution discrimination.
\end{enumerate}

\subsection{Core Results}\label{core-results}

Figure \ref{fig:mnist_rejection_rate} illustrates the sensitivity of our
approach, describing the estimated rejection rate of the null hypothesis
(\(H_0: P=Q\)) at a significance level of \(\alpha=0.01\) as the sample
size per distribution increases. Our results establish high data
efficiency. For both visually distinct examples (e.g., 0 vs.~1, 1 vs.~7)
and those exhibiting greater visual similarity (e.g., 3 vs.~5, 4 vs.~9),
the rejection rate rapidly surpasses the 95\% threshold at just \(n=6\)
samples per distribution. As the sample size increases from \(n=3\) to
\(n=10\), the rejection rates consistently approach 100\% for all tested
pairs, confirming the method's statistical convergence.

\begin{figure}[htbp]
\centering
\includegraphics[width=\linewidth]{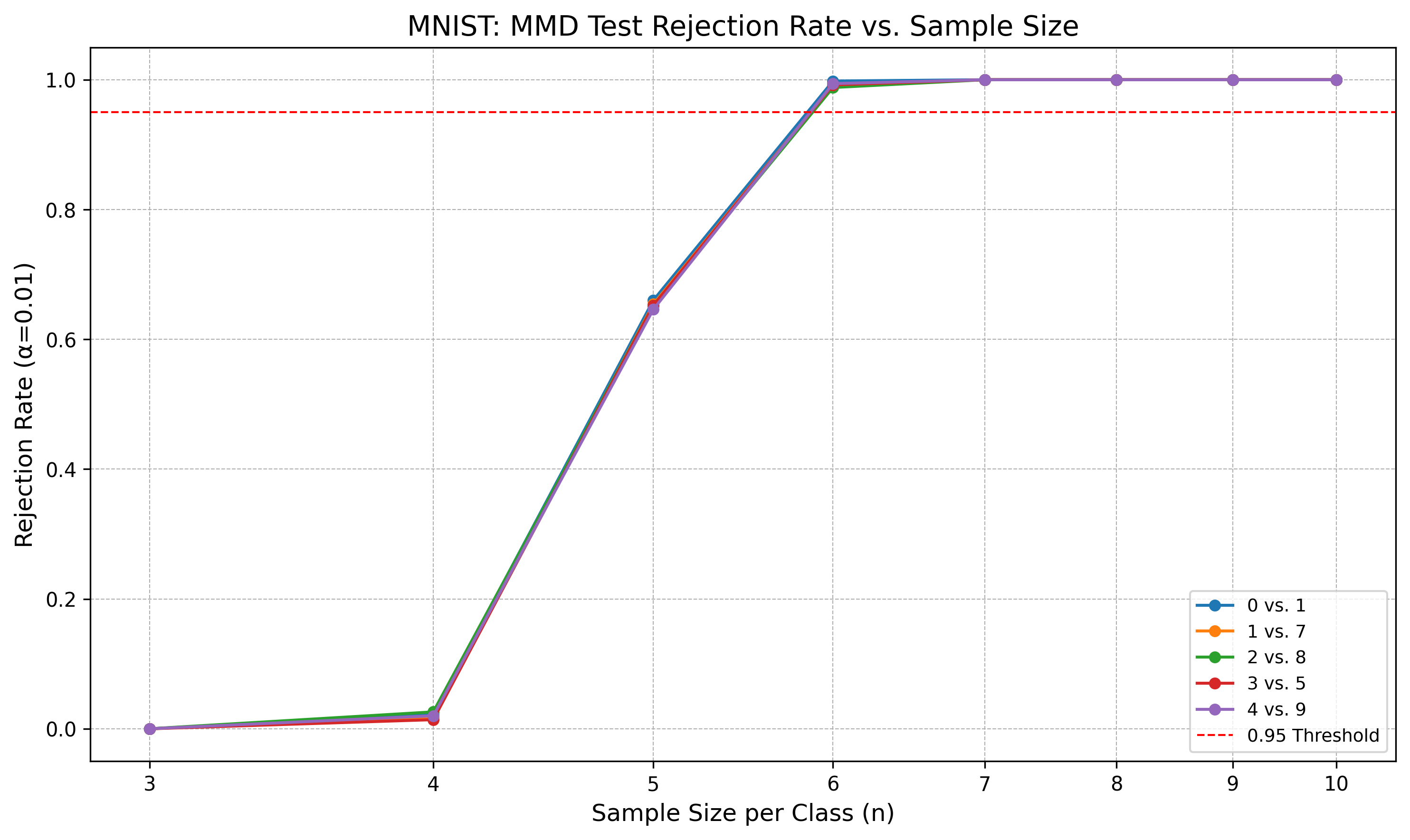}
\caption{Rejection Rate vs. Sample Size for Selected MNIST Digit Pairs}
\label{fig:mnist_rejection_rate}
\begin{minipage}{0.95\linewidth}
\footnotesize
Note: Each line represents the proportion of null hypothesis rejections (\(H_0: P=Q\)) at \(\alpha=0.01\), estimated by averaging results over 500 independent random sampling trials for each sample size and digit pair. The dashed line at 0.95 highlights rapid achievement of high statistical power with very small sample sizes (n=6 for the representative pairs shown).
\end{minipage}
\end{figure}

Figure \ref{fig:mnist_mmd_heatmap} depicts unbiased squared MMD
statistics (\(\widehat{\text{MMD}}^2_u\)) across all digit comparisons
at a sample size of \(n=500\). Diagonal comparisons (negative controls,
comparing disjoint samples of the same digit) yield statistics reliably
close to zero, consistent with the behavior of the unbiased estimator
under the null. These comparisons yield no rejections of the null
hypothesis. In contrast, we reject the null in all 90 off-diagonal
comparisons (representing the 45 distinct unordered digit pairs) for
\(p < 0.01\).

\begin{figure}[htbp]
\centering
\includegraphics[width=\linewidth]{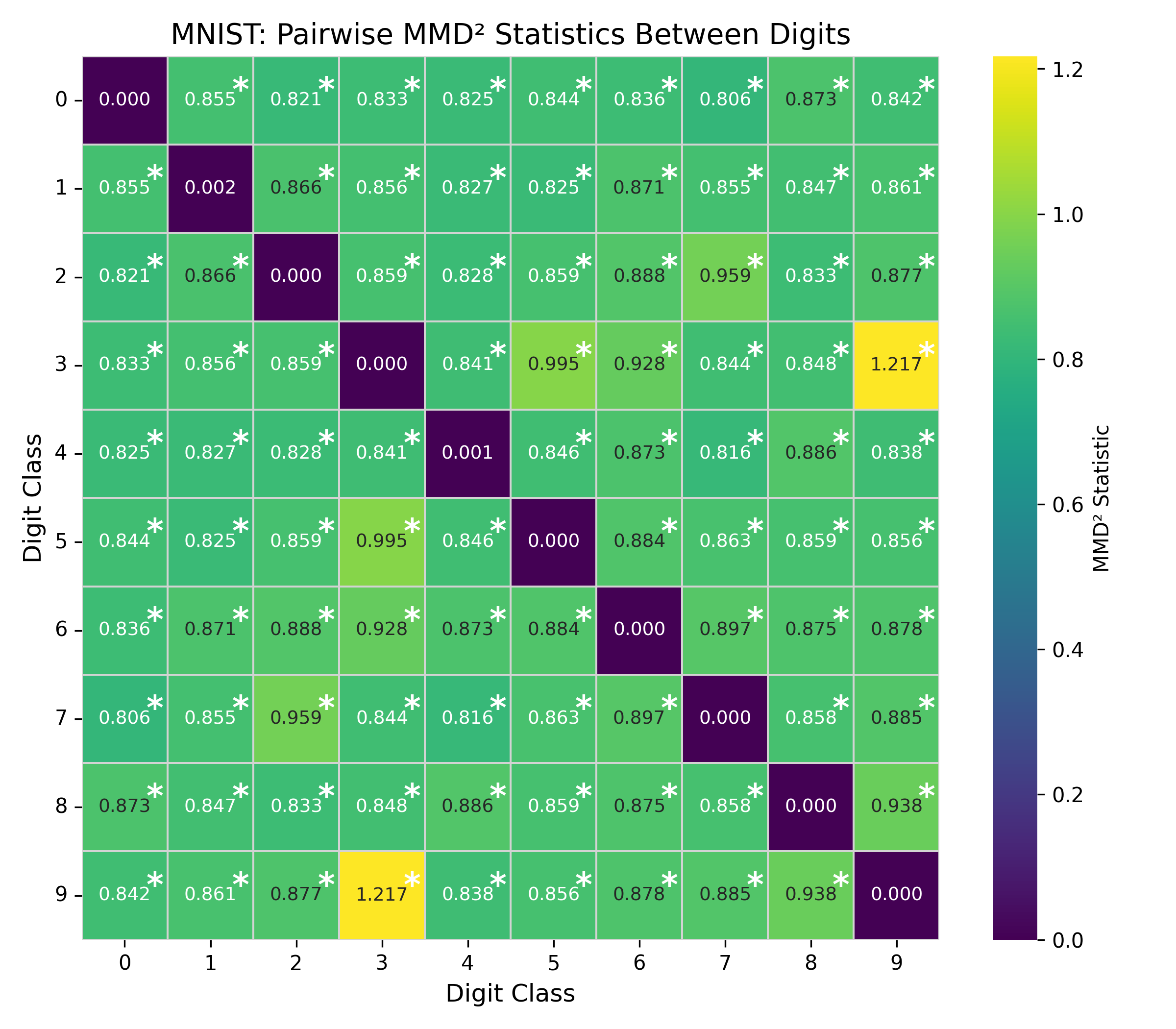}
\caption{Heatmap of Squared MMD Statistics for All MNIST Digit Pairs (Sample Size \(n=500\)).}
\label{fig:mnist_mmd_heatmap}
\begin{minipage}{0.95\linewidth}
\footnotesize
Note: Diagonal cells (negative controls) show near-zero, non-significant MMD values. All off-diagonal cells show statistically significant differences (p \(< 0.01\), marked with *), with MMD magnitudes reflecting the degree of distributional dissimilarity.
\end{minipage}
\end{figure}

This alignment between the quantitative MMD measure and visual
dissimilarity indicates that our methodology can reliably and
efficiently distinguish between different digit distributions, achieving
statistically significant differentiation (\(p < 0.01\)) with as few as
6 samples per digit class. This level of data efficiency is likely to be
particularly valuable in contexts where comprehensive datasets may be
unavailable---such as when evaluating the novelty of a small set of
AI-generated works or comparing a new trademark to a limited set of
existing marks.

\subsection{Metric Stability}\label{metric-stability}

While the rejection rates presented above confirm the method's utility
in a binary decision context (i.e., distinguishing distinct
distributions), forensic applications also require stability in the
\emph{magnitude} of the measured distance. If MMD is to serve as a
quantitative proxy for distinctiveness, its value should change
predictably---rather than erratically---under controlled perturbations
to either the embedding representation or the input data itself.

Proposition \ref{prop:mmd_stability} establishes that the squared MMD is
Lipschitz-continuous with respect to bounded kernel perturbations,
predicting that the distance estimate should degrade gracefully as
information is lost or noise is introduced. We empirically validate this
prediction through two complementary analyses: dimensionality reduction
(compressing the embedding) and input perturbation (adding noise or
watermarks to the raw images).

\subsubsection{Dimensionality Reduction}\label{dimensionality-reduction}

To evaluate how embedding compression affects the MMD metric, we conduct
a separate analysis comparing the full 84-dimensional LeNet embeddings
to UMAP reductions across a grid of target dimensions
\(d \in \{2, 5, 10, 20, 32, 50, 64, 70, 84\}\). For each of 100
independent trials, we draw a fixed pair of samples (\(n=50\) per group)
and compute the MMD in both the full-dimensional space (\(D=84\)) and
the reduced space (\(d\)) using the same observations. At each \(d\), we
recompute the kernel bandwidth \(\sigma\) using the median heuristic. We
then record the absolute deviation:
\(\Delta(d) = | \widehat{\text{MMD}}^2_{84}(X, Y) - \widehat{\text{MMD}}^2_{d}(X, Y) |\).

\begin{figure}[htbp]
\centering
\includegraphics[width=\linewidth]{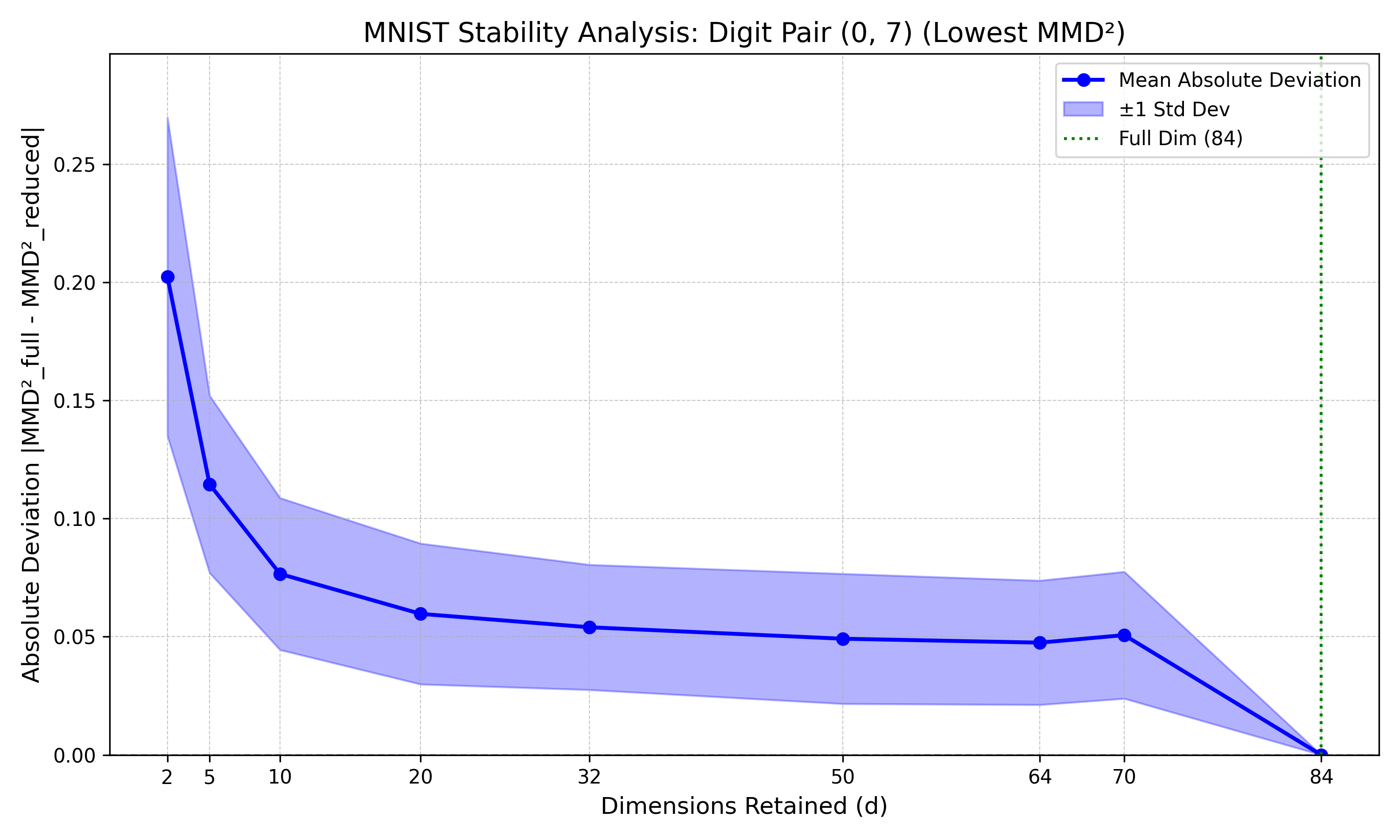}
\caption{Geometric Stability: Absolute Deviation in MMD vs. Retained Dimensions}
\label{fig:mnist_stability}
\begin{minipage}{0.95\linewidth}
\footnotesize
Note: Mean absolute deviation (with standard deviation error bars) between full-dimensional MMD and reduced-dimensional MMD, computed over 100 independent trials. Consistent with Proposition 2, approximation error decays rapidly as dimensions are added.
\end{minipage}
\end{figure}

Consistent with theoretical expectations, the approximation error decays
rapidly as dimensions are added. While the deviation is substantial at
extreme compression (\(d=2\)), it drops sharply by \(d=10\) (retaining
only \textasciitilde12\% of the original dimensions), where the mean
absolute error falls below 0.1. By \(d=20\), the error becomes
negligible. This confirms that the MMD metric provides a highly stable
measure of distributional distinctiveness even under aggressive
dimensionality reduction.

\subsubsection{Input Perturbation}\label{input-perturbation}

To test robustness to data-level corruption, we evaluate how the MMD
\emph{statistic itself} responds to systematic and unsystematic image
perturbations. We apply Gaussian noise (unsystematic) and grid
watermarks (systematic) at varying intensities to images from three
digit classes---digits 1, 3, and 8---selected to span a range of visual
complexity (simple strokes to complex curves). For each digit, we
compare perturbed images against clean images from the same class using
a paired design. We parameterize perturbation strength using
Signal-to-Noise Ratio (SNR) for Gaussian noise and an analogous
Signal-to-Watermark Ratio (SWR) for systematic watermarks, where higher
ratios indicate weaker perturbations.

Under the null hypothesis (both sets drawn from the same underlying
distribution), we expect MMD to remain near zero when perturbation is
minimal. As perturbation intensity increases (SNR or SWR decreases), we
expect the MMD to grow, reflecting the increasing distributional
divergence between clean and corrupted images.

Table \ref{tab:mnist_delta_mmd} shows that the test statistic is
essentially invariant to both noise and watermark perturbations across
the entire tested range (SNR and SWR from 25 down to 1). For all three
digits, the change in MMD (\(\Delta\)MMD) between clean and perturbed
images remains negligibly small (approximately \(-0.004\), with negative
values reflecting variance in the unbiased estimator when distributions
are identical). Even at extreme perturbation levels (SNR = 1, where
noise standard deviation equals signal standard deviation), the MMD
statistic shows no meaningful deviation from zero.

\begin{table}[htbp]
\centering
\caption{Metric Stability: Deviation in Squared MMD ($\Delta$MMD) under Input Perturbation}
\label{tab:mnist_delta_mmd}
\begin{tabular}{llcccccccc}
\toprule
& & \multicolumn{8}{c}{Perturbation Level (SNR or SWR)} \\
\cmidrule(lr){3-10}
Digit & Metric & 25 & 20 & 15 & 10 & 5 & 3 & 2 & 1 \\
\midrule
\multicolumn{10}{l}{\textit{Gaussian Noise (SNR)}} \\
1 & $\Delta$MMD & $-$0.005 & $-$0.005 & $-$0.005 & $-$0.005 & $-$0.004 & $-$0.004 & $-$0.004 & $-$0.004 \\
3 & $\Delta$MMD & $-$0.004 & $-$0.004 & $-$0.004 & $-$0.004 & $-$0.004 & $-$0.004 & $-$0.004 & $-$0.004 \\
8 & $\Delta$MMD & $-$0.004 & $-$0.004 & $-$0.004 & $-$0.004 & $-$0.004 & $-$0.004 & $-$0.004 & $-$0.004 \\
\midrule
\multicolumn{10}{l}{\textit{Watermark (SWR)}} \\
1 & $\Delta$MMD & $-$0.005 & $-$0.004 & $-$0.004 & $-$0.004 & $-$0.004 & $-$0.004 & $-$0.004 & $-$0.003 \\
3 & $\Delta$MMD & $-$0.004 & $-$0.004 & $-$0.004 & $-$0.004 & $-$0.004 & $-$0.004 & $-$0.004 & $-$0.004 \\
8 & $\Delta$MMD & $-$0.004 & $-$0.004 & $-$0.004 & $-$0.004 & $-$0.004 & $-$0.004 & $-$0.004 & $-$0.004 \\
\bottomrule
\end{tabular}
\begin{minipage}{0.95\linewidth}
\vspace{0.5em}
\footnotesize
Note: Change in unbiased squared MMD between clean and perturbed images ($n=200$). Negative values reflect the variance of the unbiased estimator when distributions are identical or near-identical. The negligible magnitude across all levels confirms geometric stability.
\end{minipage}
\end{table}

\subsection{Inferential Robustness}\label{inferential-robustness}

In a legal context, the dispositive question is not merely whether the
measured distance changes, but whether the evidentiary determination is
robust to methodological choices. Having established that the MMD metric
is geometrically stable, we now examine whether the \emph{statistical
conclusions}---the decision to reject or fail to reject the null
hypothesis---remain reliable under varying conditions. We conduct five
ablation studies testing sensitivity to dimensionality, input
perturbation, kernel choice, bandwidth, and feature representation.

\subsubsection{Dimensionality}\label{dimensionality}

We test whether the hypothesis test retains its statistical power when
the 84-dimensional LeNet embeddings are reduced via UMAP to
\(d \in \{10, 84\}\) dimensions. Using digit pair (0 vs.~7), we repeat
the permutation test after reducing the 84-dimensional embeddings via
UMAP to \(d \in \{10, 84\}\) dimensions. Across 100 trials at
\(\alpha = 0.01\), rejection rates are remarkably stable: at \(n=6\),
rejection rates reach 1.00 for both \(d=10\) and full dimensionality
(\(d=84\)); for \(n \geq 7\), rejection reaches 1.00 regardless of
compression level. These results confirm that while dimensionality
reduction introduces approximation error in the MMD statistic itself
(Section 4.3.1), it does not compromise the evidentiary conclusion about
distributional distinctiveness.

\subsubsection{Input Perturbation}\label{input-perturbation-1}

Complementing the metric stability analysis in Section 4.3.2, we examine
whether noise and watermarks induce spurious rejections of the null
hypothesis. We use a \emph{paired design}: the same images serve as both
the clean baseline and the source for perturbation, isolating the effect
of the perturbation itself by controlling for content variation. We fit
UMAP on the pooled clean and perturbed samples before partitioning to
ensure comparison in a common projection space. Under \(H_0\), clean and
perturbed versions of the same images should be indistinguishable;
rejection indicates the perturbation has created a detectable
distributional shift.

\begin{table}[htbp]
\centering
\caption{Inferential Robustness: \textit{p}-values under Input Perturbation}
\label{tab:mnist_perturbation_inference}
\begin{tabular}{llcccccccc}
\toprule
& & \multicolumn{8}{c}{Perturbation Level (SNR or SWR)} \\
\cmidrule(lr){3-10}
Digit & Metric & 25 & 20 & 15 & 10 & 5 & 3 & 2 & 1 \\
\midrule
\multicolumn{10}{l}{\textit{Gaussian Noise (SNR)}} \\
1 & \textit{p}-value & 1.00 & 1.00 & 1.00 & 1.00 & 1.00 & 1.00 & 1.00 & 0.97 \\
3 & \textit{p}-value & 1.00 & 1.00 & 1.00 & 1.00 & 1.00 & 1.00 & 1.00 & 1.00 \\
8 & \textit{p}-value & 1.00 & 1.00 & 1.00 & 1.00 & 1.00 & 1.00 & 1.00 & 1.00 \\
\midrule
\multicolumn{10}{l}{\textit{Watermark (SWR)}} \\
1 & \textit{p}-value & 1.00 & 1.00 & 1.00 & 1.00 & 1.00 & 1.00 & 0.99 & 0.85 \\
3 & \textit{p}-value & 1.00 & 1.00 & 1.00 & 1.00 & 1.00 & 1.00 & 1.00 & 1.00 \\
8 & \textit{p}-value & 1.00 & 1.00 & 1.00 & 1.00 & 1.00 & 1.00 & 1.00 & 1.00 \\
\bottomrule
\end{tabular}
\begin{minipage}{0.95\linewidth}
\vspace{0.5em}
\footnotesize
Note: \textit{p}-values for MMD test comparing clean digit images against perturbed versions ($n=200$ per digit, $\alpha=0.01$). High \textit{p}-values indicate a failure to reject the null hypothesis, confirming that the method does not produce false positives even in the presence of significant image artifacts.
\end{minipage}
\end{table}

Table \ref{tab:mnist_perturbation_inference} presents the perturbation
sensitivity analysis. Remarkably, \emph{p}-values remain well above the
significance threshold (\(\alpha = 0.01\)) for all digits across the
entire range of perturbation intensities tested (SNR and SWR from 25
down to 1). Even at SNR = 1---where noise standard deviation equals the
signal standard deviation---\emph{p}-values remain non-significant for
digits 3 and 8, with digit 1 showing only marginal reduction (p = 0.97).
This extreme robustness is achieved by computing SNR relative to each
digit's own signal variance, ensuring comparable perturbation strength
across digits, combined with the noise-augmented training
(\(\sigma = 1.0\)) that produces highly noise-invariant features without
sacrificing discriminative accuracy (validation accuracy 99.1\%).

The robustness extends to systematic perturbations as well: watermarks
at all tested SWR levels (from 25 down to 1) fail to induce spurious
rejections for digits 3 and 8, while digit 1 shows slight sensitivity
only at the most extreme level (SWR = 1, p = 0.85). This finding has
important implications for forensic applications: minor image
artifacts---whether from compression, noise, or watermarking---do not
compromise the evidentiary conclusion about distributional
distinctiveness. The method correctly fails to reject the null
hypothesis (that clean and perturbed images come from the same
distribution) across virtually all tested perturbation levels.

\subsubsection{Kernel Choice}\label{kernel-choice}

To assess whether the detected distinctiveness depends on the kernel's
ability to capture nonlinear structure, we compare the Gaussian RBF
kernel against a linear kernel, which is sensitive only to differences
in distribution means (centroid shift).

For digit pair (0 vs.~7) at \(\alpha = 0.01\) over 100 trials, both
kernels achieve comparable inferential performance. At \(n=5\),
rejection rates are 0.68 (RBF) and 0.86 (linear); at \(n=6\), they reach
0.98 (RBF) and 1.00 (linear); and by \(n \geq 7\), both achieve 1.00.
The similar performance of the linear kernel suggests that, for MNIST
digits in the learned embedding space, distributional distinctiveness is
consistent with substantial centroid shift between digit classes. We
retain the RBF kernel for subsequent analyses because of its theoretical
guarantees as a characteristic kernel and its greater sensitivity to
nonlinear structure in more complex domains (Section 6).

\subsubsection{Bandwidth Sensitivity}\label{bandwidth-sensitivity}

The Gaussian RBF kernel relies on a bandwidth parameter \(\sigma\),
which we set using the median heuristic. To ensure our conclusions are
not artifacts of this specific tuning, we repeat the analysis with
\(\sigma\) scaled by factors of \(0.5\times\), \(1.0\times\) (baseline),
and \(2.0\times\).

For digit pair (0 vs.~7) at \(\alpha = 0.01\) over 100 trials, rejection
rates are virtually identical across bandwidth settings: at \(n=5\),
rejection ranges from 0.68 to 0.75; at \(n=6\), all settings achieve
0.98--1.00; and by \(n \geq 7\), all reach 1.00. This ``plateau of
significance'' confirms that the detected distinctiveness is a genuine
property of the underlying distributions rather than a brittle artifact
of hyperparameter optimization.

\subsubsection{Representation}\label{representation}

Finally, we test whether the learned embedding is necessary for reliable
distinctiveness detection by comparing MMD tests conducted on raw pixel
vectors (\(28 \times 28 = 784\) dimensions) versus the 84-dimensional
CNN embeddings. Table \ref{tab:mnist_representation} presents rejection
rates for three digit pairs spanning a range of visual similarity.

\begin{table}[htbp]
\centering
\caption{Representation Ablation: Rejection Rates ($\alpha=0.01$, 100 trials) for Raw Pixels vs. Learned Embeddings}
\label{tab:mnist_representation}
\begin{tabular}{lcccc}
\toprule
Digit Pair & \multicolumn{2}{c}{Raw Pixels (784-dim)} & \multicolumn{2}{c}{CNN Embedding (84-dim)} \\
\cmidrule(lr){2-3} \cmidrule(lr){4-5}
 & $n=6$ & $n=10$ & $n=6$ & $n=10$ \\
\midrule
1 vs. 7 (visually distinct) & 0.93 & 1.00 & 0.99 & 1.00 \\
3 vs. 5 (visually similar) & 0.22 & 0.66 & 1.00 & 1.00 \\
4 vs. 9 (visually similar) & 0.08 & 0.37 & 0.99 & 1.00 \\
\bottomrule
\end{tabular}
\begin{minipage}{0.95\linewidth}
\vspace{0.5em}
\footnotesize
Note: Raw pixels are flattened image vectors. CNN embeddings are 84-dimensional features from the trained LeNet-5 penultimate layer. Rejection rate (RR) is the fraction of 100 Monte Carlo trials rejecting $H_0$ at $\alpha = 0.01$.
\end{minipage}
\end{table}

The results reveal a critical interaction between representation choice
and visual similarity. For structurally distinct digits (1 vs.~7), raw
pixel representations achieve high rejection rates (0.93 at \(n=6\),
reaching 1.00 by \(n=10\)), comparable to the learned embeddings.
However, for visually confusable pairs (3 vs.~5; 4 vs.~9), raw pixels
perform substantially worse, achieving only 0.08--0.22 rejection at
\(n=6\) and failing to exceed 0.66 even at \(n=10\). In contrast, the
CNN embeddings achieve \(\geq 0.99\) rejection at \(n=6\) for all pairs.
This demonstrates that learned representations are not merely cosmetic
improvements but are essential for detecting distinctiveness when the
signal resides in higher-order structural features rather than raw
spatial patterns. This finding motivates the use of semantic embeddings
(CLIP) for the AI Art analysis in Section 6, where the relevant
distinctions are stylistic and conceptual rather than pixel-level.

\section{Validation: Patent Abstracts and Textual
Distinctiveness}\label{validation-patent-abstracts-and-textual-distinctiveness}

While the MNIST study establishes the statistical validity of our
framework in the visual domain, a significant portion of intellectual
property---particularly patent claims and literary copyright---is
textual. To demonstrate the method's versatility, we extend our
validation to the domain of natural language. We apply the MMD framework
to patent abstracts, testing whether the metric can reliably distinguish
between clearly demarcated technical fields defined by the International
Patent Classification (IPC) system.

This experiment serves two purposes. First, it validates the use of
text-based semantic embeddings (SentenceTransformers) within our
distributional framework. Second, it confirms that the method can detect
``prior art'' boundaries: if the framework functions correctly, it
should identify high distributional similarity within a specific
technical field (e.g., Chemistry) and high distributional divergence
between unrelated fields (e.g., Chemistry vs.~Electricity).

\subsection{Dataset and Experimental
Design}\label{dataset-and-experimental-design}

We utilize the \emph{CCDV Patent Classification} dataset, a benchmark
corpus derived from USPTO and EPO patent documents. The dataset
categorizes patents according to top-level IPC sections. To ensure a
rigorous test of distinctiveness, we select three topologically distinct
technical fields:

\begin{enumerate}
\def\labelenumi{\arabic{enumi}.}
\tightlist
\item
  \textbf{Section A (Human Necessities):} Covering agriculture, food,
  and personal goods.
\item
  \textbf{Section C (Chemistry):} Covering metallurgy and chemical
  engineering.
\item
  \textbf{Section H (Electricity):} Covering electronic circuitry and
  power generation.
\end{enumerate}

We draw 1000 abstracts from each section, using two disjoint halves of
500 for split-half negative controls; cross-section comparisons use 500
per section. To prevent data leakage, we apply strict preprocessing to
remove boilerplate header lines containing IPC section identifiers from
the text. This ensures that the embedding model relies solely on the
semantic content of the technical description rather than superficial
formatting artifacts---a concern analogous to the watermark sensitivity
analysis in the MNIST study (Section 4.4.2).

As with the MNIST study, we employ a \emph{split-half negative control}.
For each section, we partition the data into two disjoint subsets (e.g.,
Chemistry\(_A\) vs.~Chemistry\(_B\)). We expect the MMD between these
subsets to be statistically indistinguishable from zero, while
comparisons between sections (e.g., Chemistry vs.~Electricity) should
yield statistically significant divergence.

\subsection{Semantic Text Embeddings}\label{semantic-text-embeddings}

To map these textual descriptions into a numerical vector space, we
employ a sentence transformer model, specifically
\texttt{GIST-small-Embedding-v0}
(\citeproc{ref-solatorio2024gistembed}{Solatorio 2024}). Unlike
traditional keyword-based methods (such as TF-IDF) which measure lexical
overlap, this transformer network maps sentences to a 384-dimensional
dense vector space such that semantically similar texts are
geometrically close. We select this model for its superior
signal-to-noise ratio in our MMD framework compared to alternatives such
as \texttt{all-MiniLM-L6-v2}.

\subsection{Results: Technical
Distinctiveness}\label{results-technical-distinctiveness}

Figure \ref{fig:patent_heatmap} presents the pairwise MMD statistics for
the patent sections. The results confirm the method's efficacy in the
textual domain. The diagonal elements (negative controls) show
negligible MMD values, and the null hypothesis of identical
distributions is not rejected in any split-half comparison
(\(p > 0.01\)). This confirms that the metric does not detect
differences where none exist.

\begin{figure}[htbp]
\centering
\includegraphics[width=0.6\linewidth]{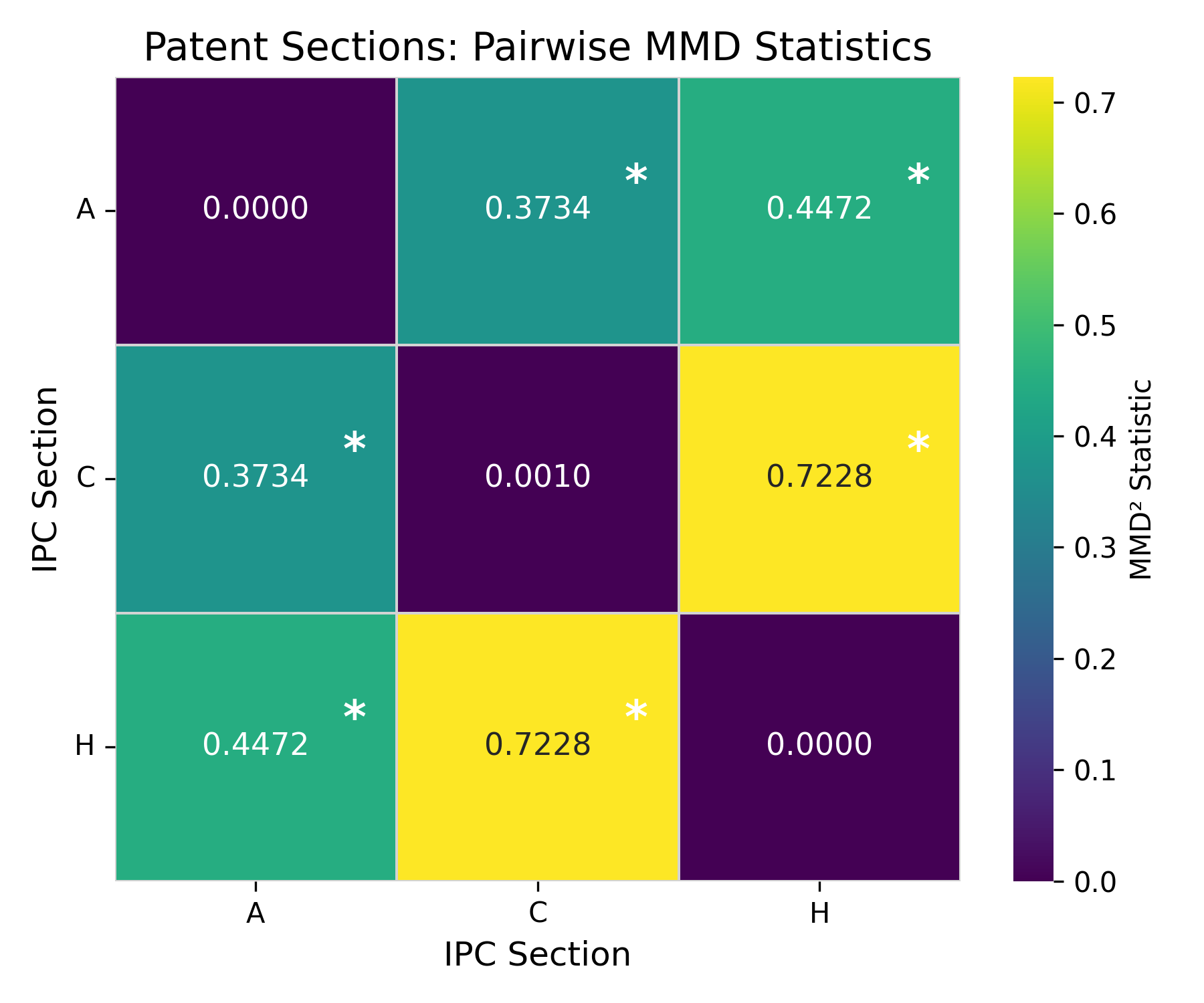}
\caption{Pairwise MMD Statistics for Patent IPC Sections}
\label{fig:patent_heatmap}
\begin{minipage}{0.95\linewidth}
\footnotesize
Note: Three IPC sections compared: A (Human Necessities), C (Chemistry), H (Electricity). Diagonal cells show split-half negative controls (disjoint samples from the same section); off-diagonal cells show cross-section comparisons. Asterisks indicate statistical significance ($p < 0.01$, permutation test with $R=500$).
\end{minipage}
\end{figure}

Conversely, all cross-section comparisons yield highly significant
differences (\(p < 0.01\)). The MMD statistic effectively captures the
semantic distance between technical fields. For example, the distance
between \emph{Chemistry} and \emph{Electricity} (distinct physical
sciences) is robustly detected. This suggests that the framework can
effectively map the ``topology of prior art,'' identifying whether a new
set of claims falls within the distribution of an existing field or
occupies a distinct region.

\subsubsection{Sample Efficiency in
Text}\label{sample-efficiency-in-text}

We further evaluate the data requirements for textual analysis. Figure
\ref{fig:patent_rejection_rate} plots the rejection rate of the null
hypothesis against sample size.

\begin{figure}[htbp]
\centering
\includegraphics[width=\linewidth]{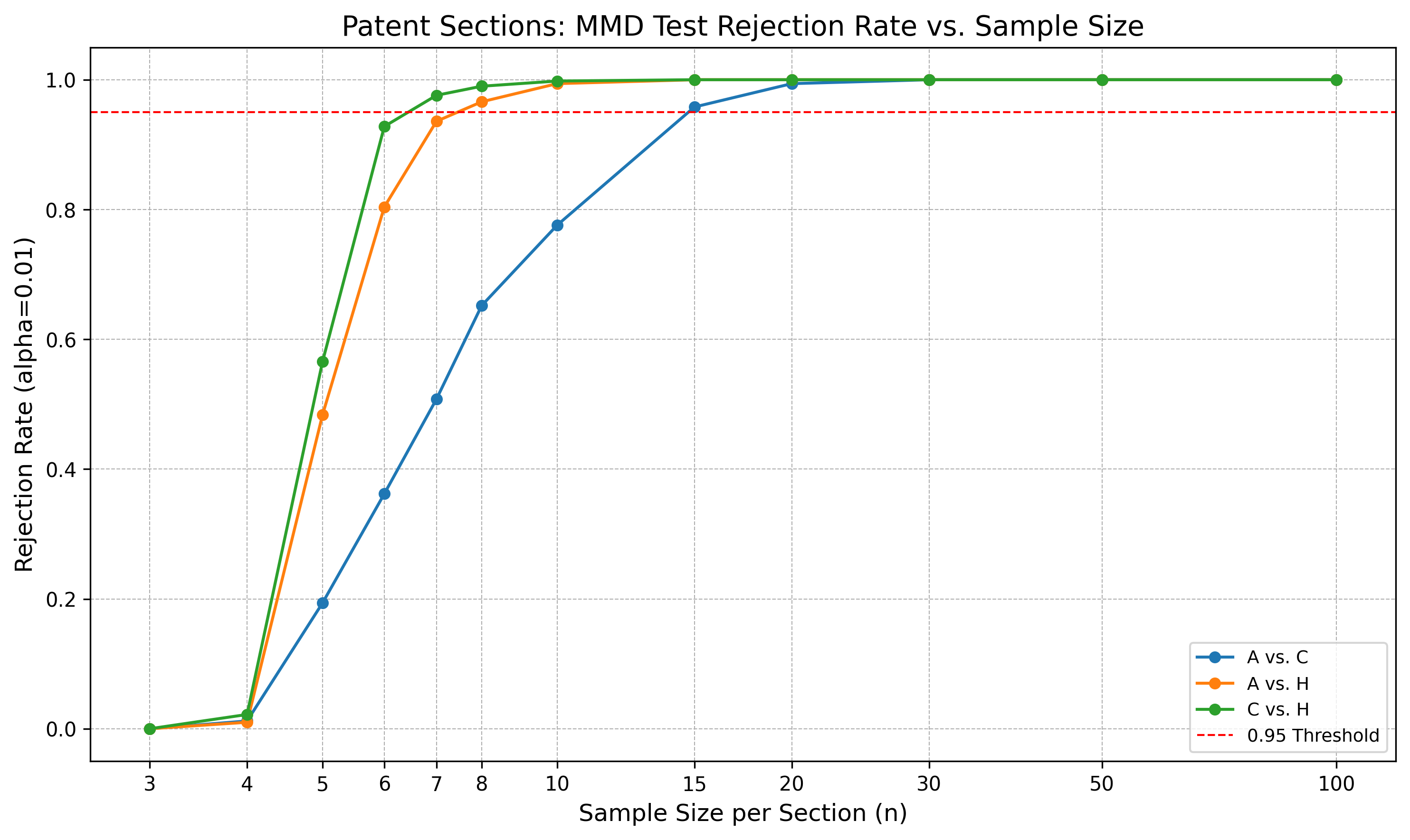}
\caption{Rejection Rate vs. Sample Size for Patent Section Comparisons}
\label{fig:patent_rejection_rate}
\begin{minipage}{0.95\linewidth}
\footnotesize
Note: Rejection rate (statistical power) for distinguishing IPC sections at $\alpha = 0.01$ across sample sizes. The most semantically distinct pair (Chemistry vs. Electricity) achieves $>95\%$ power at $n=7$; pairs with smaller effect sizes require $n=8$–$15$ abstracts per group.
\end{minipage}
\end{figure}

The results demonstrate reasonable sample efficiency that varies with
effect size. The most semantically distinct pair (Chemistry
vs.~Electricity, \(\text{MMD}^2=0.72\)) achieves 95\% rejection at
\(n=7\); pairs with smaller effect sizes (A vs.~H,
\(\text{MMD}^2=0.45\); A vs.~C, \(\text{MMD}^2=0.37\)) require
\(n=8\)--\(15\) abstracts per group. This finding is relevant for
litigation and examination contexts. While not as efficient as the
visual domain (where 5--8 samples suffice for MNIST), the ability to
reliably distinguish technical fields with 8--15 documents confirms the
method is deployable in patent disputes, where examiner searches
typically yield dozens of prior art references. Given the algorithmic
stability established in the MNIST validation (Section 4.4), we
attribute these significant MMD scores to genuine semantic differences
rather than artifacts of kernel choice or dimensionality.

\section{AI-Generated Art and the Perceptual
Paradox}\label{ai-generated-art-and-the-perceptual-paradox}

The preceding studies validate our methodology in controlled settings:
MNIST provides objective ground truth in the visual domain, and Patent
Abstracts demonstrate versatility in text. Creative art presents a more
complex challenge: it is subjective, stylistically diverse, and lacks
rigid class boundaries. Moreover, human evaluators distinguish
AI-generated art from human-created art with only approximately 58\%
accuracy---barely better than chance. Yet if AI models are genuinely
distinct creative processes rather than mere replicators of training
data, that distinctiveness should be statistically detectable, even if
it eludes direct observation. This section tests whether our MMD
framework can resolve this paradox: detecting distributional differences
even when human perception fails.

\subsection{The AI-ArtBench Dataset and
Categories}\label{the-ai-artbench-dataset-and-categories}

We employ the AI-ArtBench dataset
(\citeproc{ref-silva2024artbrain}{Silva et al. 2024}), a comprehensive
corpus designed to benchmark the detection and attribution of
AI-generated art. It comprises 185,015 artistic images spanning ten
distinct art styles (including Impressionism, Surrealism, and Art
Nouveau). The dataset includes both human-created artworks---60,000
images derived from the curated ArtBench-10 dataset
(\citeproc{ref-liao2022artbench}{Liao et al. 2022})---and 125,015
AI-generated images produced using text prompts based on the human
artworks. The AI images are generated by two prominent diffusion models:

\begin{itemize}
\tightlist
\item
  \textbf{Latent Diffusion (LD):} The original CompVis latent diffusion
  model, which serves as a foundational architecture for subsequent
  systems.
\item
  \textbf{Stable Diffusion (SD):} A widely deployed latent diffusion
  model (\citeproc{ref-rombach2022high}{Rombach et al. 2022}), noted for
  its high-resolution generation capabilities.\footnote{Silva et al.
    (\citeproc{ref-silva2024artbrain}{2024}) refer to this model as
    ``Standard Diffusion'' in their dataset and portions of their text,
    though they also use ``Stable Diffusion'' interchangeably (e.g.,
    Section 3.1). Their documentation points to the official Stable
    Diffusion repository, confirming the model identity.}
\end{itemize}

We categorize the images into three groups: Human (original human
artworks), AI (SD) (images generated by Stable Diffusion), and AI (LD)
(images generated by Latent Diffusion). This design allows us to test
distinctiveness along two axes: the divergence of machine from human,
and the divergence of one generative model from another.\footnote{We do
  not observe the full, proprietary training corpora for these diffusion
  models. Therefore, throughout, we treat the curated human artworks in
  AI-ArtBench as a proxy for samples from the relevant training
  distribution (e.g., treating the Impressionist subset of ArtBench-10
  as representative of the Impressionist data seen by the models during
  training).}

This dataset offers a crucial test case because human evaluators can
distinguish these AI-generated images from human art with only
approximately 58\% accuracy (\citeproc{ref-silva2024artbrain}{Silva et
al. 2024}). This near-collapse of perceptual distinctiveness exposes the
limits of the ``ordinary observer'' standard relied upon in copyright
and trademark law: if the human eye cannot reliably distinguish the
source, legal analysis requires a robust quantitative metric capable of
detecting underlying distributional differences that human intuition
misses.

In addition, it presents a puzzle: is the AI producing work that
\emph{appears} human because it is reproducing memorized training data
(pure regurgitation), or because it is interpolating learned patterns to
create genuinely novel yet semantically coherent outputs? The perceptual
data alone cannot distinguish these possibilities. Our MMD framework
allows us to contrast their distributional predictions. If the AI is
regurgitating, its output distribution should closely resemble a
human-art baseline. If it interpolates, the distributions should
diverge.

\subsection{Embedding with CLIP}\label{embedding-with-clip}

Assessing distinctiveness in the visual arts requires capturing complex
stylistic and semantic topologies. Constructing a domain-specific
feature extractor for such high-dimensional data requires
resource-intensive training on large-scale corpora. However, in actual
legal adjudication, courts and IP offices cannot be expected to train
bespoke feature extractors for every new dispute. Moreover, such
large-scale case-specific training corpora are unlikely to be available.
Consequently, a legally viable metric must be
``training-free''---capable of evaluating new works immediately using
general-purpose semantic knowledge, much like a human adjudicator.

To satisfy these constraints, we employ CLIP (Contrastive Language-Image
Pre-training) (\citeproc{ref-radford2021learning}{Radford et al. 2021}).
CLIP is a multimodal foundation model pre-trained on massive datasets of
image-text pairs to align visual and textual concepts. By mapping images
into a semantic vector space, CLIP captures high-level features---style,
subject matter, and composition---rather than mere pixel-level
correlations. This allows us to measure the semantic distance between
creative processes without requiring access to the AI's training data or
model weights.

Specifically, we use the ViT-H-14-quickgelu variant of CLIP, pre-trained
on the dfn5b dataset. We process each image through the encoder to
obtain a normalized 1024-dimensional embedding vector. This transforms
the raw pixel data into a semantic representation where geometric
distance corresponds to conceptual divergence. Relying on a pre-trained
foundation model ensures the framework is deployable for ``zero-shot''
legal analysis, allowing adjudicators to compare new image sources
immediately without technical overhead.

\subsection{MMD Analysis Procedure and
Setup}\label{mmd-analysis-procedure-and-setup-1}

We apply the MMD framework established in Section 4 to the AI-ArtBench
data. We perform comparisons at the category level---comparing Human
vs.~AI \emph{within} each artistic style---to prevent stylistic
confounds from masking or inflating distinctiveness estimates.
Therefore, we adapt our sampling strategy to account for stylistic
diversity. Specifically, to balance comparisons across artistic
movements, we employ \emph{stratified sampling}: we draw 250 images per
artistic style for each category (Human, AI-SD, AI-LD), yielding a total
dataset of 7,500 images. We pass each sampled image through the
pre-trained CLIP encoder to obtain its normalized 1024-dimensional
embedding vector and employ UMAP for dimensionality reduction to 64
dimensions. We compute the unbiased MMD statistic using the Gaussian RBF
kernel (with median heuristic bandwidth) and assess significance via the
permutation test (\(R=500\), \(\alpha=0.01\)).

\subsection{Core Results: Style-Stratified
Distinctiveness}\label{core-results-style-stratified-distinctiveness}

Figure \ref{fig:art_category_heatmaps} presents per-style MMD heatmaps
showing Human vs.~AI (SD) vs.~AI (LD) comparisons within each artistic
style. Across all styles, diagonal elements (negative controls) yield
MMD statistics near zero, with all 30 comparisons non-significant at
\(\alpha = 0.01\), consistent with correct Type I error control. In
contrast, Human vs.~AI comparisons yield statistically significant
differences (\(p < 0.01\)) across all ten styles, demonstrating that
distributional distinctiveness is a robust feature that persists
regardless of artistic movement.

\begin{figure}[htbp]
\centering
\includegraphics[width=\linewidth]{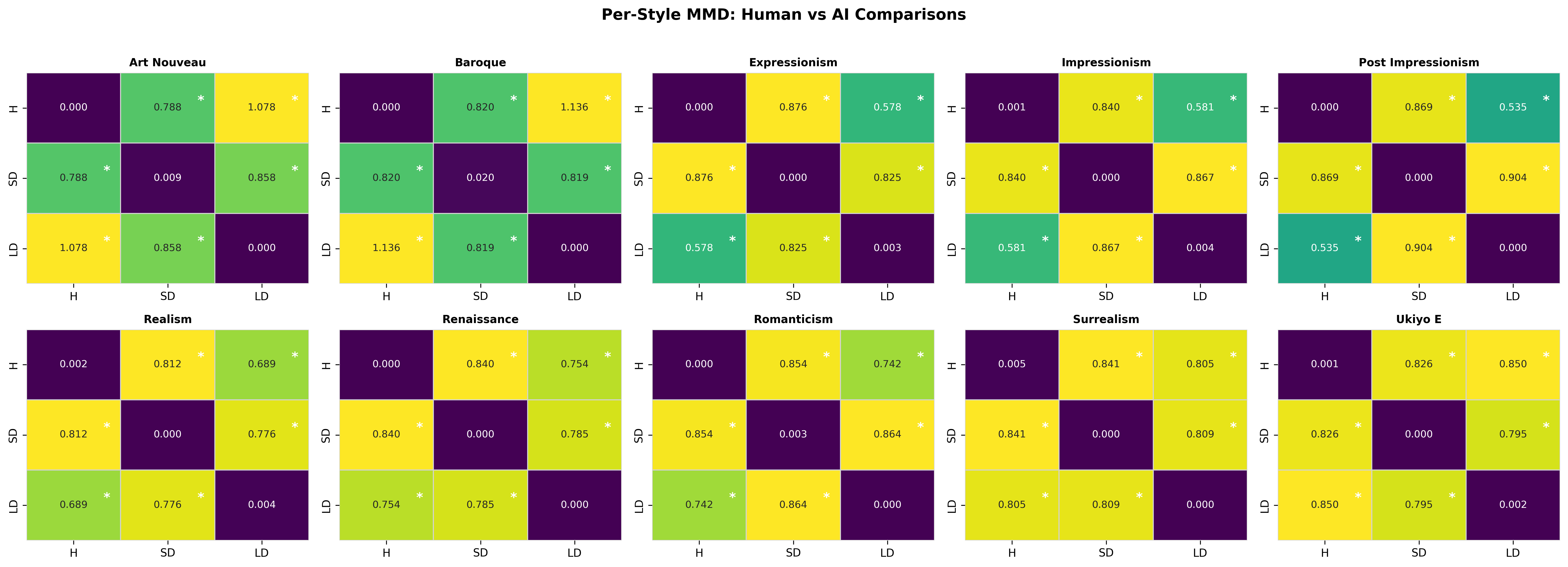}
\caption{Per-Style MMD Heatmaps: Human vs AI Comparisons Within Each Artistic Movement.}
\label{fig:art_category_heatmaps}
\begin{minipage}{0.95\linewidth}
\footnotesize
Note: Each panel shows Human (H), AI (SD), and AI (LD) comparisons within a single artistic style. Diagonal cells (negative controls) show near-zero MMD values; all 30 diagonal comparisons are non-significant at $\alpha = 0.01$, consistent with correct Type I error control. Off-diagonal cells marked with * indicate statistically significant differences (\(p < 0.01\), permutation test with \(R=500\)).
\end{minipage}
\end{figure}

Notably, the magnitude of distinctiveness varies substantially across
styles, ranging from \(\text{MMD}^2 = 0.788\) for Art Nouveau to
\(0.876\) for Expressionism. Figure
\ref{fig:art_category_mmd_comparison} ranks styles by their Human vs.~AI
(SD) MMD values, revealing a convergence spectrum: some styles (e.g.,
those with more constrained compositional conventions) show lower MMD
values, indicating that AI outputs more closely approximate the human
distribution, while others show higher divergence. This variation is
significant---it suggests that the strength of a ``distinctiveness''
argument may depend on the specific artistic domain under consideration.

\begin{figure}[htbp]
\centering
\includegraphics[width=\linewidth]{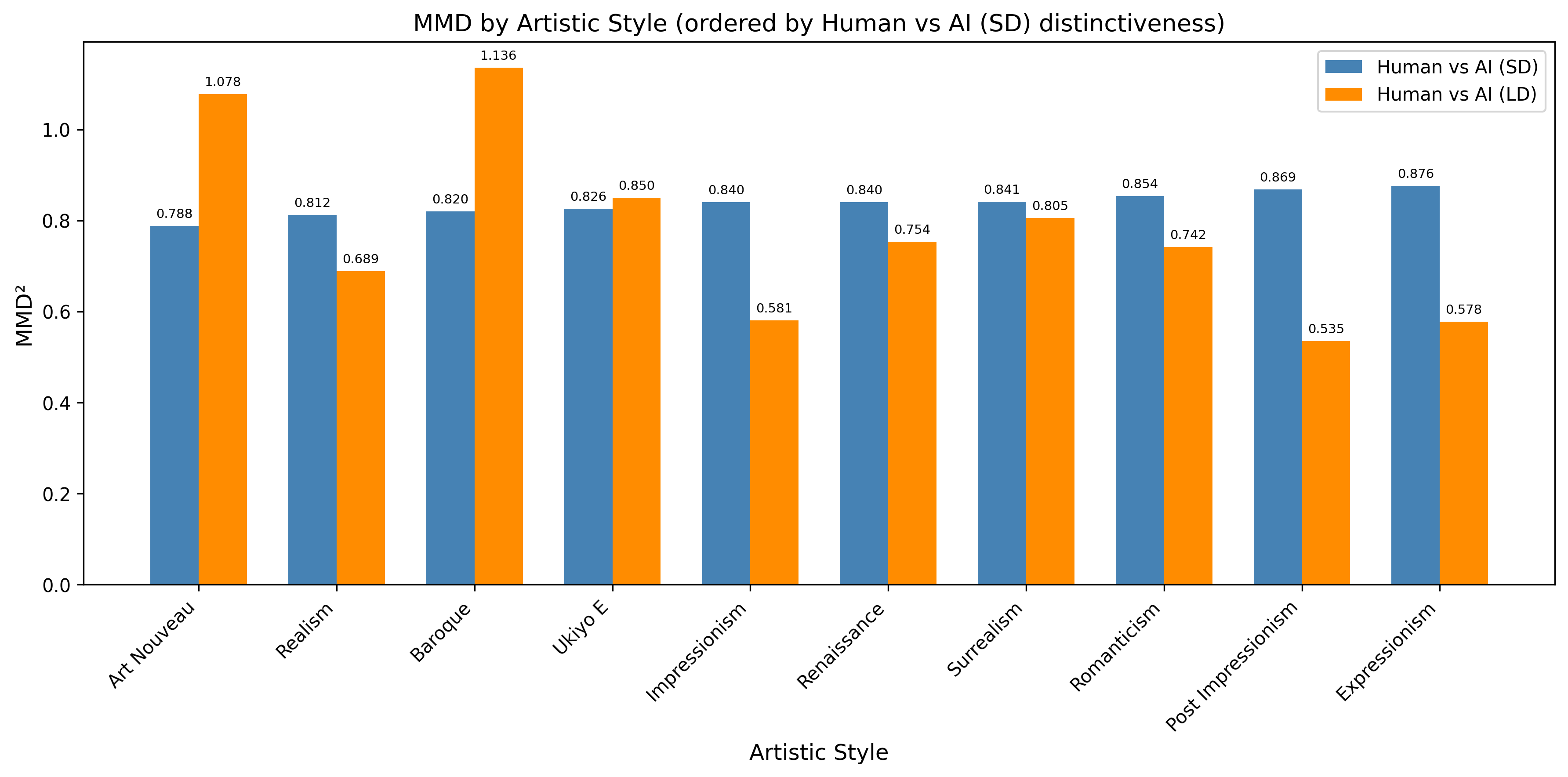}
\caption{MMD by Artistic Style (ordered by Human vs AI (SD) distinctiveness)}
\label{fig:art_category_mmd_comparison}
\begin{minipage}{0.95\linewidth}
\footnotesize
Note: Bars show $\text{MMD}^2$ for Human vs AI (SD) and Human vs AI (LD) within each style. Styles are ordered from lowest to highest Human vs AI (SD) distinctiveness, revealing the convergence spectrum across artistic movements.
\end{minipage}
\end{figure}

However, a raw MMD score is abstract without a baseline. To interpret
these magnitudes, we calibrate them against the distances between human
art movements calculated within the same embedding space. This
establishes an empirical reference scale based purely on the model's
topological constraints. Within this framework, closely related schools
(Impressionism vs.~Realism) yield a noise floor of
\(\text{MMD}^2 = 0.027\), while historically distinct traditions (Art
Nouveau vs.~Ukiyo-e) yield a ceiling of \(\text{MMD}^2 = 0.859\).

Against this scale, the AI outputs do not cluster near the human
baseline. The Human-AI (SD) distances (mean 0.837) fall at the upper
extreme of the spectrum, comparable to the distance between Art Nouveau
and Ukiyo-e. Even Latent Diffusion, the earliest model, achieves a mean
distance of 0.775---comparable to the separation between Baroque and
Surrealism (0.744). This calibration suggests that the detected
distinctiveness is not a statistical artifact of large sample sizes, but
a distributional divergence comparable in magnitude to the largest
stylistic shifts observed within the human reference set.

Figure \ref{fig:art_category_rejection_rates_sd} and Figure
\ref{fig:art_category_rejection_rates_ld} demonstrate the data
efficiency of our approach across styles. When comparing human art
against Latent Diffusion (LD) outputs, all styles achieve
\textgreater95\% rejection rates with sample sizes of \(n=6\) to
\(n=10\) images per source type. There is heterogeneity in styles:
``fast-converging'' styles (those with lower MMD) require slightly
larger samples to achieve high statistical power, while
``slow-converging'' styles (higher MMD) can be distinguished with fewer
samples. A similar analysis comparing human art against Standard
Diffusion (SD) outputs shows a faster convergence to high rejection
rates. Most styles converge with \(n=6\) images per source type and all
styles converge with \(n=7\) images per source type.

\begin{figure}[htbp]
\centering
\includegraphics[width=\linewidth]{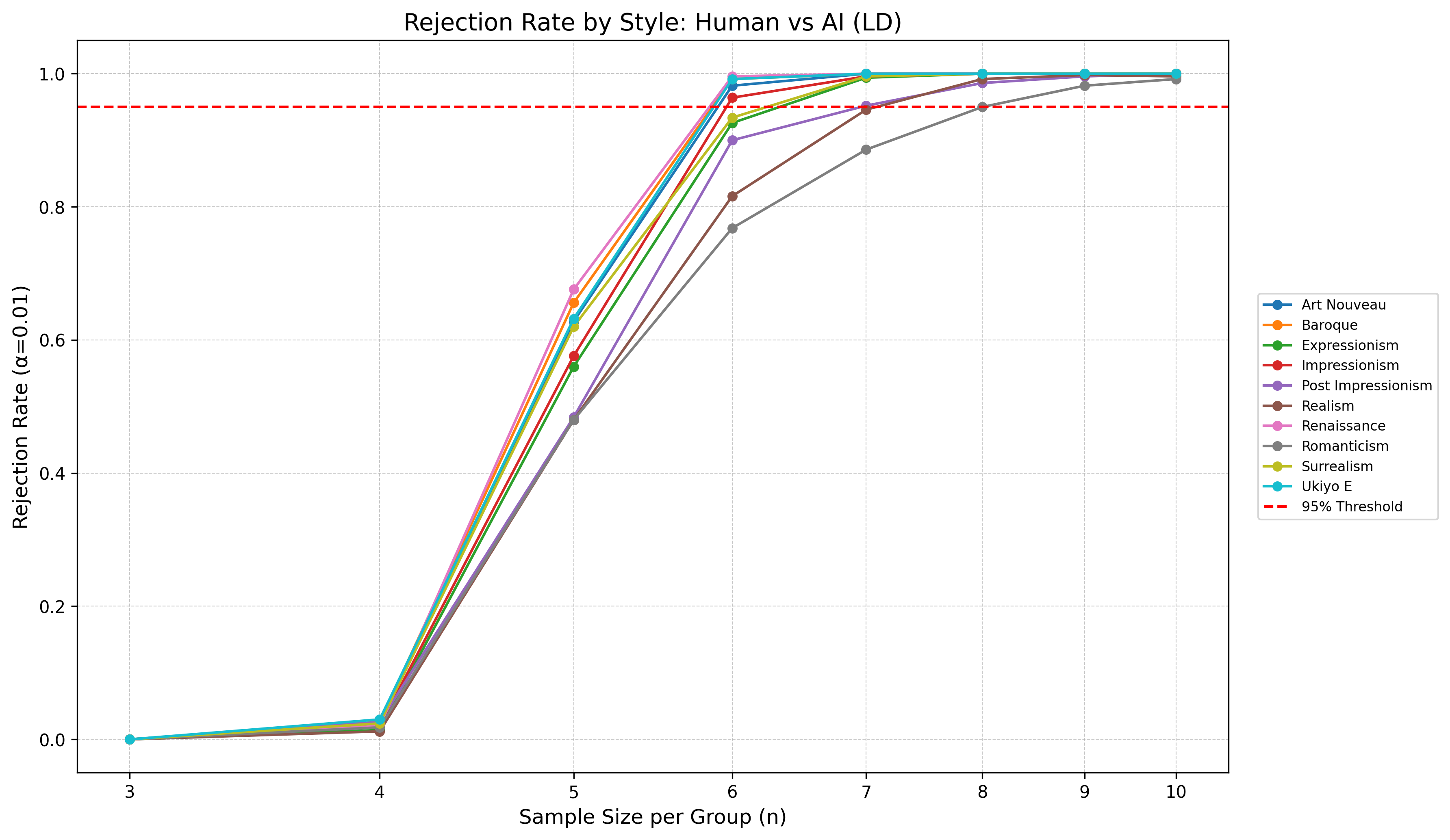}
\caption{Rejection Rate vs. Sample Size by Artistic Style (Human vs Latent Diffusion)}
\label{fig:art_category_rejection_rates_ld}
\begin{minipage}{0.95\linewidth}
\footnotesize
Note: Each line represents the rejection rate for a single artistic style at \(\alpha = 0.01\). The dashed line at 0.95 depicts the threshold for reliable detection.
\end{minipage}
\end{figure}

\begin{figure}[htbp]
\centering
\includegraphics[width=\linewidth]{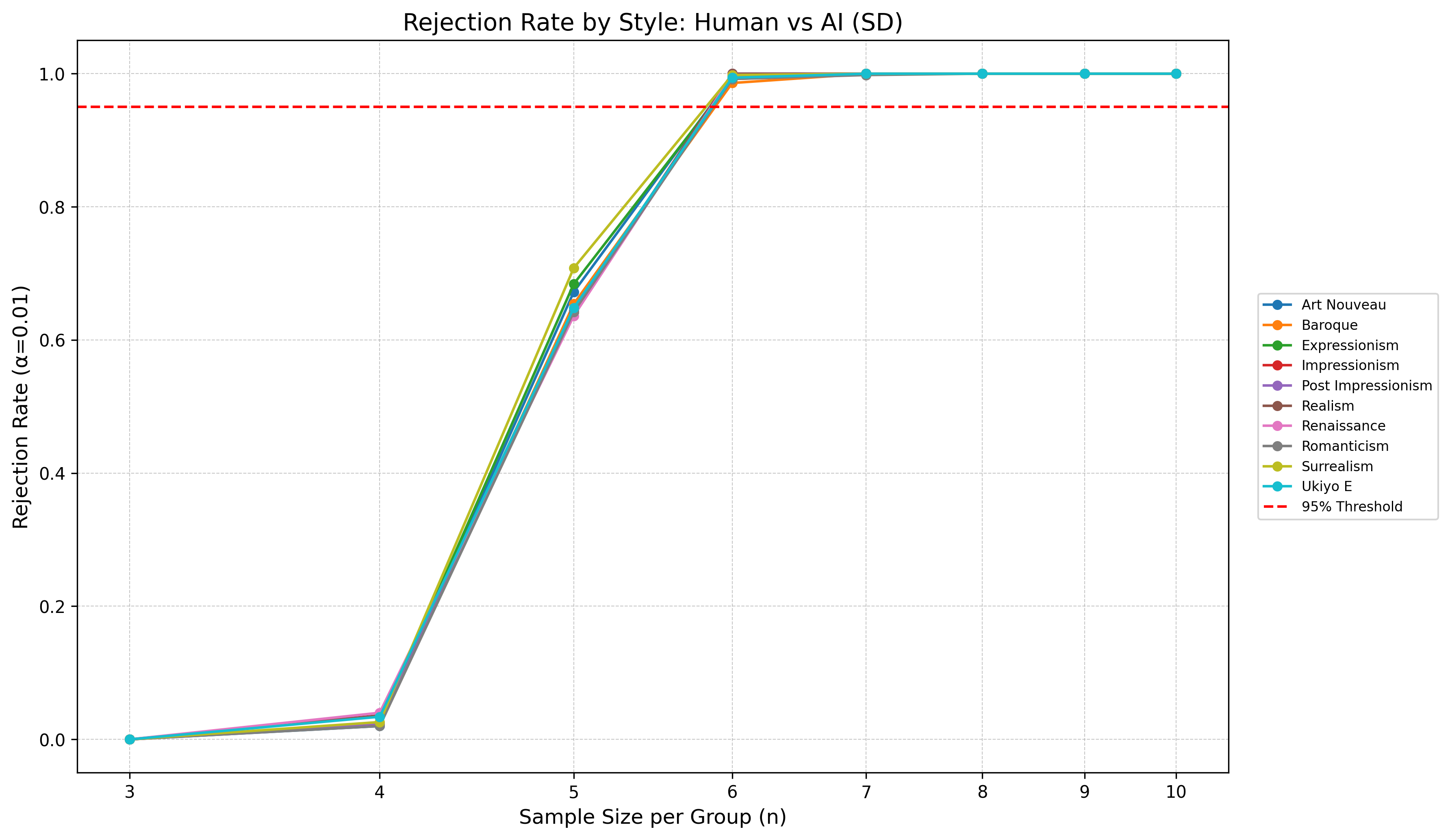}
\caption{Rejection Rate vs. Sample Size by Artistic Style (Human vs Standard Diffusion)}
\label{fig:art_category_rejection_rates_sd}
\begin{minipage}{0.95\linewidth}
\footnotesize
Note: Each line represents the rejection rate for a single artistic style at \(\alpha = 0.01\). The dashed line at 0.95 depicts the threshold for reliable detection.
\end{minipage}
\end{figure}

\subsection{Robustness of Core
Results}\label{robustness-of-core-results}

We evaluate the stability of these findings through the ablation suite
defined in Section 4.4, testing sensitivity to three algorithmic choices
under the researcher's control: (1) kernel choice, (2) bandwidth, and
(3) dimensionality. For each ablation, we conduct a rejection rate
analysis as in Section 6.4, varying the number of samples per group and
measuring the proportion of trials that correctly reject the null
hypothesis at \(\alpha = 0.01\). Table \ref{tab:art_ablation_summary}
summarizes the results.

\begin{table}[htbp]
\centering
\caption{Algorithmic Robustness: Sample Size for 95\% Rejection Rate Across Ablation Conditions}
\label{tab:art_ablation_summary}
\begin{tabular}{llcc}
\toprule
\textbf{Ablation} & \textbf{Scope} & \textbf{Human vs AI (SD)} & \textbf{Human vs AI (LD)} \\
\midrule
Kernel & Realism & 6 & 10 \\
       & Renaissance & 6 & 6 \\
       & Post-Impressionism & 6 & 8 \\
       & \textit{Overall} & 6 & 13 \\
\cmidrule(lr){1-4}
Bandwidth & Realism & 6 & 8 \\
          & Renaissance & 6 & 6 \\
          & Post-Impressionism & 6 & 8 \\
          & \textit{Overall} & 6 & 16 \\
\cmidrule(lr){1-4}
Dimensionality & Realism & 17 & 17 \\
               & Renaissance & 17 & 17 \\
               & Post-Impressionism & 17 & 17 \\
               & \textit{Overall} & 17 & 9 \\
\bottomrule
\end{tabular}
\begin{minipage}{0.95\linewidth}
\vspace{0.5em}
\footnotesize
Note: Each cell reports the sample size (per group) at which the rejection rate first reaches 0.95, taking the maximum across all tested parameter values within each ablation type (RBF/Linear for kernel; $0.5\times$/$1\times$/$2\times$ median for bandwidth; $d \in \{5, 10, 16, 32, 1024\}$ for dimensionality). The higher dimensionality thresholds for Realism, Renaissance, and Overall ($n=17$) reflect the constraint that UMAP reduction to $d=32$ dimensions requires at least 34 samples (17 per group).
\end{minipage}
\end{table}

Across all three algorithmic dimensions, the detected distinctiveness is
robust. For Human vs.~AI (SD), kernel choice and bandwidth show uniform
convergence by \(n=6\) across all styles, while dimensionality requires
larger samples (\(n=17\)) due to UMAP's constraint that target
dimensions must be less than the number of samples. For Human vs.~AI
(LD), thresholds are moderately higher---ranging from \(n=6\) to
\(n=17\) depending on ablation type and scope---reflecting the somewhat
smaller effect size for this earlier generative model (Section 6.5). The
Overall LD comparison requires \(n=13\) for kernel ablation and \(n=16\)
for bandwidth ablation. These results confirm that the finding of
distributional divergence is not brittle; it persists regardless of
specific parameter tuning, insulating the method from claims that the
results are artifacts of p-hacking or idiosyncratic kernel selection.

We turn finally to input perturbation (Table
\ref{tab:art_perturbation}), the most relevant stress test for
real-world forensic applications where images may be subject to
compression, noise, or watermarking. We compare clean human artworks
against perturbed versions of the same artworks; under the null
hypothesis, these should be indistinguishable. A robust semantic metric
should fail to reject the null hypothesis (i.e., find no difference)
unless the perturbation destroys the underlying artistic content.

\begin{table}[htbp]
\centering
\caption{Art Perturbation Robustness: \textit{p}-values Across Perturbation Levels}
\label{tab:art_perturbation}
\begin{tabular}{llcccccccc}
\toprule
& & \multicolumn{8}{c}{Perturbation Level (SNR or SWR)} \\
\cmidrule(lr){3-10}
Style & Metric & 25 & 20 & 15 & 10 & 5 & 3 & 2 & 1 \\
\midrule
\multicolumn{10}{l}{\textit{Gaussian Noise (SNR)}} \\
Realism & p-value & 1.00 & 1.00 & 1.00 & 0.97 & 0.57 & 0.07 & $<$0.01 & $<$0.01 \\
Renaissance & p-value & 0.94 & 0.97 & 0.98 & 0.69 & 0.06 & $<$0.01 & $<$0.01 & $<$0.01 \\
Post-Impr. & p-value & 1.00 & 1.00 & 1.00 & 0.99 & 0.86 & 0.40 & $<$0.01 & $<$0.01 \\
\midrule
\multicolumn{10}{l}{\textit{Watermark (SWR)}} \\
Realism & p-value & 1.00 & 1.00 & 1.00 & 1.00 & 1.00 & 0.99 & 0.86 & 0.31 \\
Renaissance & p-value & 1.00 & 1.00 & 1.00 & 1.00 & 1.00 & 1.00 & 0.96 & 0.72 \\
Post-Impr. & p-value & 1.00 & 1.00 & 1.00 & 1.00 & 1.00 & 1.00 & 0.98 & 0.68 \\
\bottomrule
\end{tabular}
\begin{minipage}{0.95\linewidth}
\vspace{0.5em}
\footnotesize
Note: \textit{p}-values for MMD test comparing clean vs. perturbed human artworks ($n=200$ per style, $\alpha=0.01$). SNR = Signal-to-Noise Ratio; SWR = Signal-to-Watermark Ratio. Higher values indicate weaker perturbation. Three representative styles from MMD terciles: realism (fast-converging), renaissance (median), post-impressionism (slow-converging).
\end{minipage}
\end{table}

At typical real-world degradation levels (SNR/SWR \(\geq\) 10),
\emph{p}-values remain well above the significance threshold
(\(\alpha = 0.01\)). This confirms that minor image artifacts do not
induce false positives. The test begins rejecting only at
moderate-to-substantial perturbation levels: for Gaussian noise,
renaissance reaches significance at SNR \(\leq\) 3, while
post-impressionism shows the greatest robustness (significant only at
SNR = 2). Watermarks show even stronger robustness, with \emph{p}-values
remaining non-significant across all tested SWR levels for all styles.
This establishes the critical forensic property: the metric is sensitive
to semantic divergence (the creative process) but invariant to
incidental degradation (the file quality).

\subsection{Memorization Checks}\label{memorization-checks}

While distributional distinctiveness characterizes the dominant creative
mode, it does not preclude rare item-level memorization. A generative
model that is predominantly interpolative may nonetheless produce
occasional outputs that closely resemble specific training examples. To
address this concern, we complement our distributional analysis with a
nearest-neighbor audit using metrics standard in the forensic
literature: semantic similarity (CLIP cosine), structural similarity
(SSIM), and perceptual similarity (LPIPS).

For each AI-generated image in our dataset (\(n = 2{,}500\) per model),
we identify its nearest neighbor in the human reference corpus
(\(n = 2{,}500\)) under three metrics:

\begin{enumerate}
\def\labelenumi{\arabic{enumi}.}
\tightlist
\item
  \textbf{CLIP cosine similarity}: Semantic proximity in the CLIP
  ViT-H-14 embedding space (1024 dimensions). Higher values indicate
  greater semantic overlap.
\item
  \textbf{SSIM (Structural Similarity Index)}: Pixel-level structural
  correspondence accounting for luminance, contrast, and structure
  (\citeproc{ref-wang2004image}{Wang et al. 2004}). Values range from -1
  to 1, with 1 indicating identical images.
\item
  \textbf{LPIPS (Learned Perceptual Image Patch Similarity)}: Perceptual
  distance computed from deep features
  (\citeproc{ref-zhang2018unreasonable}{Zhang et al. 2018}). Lower
  values indicate greater perceptual similarity.
\end{enumerate}

To establish a principled detection threshold, we calibrate the metrics
against a human baseline. For each human image, we compute its
nearest-neighbor similarity to other human images within the same style,
establishing the distribution of similarity expected among independent
works in a genre. We then set the detection threshold at the 99th
percentile of this human-human distribution. Any AI output flagged by
this audit thus exhibits similarity to a training example that exceeds
99\% of the similarities found between independent human
artworks---holding AI to the same standard as human artists. Under the
null hypothesis of no memorization, this threshold yields an expected
false-positive rate of 1\% (approximately 25 images per model).

\begin{table}[htbp]
\centering
\caption{Item-Level Memorization Audit Results}
\label{tab:memorization_audit}
\begin{tabular}{lcccccc}
\toprule
& \multicolumn{3}{c}{\textbf{Stable Diffusion}} & \multicolumn{3}{c}{\textbf{Latent Diffusion}} \\
\cmidrule(lr){2-4} \cmidrule(lr){5-7}
& CLIP & SSIM & LPIPS & CLIP & SSIM & LPIPS \\
\midrule
Flagged ($n$)        & 0 & 0 & 0  & 0 & 1  & 6 \\
Exceedance (\%)      & 0.00 & 0.00 & 0.00 & 0.00 & 0.04 & 0.24 \\
\bottomrule
\end{tabular}
\begin{minipage}{0.95\linewidth}
\vspace{0.5em}
\footnotesize
Note: Flagged images exceed the 99th percentile of human-human within-style nearest-neighbor similarity. Under the null hypothesis of no memorization, the expected false-positive rate is 1\% ($\approx$25 images per model). All observed exceedance rates fall well below this threshold, indicating no systematic memorization. Total samples: 2,500 per AI model.
\end{minipage}
\end{table}

Table \ref{tab:memorization_audit} reports the results. CLIP
similarity---the metric most sensitive to semantic content---flagged
zero images for both models, indicating no high-level conceptual
copying. SSIM and LPIPS, which capture lower-level structural and
perceptual features, flagged a small number of candidates: 0 images for
Stable Diffusion (0.00\%) and 7 images for Latent Diffusion (0.28\%).
Critically, both rates fall well below the 1\% false-positive rate
expected by chance, providing statistical evidence against systematic
memorization.

To adjudicate these flags, we conducted visual inspection of all 7
flagged cases (all from Latent Diffusion). Figures
\ref{fig:flagged_lpips_renaissance}, \ref{fig:flagged_lpips_baroque},
and \ref{fig:flagged_ssim_expressionism} present these comparisons. In
every instance, the flagged AI images depict entirely different subjects
from their matched human paintings. For example, Figure
\ref{fig:flagged_lpips_renaissance} shows five AI-generated images
flagged against a Correggio religious scene---images that share
Renaissance stylistic characteristics (warm color palettes, painterly
textures) but lack any compositional or subject-matter overlap. All
flags arose from incidental stylistic convergence rather than
reproductive memorization.

\begin{figure}[htbp]
\centering
\includegraphics[width=\textwidth]{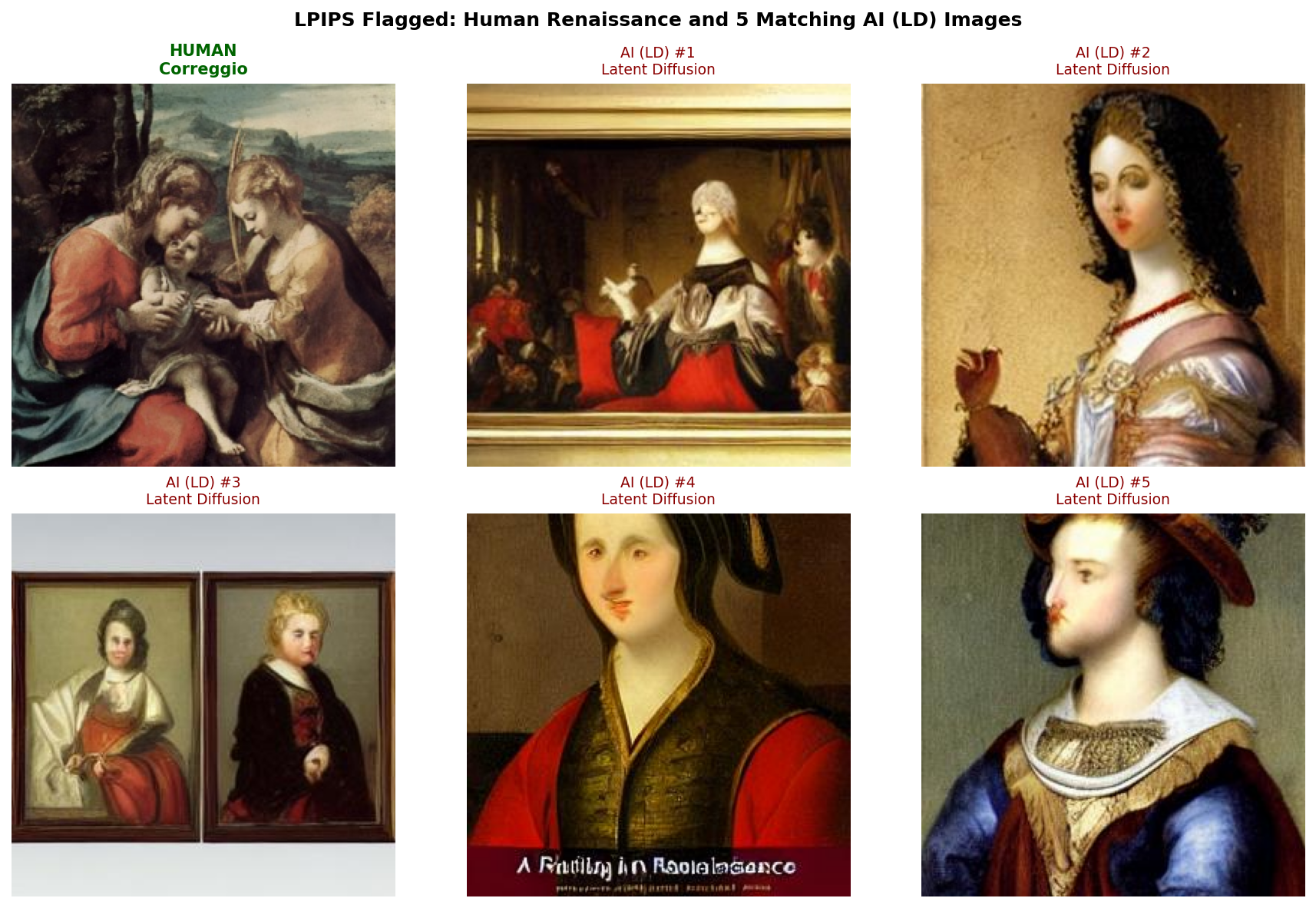}
\caption{LPIPS-Flagged Renaissance Comparisons}
\label{fig:flagged_lpips_renaissance}
\begin{minipage}{0.95\linewidth}
\footnotesize
Note: Human reference (Correggio, \textit{The Mystic Marriage of St.\ Catherine}, 1518) and 5 flagged Latent Diffusion outputs. The LPIPS metric detected shared warm color palettes and soft textural qualities characteristic of Renaissance painting. However, subjects differ entirely: the reference depicts a multi-figure religious scene while the AI outputs are predominantly single-figure portraits or framed scenes. Several AI images contain visible text artifacts (``Renaissance Art Style''), a hallmark of early diffusion models that further confirms non-reproductive generation.
\end{minipage}
\end{figure}

\begin{figure}[htbp]
\centering
\includegraphics[width=0.9\textwidth]{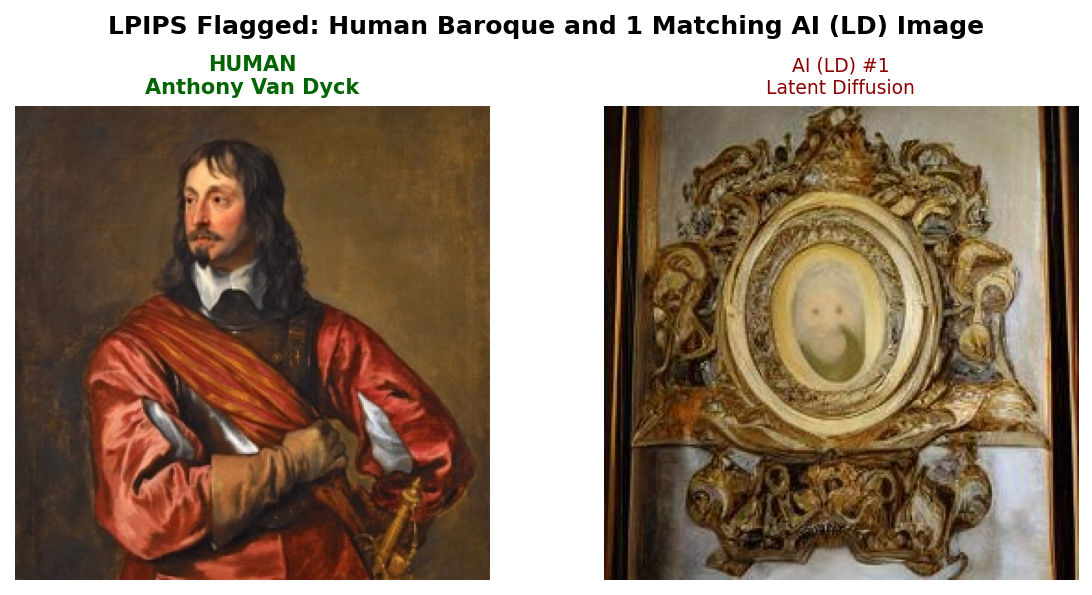}
\caption{LPIPS-Flagged Baroque Comparisons}
\label{fig:flagged_lpips_baroque}
\begin{minipage}{0.95\linewidth}
\footnotesize
Note: Human reference (Anthony van Dyck, \textit{Portrait of Sir John Mennes}, c.\ 1640) and 1 flagged Latent Diffusion output. The LPIPS metric detected shared warm red and gold color palettes characteristic of Baroque painting. However, the subjects differ substantially: a male portrait in armor versus an ornate framed religious scene. The similarity reflects period-typical color choices rather than content replication.
\end{minipage}
\end{figure}

\begin{figure}[htbp]
\centering
\includegraphics[width=0.8\textwidth]{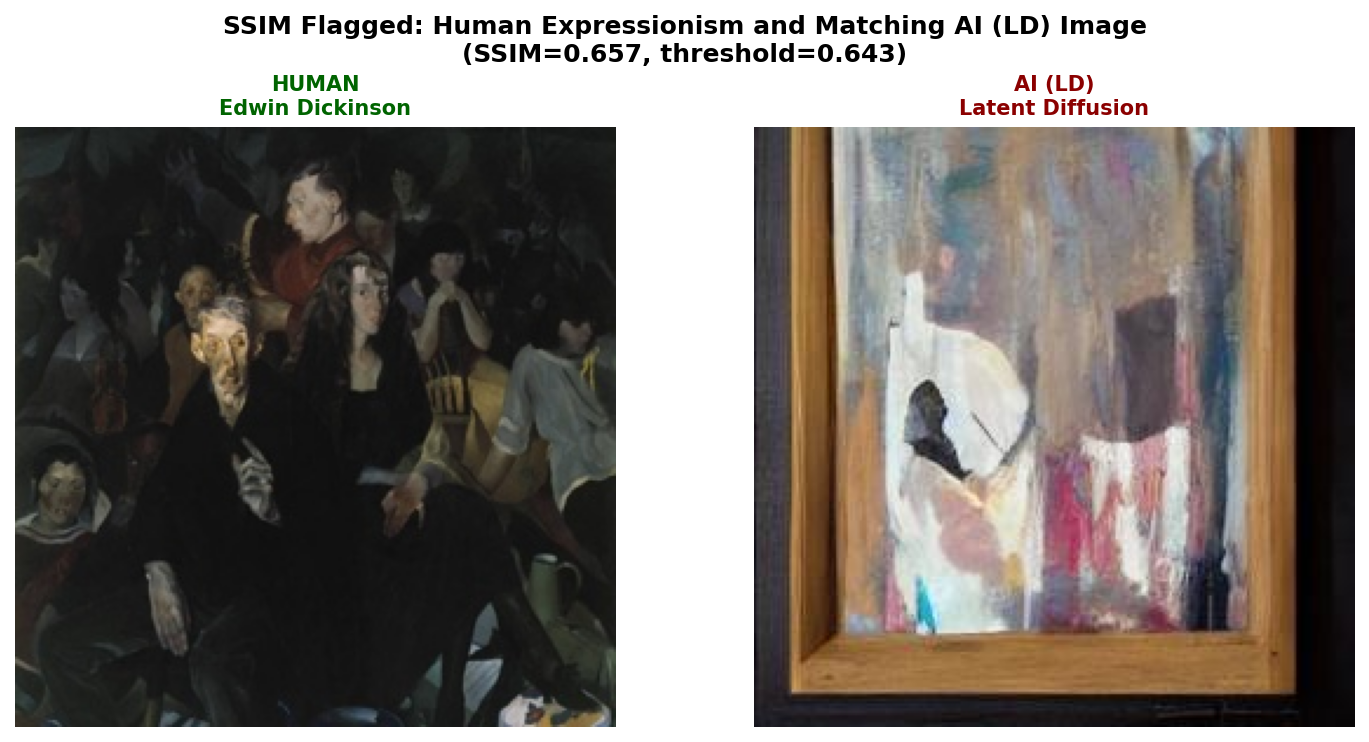}
\caption{SSIM-Flagged Expressionism Comparison}
\label{fig:flagged_ssim_expressionism}
\begin{minipage}{0.95\linewidth}
\footnotesize
Note: Human reference (Edwin Dickinson, \textit{An Anniversary}, 1921) and 1 flagged Latent Diffusion output. Despite strikingly different color palettes---the human work is dark and muted while the AI output is vibrant---the SSIM metric flagged this pair based on similar structural density and brushwork patterns. This illustrates how pixel-level structural metrics, which operate on luminance channels, can conflate textural similarity with copying even when semantic content and color are entirely distinct.
\end{minipage}
\end{figure}

In sum, the item-level memorization audit detected no credible instances
of training-data replication. The few statistical flags represent false
positives at threshold boundaries, confirmed by visual inspection to
reflect superficial stylistic similarity rather than copied content.
Combined with the distributional MMD analysis, these findings establish
that text-to-image models generate novel interpolations within the
learned manifold rather than regurgitating training examples.

\subsection{Evolution of Generative
Distinctiveness}\label{evolution-of-generative-distinctiveness}

As text-to-image models become more sophisticated, their outputs
converge semantically with real-world images and human art. But does
this fidelity arise because the outputs are moving closer to their
training samples, or because the models are interpolating more
effectively? Put differently: do better outputs reflect better
regurgitation or better learning?

To investigate this, we extend our analysis beyond the AI-ArtBench data
to include three subsequent generations of diffusion models. Our
expanded dataset spans five years of development: Latent Diffusion (Dec
2021), Stable Diffusion v1.4 (Aug 2022), Stable Diffusion XL (Jul 2023),
FLUX (Aug 2024), and FLUX-Krea (Jul 2025).\footnote{We deliberately
  employ models within a single architectural lineage rather than
  comparing across model families (e.g., Midjourney, DALL-E, Imagen) as
  otherwise we might conflate capability improvements with differences
  in architecture, training data, and design philosophy; by staying
  within the Stable Diffusion lineage, we isolate the effect of
  advancing capability on distributional distinctiveness. This choice
  also has practical relevance: Stability AI, the developer of Stable
  Diffusion, is currently a defendant in high-profile intellectual
  property litigation---including \emph{Andersen v. Stability AI Ltd.}
  and \emph{Getty Images (US), Inc.~v. Stability AI Ltd.}---making our
  findings directly relevant to the empirical questions at the heart of
  these disputes.}

\begin{figure}[htbp]
\centering
\includegraphics[width=\linewidth]{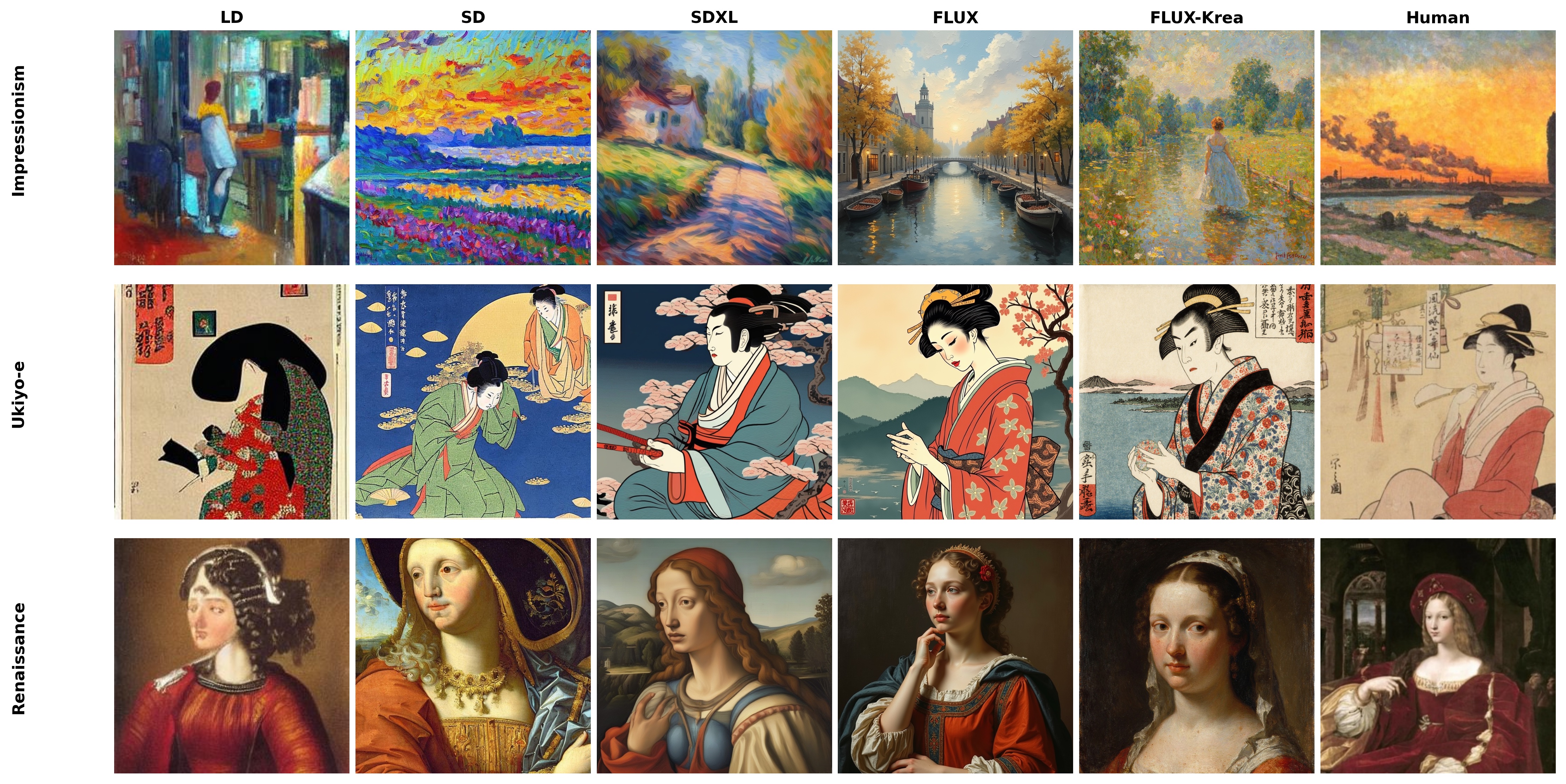}
\caption{Exemplar Images Across Model Generations and Artistic Styles}
\label{fig:art_evolution_grid}
\begin{minipage}{0.95\linewidth}
\footnotesize
Note: Each column represents a model generation, ordered chronologically from Latent Diffusion (LD) to FLUX-Krea, with human originals in the final column for comparison. Rows correspond to three artistic styles (Impressionism, Ukiyo-e, Renaissance) with consistent subject matter.
\end{minipage}
\end{figure}

Figure \ref{fig:art_evolution_grid} illustrates the visual progression
across model generations. Early models (LD, SD) exhibit visible
artifacts, such as blocky textures and inconsistent details. Later
generations (SDXL, FLUX) eliminate these flaws but tend toward
photorealism, producing outputs that often resemble photographs or 3D
renders more than paintings. FLUX-Krea, a fine-tuned variant designed
specifically to enhance artistic aesthetics, strikes a balance: it
recovers much of the stylistic fidelity while maintaining high
coherence.

\begin{figure}[htbp]
\centering
\includegraphics[width=\linewidth]{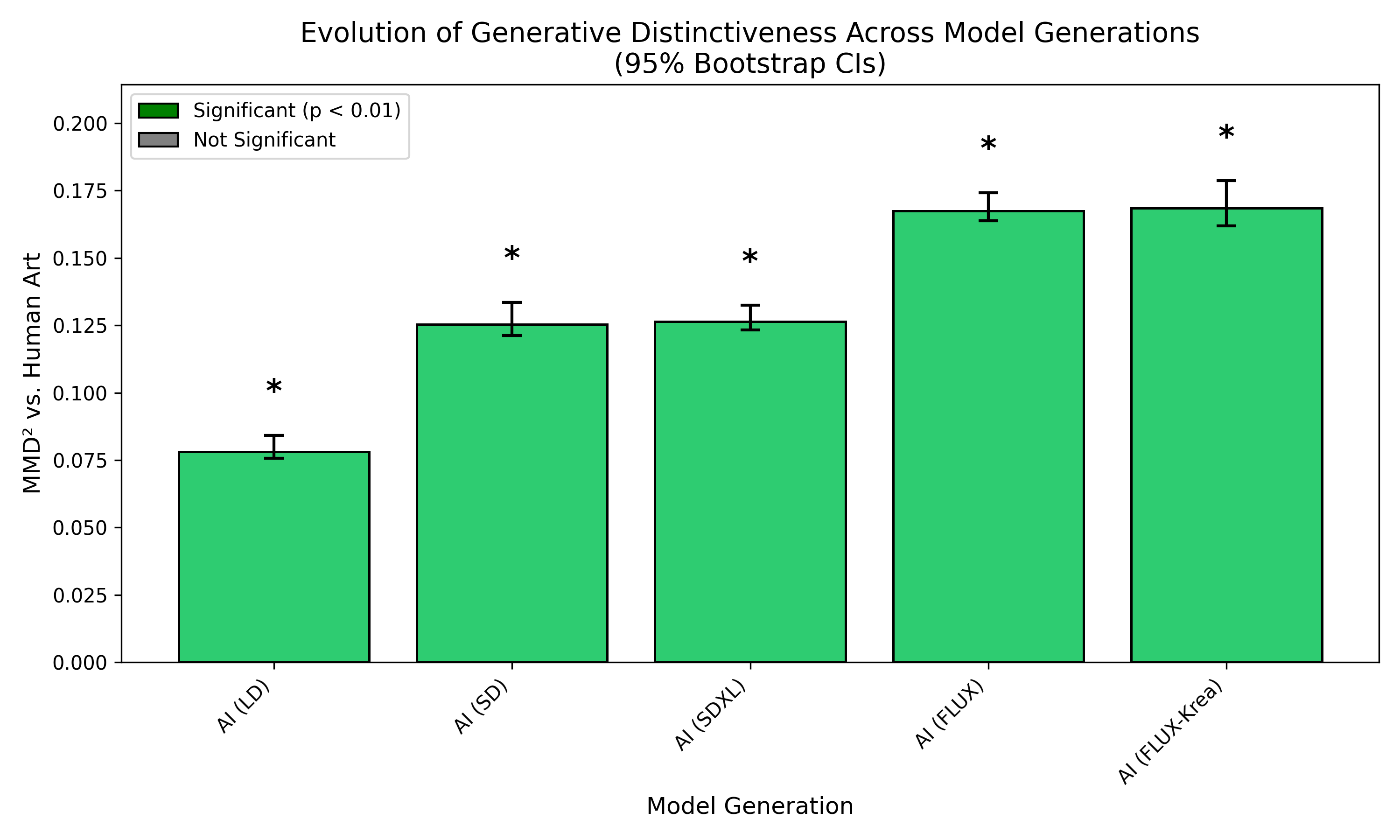}
\caption{Evolution of Human-AI Distributional Distinctiveness Across Model Generations (Pooled)}
\label{fig:art_evolution}
\begin{minipage}{0.95\linewidth}
\footnotesize
Note: $\text{MMD}^2$ between human artworks and outputs from each model generation, pooling across all ten artistic styles. Error bars indicate 95\% bootstrap confidence intervals (1,000 iterations). Models ordered chronologically: Latent Diffusion (Dec 2021), Stable Diffusion (Aug 2022), SDXL (Jul 2023), FLUX (Aug 2024), FLUX-Krea (Jul 2025). All comparisons significant at $p < 0.01$.
\end{minipage}
\end{figure}

Our results show that distributional distinctiveness does not vanish as
models advance---it \emph{increases}. Figure \ref{fig:art_evolution}
shows \(\text{MMD}^2\) values rising from 0.078 {[}0.076, 0.084{]} for
Latent Diffusion to 0.125 {[}0.121, 0.133{]} for Stable Diffusion to
0.126 {[}0.123, 0.132{]} for SDXL to 0.167 {[}0.164, 0.174{]} for FLUX
to 0.169 {[}0.162, 0.179{]} for FLUX-Krea---more than doubling despite
dramatic improvements in perceptual quality. The 95\% bootstrap
confidence intervals confirm that these differences are not artifacts of
sampling variability: the progression from LD to SD to FLUX represents
statistically reliable increases in distinctiveness, while FLUX and
FLUX-Krea are indistinguishable from each other (overlapping CIs).

\begin{figure}[htbp]
\centering
\includegraphics[width=\linewidth]{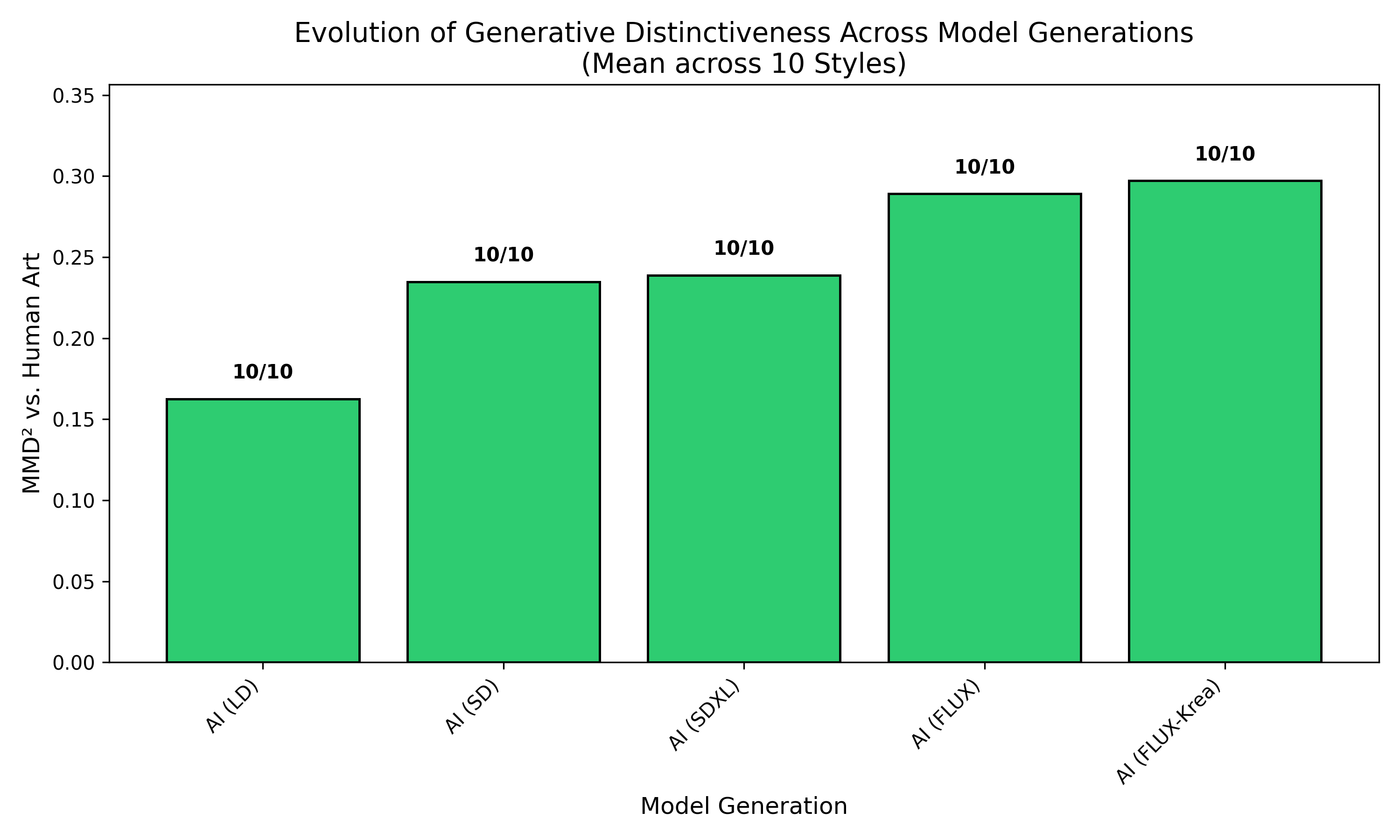}
\caption{Evolution of Human-AI Distributional Distinctiveness (Mean Across Styles)}
\label{fig:art_evolution_by_style}
\begin{minipage}{0.95\linewidth}
\footnotesize
Note: Mean $\text{MMD}^2$ computed separately within each of the ten artistic styles, then averaged. This within-style analysis controls for genre confounds and uses smaller per-style sample sizes, yielding higher absolute magnitudes than the pooled analysis. The monotonic increase across model generations is robust to this methodological variation.
\end{minipage}
\end{figure}

To confirm that this pattern is not driven by a subset of styles, we
also compute \(\text{MMD}^2\) within each of the ten artistic styles
separately and report the mean (Figure
\ref{fig:art_evolution_by_style}). The trajectory is qualitatively
identical: 0.16 (LD) \(\rightarrow\) 0.23 (SD) \(\rightarrow\) 0.24
(SDXL) \(\rightarrow\) 0.29 (FLUX) \(\rightarrow\) 0.30 (FLUX-Krea).
While absolute magnitudes differ from the pooled analysis (within-style
comparisons have smaller sample sizes, yielding higher variance and
higher point estimates), the monotonic increase is robust.

\subsubsection{Replication on a Single-Artist Corpus: Monet's Water
Lilies}\label{replication-on-a-single-artist-corpus-monets-water-lilies}

The AI-ArtBench analysis pools human artworks across many artists and
periods. To test whether our findings generalize to a more focused
setting---and to address the possibility that cross-artist heterogeneity
drives the results---we replicate the evolution analysis on a
single-artist corpus: Claude Monet's Water Lilies series.

We assembled a dataset of 200 authenticated Monet Water Lilies paintings
and generated 200 AI images from each of five model generations using
the prompt ``an oil painting of water lilies in the style of Claude
Monet.'' This design isolates the effect of generative modeling from
artistic diversity: all human works share a single artist, subject, and
style, while all AI works target the same stylistic goal.

\begin{figure}[htbp]
\centering
\includegraphics[width=0.6\linewidth]{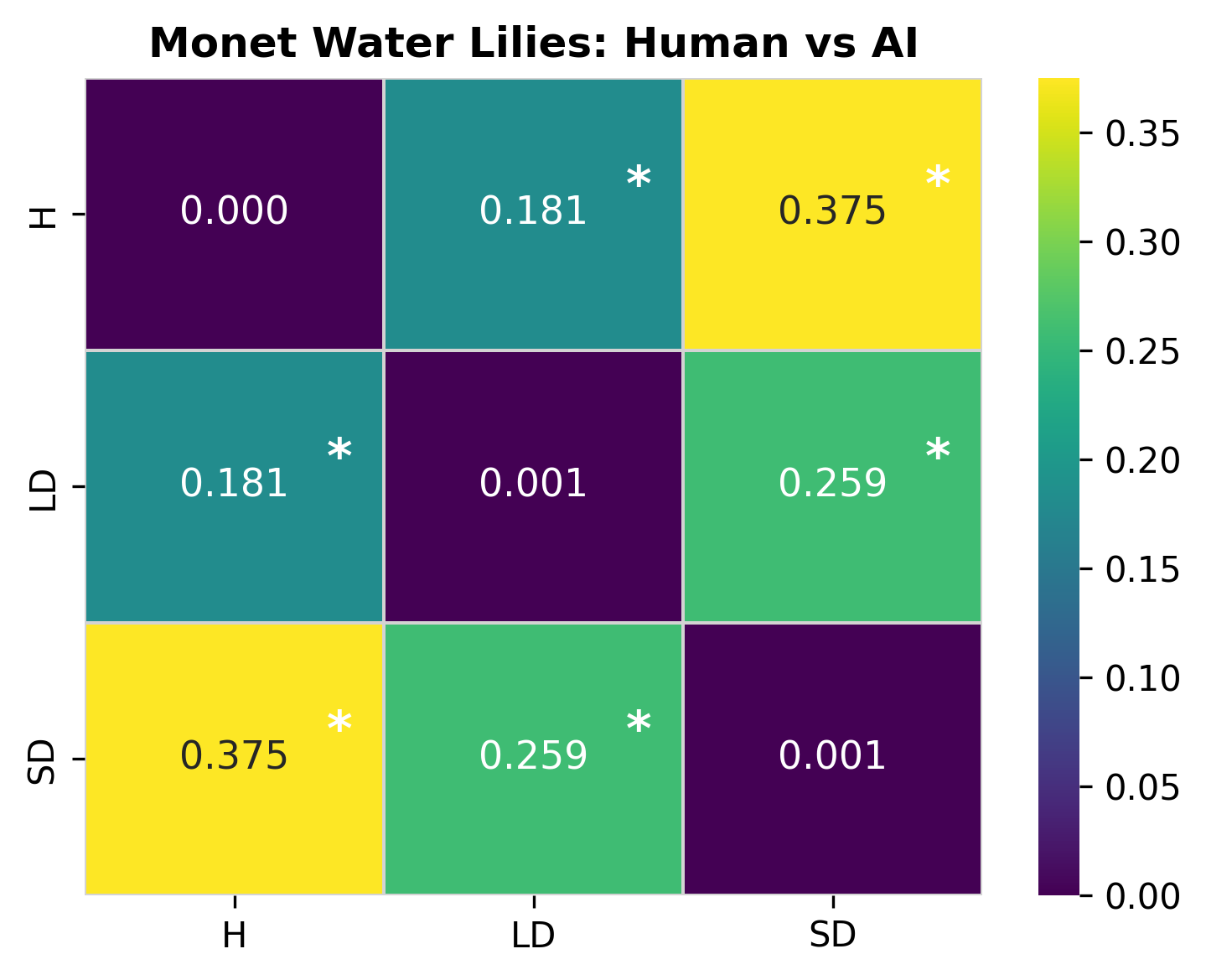}
\caption{Monet Water Lilies: MMD Heatmap with Negative Controls}
\label{fig:monet_heatmap}
\begin{minipage}{0.95\linewidth}
\footnotesize
Note: $\text{MMD}^2$ matrix comparing Human (Monet originals), AI (LD), and AI (SD). Diagonal elements (within-source comparisons) serve as negative controls; off-diagonal elements (cross-source comparisons) test for distributional distinctiveness. Asterisks indicate $p < 0.01$. The near-zero diagonal values confirm that the method does not detect spurious differences within homogeneous sources.
\end{minipage}
\end{figure}

Figure \ref{fig:monet_heatmap} presents the MMD heatmap for the Monet
corpus. Critically, the diagonal elements (within-source comparisons:
Human-A vs Human-B, AI-LD-A vs AI-LD-B, AI-SD-A vs AI-SD-B) yield
near-zero \(\text{MMD}^2\) values and are non-significant at
\(\alpha = 0.01\). This validates our negative controls: the method does
not detect differences where none exist. Meanwhile, all off-diagonal
(cross-source) comparisons are highly significant, with Human vs AI (SD)
yielding \(\text{MMD}^2 = 0.375\) and Human vs AI (LD) yielding
\(\text{MMD}^2 = 0.181\).

\begin{figure}[htbp]
\centering
\includegraphics[width=\linewidth]{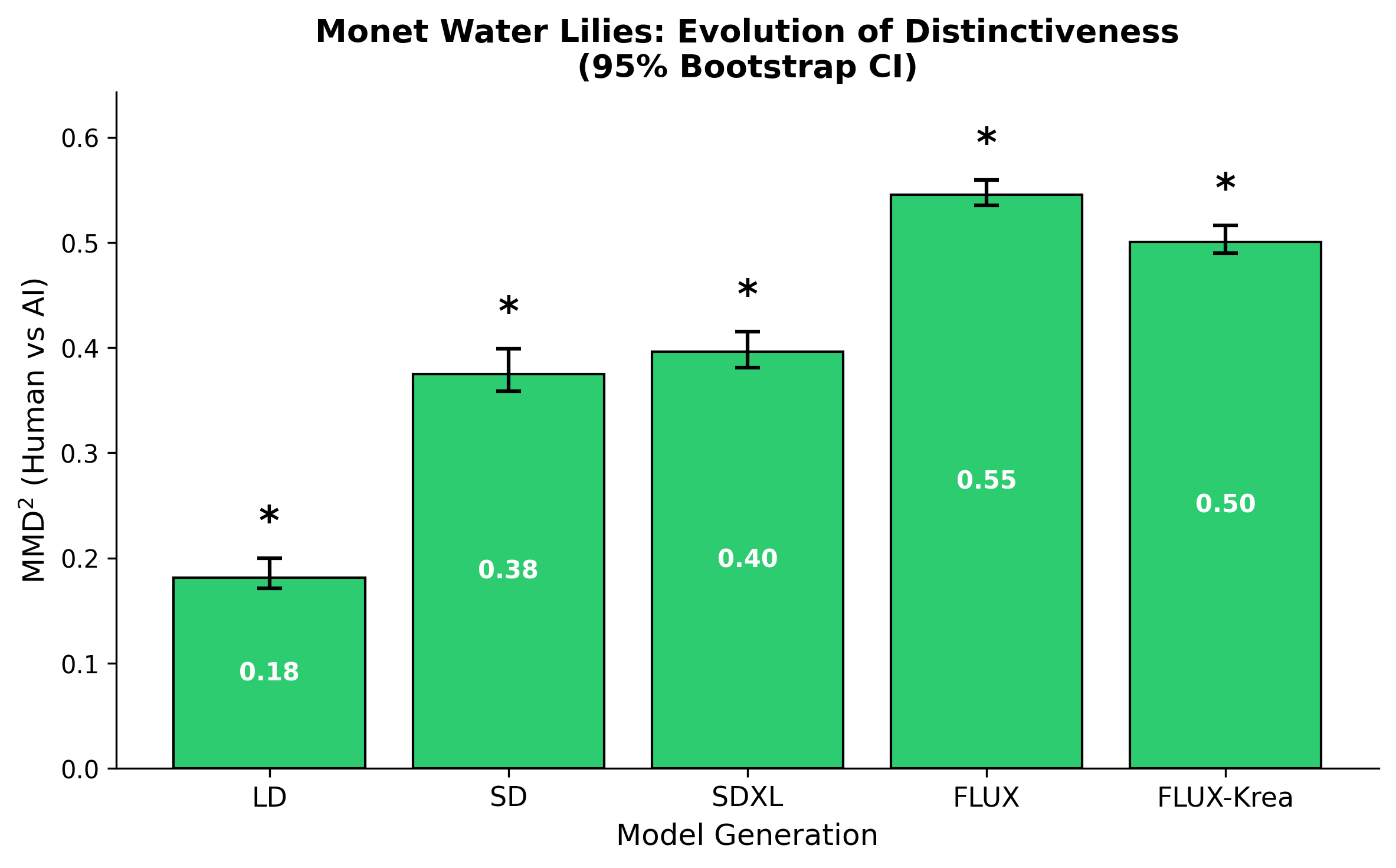}
\caption{Monet Water Lilies: Evolution of Distributional Distinctiveness}
\label{fig:monet_evolution}
\begin{minipage}{0.95\linewidth}
\footnotesize
Note: $\text{MMD}^2$ between Monet's Water Lilies paintings and AI-generated imitations across five model generations. Error bars indicate 95\% bootstrap confidence intervals (1,000 iterations). All comparisons significant at $p < 0.01$.
\end{minipage}
\end{figure}

The evolution analysis on this single-artist corpus confirms and
amplifies the main finding (Figure \ref{fig:monet_evolution}).
\(\text{MMD}^2\) rises from 0.181 {[}0.171, 0.200{]} for Latent
Diffusion to 0.375 {[}0.359, 0.399{]} for Stable Diffusion to 0.396
{[}0.381, 0.415{]} for SDXL to 0.545 {[}0.535, 0.559{]} for FLUX---more
than tripling. The confidence intervals do not overlap between adjacent
model generations (except SD and SDXL), confirming that the increases
are statistically reliable.

Notably, FLUX-Krea yields \(\text{MMD}^2 = 0.501\) {[}0.490, 0.516{]},
which is reliably \emph{lower} than vanilla FLUX (CIs do not overlap).
This finding is unique to the Monet corpus and was not observed in the
pooled AI-ArtBench analysis, where FLUX and FLUX-Krea showed comparable
distinctiveness. The difference likely reflects FLUX-Krea's design: it
was fine-tuned on curated aesthetic datasets to better capture artistic
styles.\footnote{\emph{See}
  \url{https://www.krea.ai/blog/flux-krea-open-source-release}.} When
the target is a specific artistic tradition (Monet's Impressionism),
this fine-tuning appears to reduce---but not eliminate---distributional
distinctiveness: the fine-tuned model remains highly distinguishable
from Monet's originals with \(\text{MMD}^2\) values exceeding those of
earlier model generations.

This Monet replication provides several important insights. First, the
increasing distinctiveness over model generations is not an artifact of
cross-artist heterogeneity in the human reference set; it persists when
the reference is a single artist's oeuvre. Second, the validated
negative controls (diagonal near-zero) confirm that the method's
specificity holds in a focused setting. Third, the FLUX-Krea finding
suggests that targeted fine-tuning can partially close the gap with
specific artistic traditions---a finding with potential implications for
copyright's substantial similarity analysis.

\subsection{The Perceptual Paradox}\label{the-perceptual-paradox}

Our central finding is that AI models exhibit ``interpolative
distinctiveness'': they produce outputs that are semantically novel yet
perceptually familiar. To probe the nature of this distinctiveness, we
compare MMD scores across three embedding spaces that differ
systematically in their relationship to AI-generated imagery:

\begin{enumerate}
\def\labelenumi{\arabic{enumi}.}
\item
  \textbf{CLIP (ViT-H-14):} A contrastive vision-language model trained
  on dfn5b, a curated dataset of image-text pairs scraped from the web
  prior to the widespread deployment of generative AI. This embedding
  space is \emph{naive} to AI-generated content---it represents a purely
  human-centric semantic topology.
\item
  \textbf{DreamSim:} A perceptual similarity metric explicitly trained
  on synthetic images generated by text-to-image models, then calibrated
  to predict human perceptual similarity judgments
  (\citeproc{ref-fu2023dreamsim}{Fu et al. 2023}). This embedding is
  \emph{aware} of AI-generated imagery but is optimized to align with
  human perception, placing it at an intermediate position between
  semantic and generative representations.
\item
  \textbf{Stable Diffusion VAE:} The variational autoencoder from Stable
  Diffusion, which compresses images into the \(64 \times 64 \times 4\)
  latent space where the diffusion process operates
  (\citeproc{ref-rombach2022high}{Rombach et al. 2022}). This is the
  \emph{native generative space}---the representation in which AI
  outputs are literally constructed. By design, successful generation
  requires AI outputs to occupy the same latent manifold as human art.
\end{enumerate}

These three embeddings form a spectrum from ``AI-naive'' (CLIP) through
``AI-aware but human-calibrated'' (DreamSim) to ``AI-native'' (VAE). If
distinctiveness is a genuine property of AI outputs rather than an
artifact of the measurement space, all three embeddings should detect
significant differences. However, the optimization objective of each
embedding suggests that the \emph{magnitude} should vary systematically.
Because diffusion models operate entirely within their VAE latent
space---and would fail to generate coherent outputs if they strayed from
it---we expect this embedding to show the greatest distributional
overlap (lowest MMD). Conversely, CLIP represents the semantic manifold
of human creation; if AI outputs are indeed
``interpolative''---occupying sparse regions of the creative space
between human concepts---they should appear as outliers to an AI-naive
observer, yielding the highest MMD. DreamSim, by penalizing differences
that humans ignore, should dampen this semantic signal, producing
intermediate distinctiveness.

\begin{figure}[htbp]
\centering
\includegraphics[width=\linewidth]{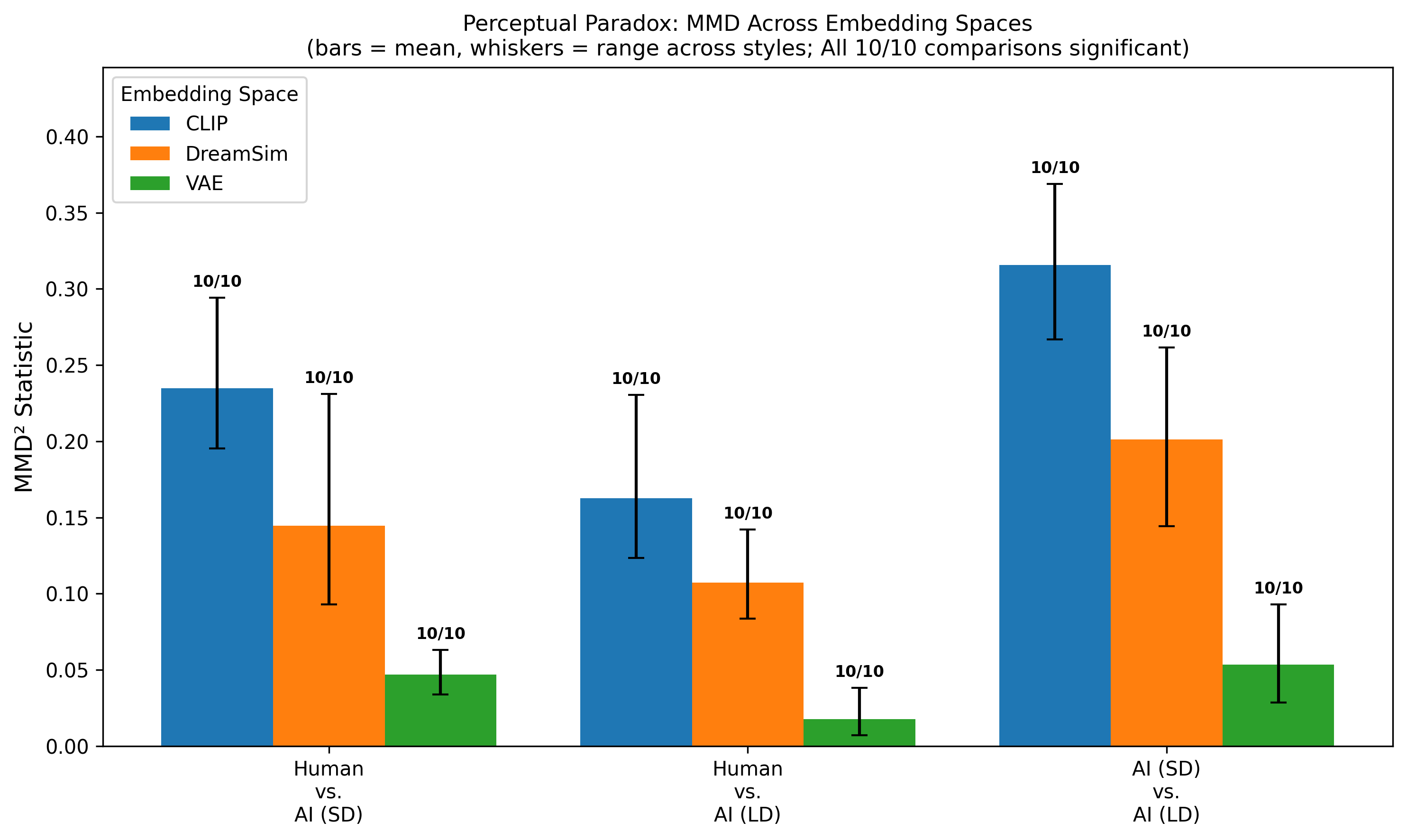}
\caption{MMD Across Embedding Spaces: The Embedding Awareness Spectrum}
\label{fig:art_embedding_comparison}
\begin{minipage}{0.95\linewidth}
\footnotesize
Note: Bars show mean $\text{MMD}^2$ across 10 artistic styles; error bars indicate the range (min–max) across styles. All 90 comparisons (3 embeddings \(\times\) 3 category pairs \(\times\) 10 styles) achieved statistical significance (\(p < 0.01\)), as indicated by the "10/10" labels. 
\end{minipage}
\end{figure}

The empirical results confirm this prediction with striking consistency
(Figure \ref{fig:art_embedding_comparison}). Averaging across all 10
artistic styles, the Human vs.~AI (SD) comparison yields mean
\(\text{MMD}^2 = 0.235\) for CLIP, \(0.145\) for DreamSim, and \(0.047\)
for VAE---a monotonic decrease along the AI-awareness spectrum. The same
ordering holds for Human vs.~AI (LD): CLIP (0.163) \textgreater{}
DreamSim (0.107) \textgreater{} VAE (0.018). Yet despite this variation
in magnitude, \emph{all} 90 comparisons achieve statistical significance
(\(p < 0.01\)), with most reaching \(p = 0.002\)---the minimum possible
with 500 permutations. This uniformity arises because the permutation
test operates \emph{within} each embedding space: it asks whether the
observed distance is extreme relative to chance variation in that same
representation, not whether it exceeds some absolute threshold. The
embedding choice thus affects the \emph{magnitude} of measured
distinctiveness but not whether it is \emph{detectable}. Even in VAE
space, where the generative model is designed to produce outputs that
fit the latent manifold, the distributional signature remains
statistically unambiguous across all styles.

This gradient of distances illuminates the nature of AI creativity. The
fact that distinctiveness is \emph{largest} in CLIP---an embedding
trained exclusively on human-created imagery---suggests that AI outputs
are genuine outliers in the semantic space of human creativity. They
occupy regions that human artists have not densely populated. The fact
that distinctiveness is \emph{smallest} in the VAE---the space where AI
outputs are constructed---reflects the design objective of generative
models: to produce samples that are distributionally similar to the
training data in latent space. DreamSim's intermediate position confirms
that when calibrated to human perception, AI outputs appear more similar
to human art than pure semantic analysis would suggest, yet remain
distinguishable.

For copyright's substantial similarity analysis, this finding cuts both
ways. The robust semantic distinctiveness (CLIP) supports claims of
independent creation---AI outputs are not mere copies but occupy
distinct regions of creative space. Yet the reduced distinctiveness in
perceptually-calibrated space (DreamSim) explains why human observers
struggle to distinguish AI from human art: the features humans attend
to---foreground objects, composition, color, and layout
(\citeproc{ref-fu2023dreamsim}{Fu et al. 2023})---are precisely those
where generative models excel at mimicry. The MMD framework reveals what
human perception misses: a systematic distributional signature that
persists across representational choices.

\section{Discussion}\label{discussion}

Our successful validation across three distinct domains---handwritten
digits, patent abstracts, and AI-generated art---demonstrates the
method's versatility across modalities. Substantively, these findings
provide evidence that generative models do not operate as mere
regurgitators of their training data. By demonstrating that AI-generated
art is statistically distinct in distribution from human-created art, we
identify a phenomenon of ``interpolative distinctiveness'' that is
difficult to reconcile with mechanical regurgitation. If generative
models were merely resampling training data, their output distributions
would be expected to converge to the human baseline; instead, they
diverge significantly. Moreover, the magnitude of this distinctiveness
is probative: it suggests the divergence represents a degree of semantic
separation comparable to the `transformative' leaps recognized between
distinct genres of human creativity.

Crucially, this statistical separation occurs even where human
perception fails: the same AI outputs that are distributionally distinct
are often perceptually indistinguishable from human art. This
conjunction is the crux of the evidence against the regurgitation
hypothesis. Distributional distinctiveness alone does not suffice; a
random noise generator would be statistically distinct from human art
yet devoid of expression. Perceptual indistinguishability alone does not
suffice; it could reflect the successful compression and reproduction of
training data. However, the observation that AI outputs are
simultaneously distributionally novel yet semantically human-like
suggests a mechanism of interpolative creativity. The models are not
merely retrieving existing data points but are recombining learned
patterns to generate outputs that occupy a distinct, yet semantically
coherent, topological region of the creative space.

The shift from item-level to process-level analysis resolves the
``infinite cardinality'' problem inherent in generative AI. While
traditional doctrines rely on pairwise comparisons mediated by human
proxies, such methods cannot scale to unbounded output spaces. By
measuring the distance between distributions, our framework allows
courts and IP offices to evaluate the generative process itself.
Moreover, because the method is training-free and highly
sample-efficient, it aligns with the evidentiary realities of
litigation, enabling rigorous assessment even when proprietary model
weights or massive training sets are inaccessible (Table
\ref{tab:sample_efficiency_summary}).

\begin{table}[htbp]
\centering
\caption{Summary of Data Efficiency: Minimum Sample Sizes for 95\% Statistical Power ($\alpha=0.01$)}
\label{tab:sample_efficiency_summary}
\begin{tabular}{llcc}
\toprule
\textbf{Domain} & \textbf{Comparison Type} & \textbf{Typical $\text{MMD}^2$} & \textbf{$N$ (per group)} \\
\midrule
MNIST (Images) & All digit pairs & $0.81$--$1.22$ & 5--6 \\
\midrule
\multirow{2}{*}{Patents (Text)} & Distinct Fields (C vs. H) & $0.72$ & 7 \\
 & Related Fields (A vs. C) & $0.37$ & 15 \\
\midrule
\multirow{2}{*}{AI Art (Images)} & High Divergence (Expressionism) & $0.88$ & 6 \\
 & Low Divergence (Art Nouveau) & $0.79$ & 6 \\
\bottomrule
\end{tabular}
\begin{minipage}{0.95\linewidth}
\vspace{0.5em}
\footnotesize
Note: $N$ = samples per group to reject the null with probability $>0.95$. Patent sections: A = Human Necessities, C = Chemistry, H = Electricity. MNIST digit pairs all exhibit high distinctiveness; the range reflects the full off-diagonal MMD matrix.
\end{minipage}
\end{table}

This formalization transforms distinctiveness from a subjective gestalt
impression into a form of testable empirical evidence characterized by
known error rates---key factors in admissibility analyses such as
\emph{Daubert}.\footnote{Daubert v. Merrell Dow Pharms., Inc., 509 U.S.
  579, 593--94 (1993) (establishing that the admissibility of scientific
  evidence depends on factors including testability and known error
  rates).} Our framework satisfies these criteria through permutation
testing, which provides exact Type I error control independent of the
underlying data distribution. This allows the fact-finder to select a
significance level (\(\alpha\)) that explicitly reflects the tolerance
for false positives. Rather than relying on opaque intuition, this
parameter allows for a transparent calibration of evidentiary certainty
that can be justified in relation to the applicable standard of proof.
By converting raw semantic distances into probabilistic inferences, the
method integrates computational rigor with the procedural demands of
legal adjudication.

These capabilities offer specific utility across intellectual property
doctrines. In trademark, the method can quantify the distinctiveness of
a brand's visual footprint, providing evidence regarding the distance of
a generative branding process from existing commercial
symbols.\footnote{\emph{See Abercrombie}, 537 F.2d at 9.} In copyright,
it provides objective evidence of ``probabilistic originality,''
supporting the ``independent creation'' prong of the \emph{Feist}
test\footnote{Feist Publ'ns, Inc.~v. Rural Tel. Serv. Co., 499 U.S. 340,
  345 (1991) (holding that originality requires independent creation and
  a ``modicum of creativity'').} by demonstrating that a work is not a
mechanical reproduction of the training data.\footnote{To the extent
  that copyright is extended to non-natural persons.} In patent, it
offers quantitative evidence relevant to non-obviousness, measuring the
``generative distance'' of the claimed generative process from the prior
art.\footnote{\emph{See Graham}, 383 U.S. at 17.} Here, we emphasize
that this metric operationalizes the \emph{empirical} inquiry---the
magnitude of difference---not the \emph{normative} legal conclusion.
While MMD provides the factual predicate, the selection of the relevant
reference class, the determination of the requisite degree of
difference, amongst other issues, remain matters of legal argumentation.
Nothing in the method requires style-level grouping; in a concrete
copyright dispute, the relevant reference class may be as narrow as a
plaintiff's catalog or an artist-period corpus, and our framework
applies at that granularity (see Section 6.6 and Table 8).

This distinction between item-level and process-level distinctiveness is
critical for remedies. A generative process that is distributionally
distinct (high MMD) but produces rare instances of memorization presents
a different legal harm than a process that is distributionally
indistinguishable from its training data. The former suggests a tool
with specific defects---analogous to a printing press that occasionally
produces a defective copy---counseling toward targeted damages for the
specific infringing outputs. The latter suggests a market substitute
that systematically competes with the original works, supporting broader
injunctive relief or disgorgement. Thus, while the memorization audit
identifies \emph{whether} infringement occurs, the MMD magnitude informs
the \emph{extent} and \emph{nature} of the remedy. This bifurcated
framework aligns with copyright's remedial structure, which
distinguishes between actual damages (tied to specific harm) and
statutory damages or profits (tied to systemic conduct).

We acknowledge important limitations. The method's sensitivity depends
on embedding quality; while CLIP is robust for visual art, other domains
may require specialized feature extractors. Furthermore, the validity of
the inference relies on representative sampling. In adversarial
litigation, parties may attempt to ``cherry-pick'' prompts or outputs to
skew the distribution; consequently, the utility of this metric in court
depends on the enforcement of rigorous sampling protocols during
discovery to ensure the analyzed portfolios accurately reflect the
generative process. Similarly, embeddings such as CLIP inherit biases
from their training corpora. CLIP was trained predominantly on Western
internet imagery, potentially underrepresenting artistic traditions from
other cultures. While our Perceptual Paradox analysis (Section 6.7)
demonstrates that conclusions are robust across three architecturally
distinct embeddings---CLIP, DreamSim, and the Stable Diffusion
VAE---this does not eliminate concerns about cultural representation.
Researchers applying this framework should select embeddings that
counter biases relevant to their application context; when cultural
representation is paramount, embeddings trained on more diverse corpora
may be appropriate. We recommend triangulating across multiple
complementary embeddings to ensure robustness. Finally, while a
significant MMD score establishes that a model's dominant mode is not
regurgitation, it is a macro-level metric. It does not---and is not
designed to---rule out the possibility that specific individual outputs
may reproduce training examples. Therefore, we advocate for a bifurcated
evidentiary framework: MMD to adjudicate the nature of the generative
process, and targeted audits to detect specific instances of potential
infringement.

These limitations suggest directions for future research. Establishing
domain-specific MMD thresholds that correspond to specific legal
standards of proof would enhance practical applicability. As human-AI
collaboration becomes the norm, methods should be developed to
disentangle machine contributions from human creative input, as the
current method evaluates only the composite output. Our rejection of the
regurgitation hypothesis relies on the assumption that the human
artworks in our dataset are representative of the relevant prior art and
the employed embeddings do not yield artefacts correlated with the
central inquiry. Future work should investigate the extent to which
various biases, such as cultural and representational biases, affect MMD
statistics, and develop principled strategies for embedding selection
and bias mitigation in legal applications. Comparing AI outputs directly
to other examples of training data and broader samples of human
creativity would help confirm the robustness of our key findings of a
perceptual paradox in AI outputs.

As the provenance of creative works grows increasingly uncertain,
intellectual property law must evolve from intuition-based tests toward
principled quantitative frameworks. By grounding assessments in
distributional evidence, our methodology offers a step toward ensuring
that the law reflects technological realities while preserving its
fundamental purpose of incentivizing innovation and creative expression.

\newpage

\section*{Bibliography}\label{bibliography}
\addcontentsline{toc}{section}{Bibliography}

\singlespacing

\phantomsection\label{refs}
\begin{CSLReferences}{1}{1}
\bibitem[\citeproctext]{ref-adarsh2024automating}
Adarsh S, Ash E, Bechtold S, et al (2024) Automating {Abercrombie}:
Machine-learning trademark distinctiveness. Journal of Empirical Legal
Studies 21:826--861. \url{https://doi.org/10.1111/jels.12398}

\bibitem[\citeproctext]{ref-bender2021dangers}
Bender EM, Gebru T, McMillan-Major A, Shmitchell S (2021)
\href{https://doi.org/10.1145/3442188.3445922}{On the dangers of
stochastic parrots: Can language models be too big?} In: Proceedings of
the 2021 ACM conference on fairness, accountability, and transparency.
pp 610--623

\bibitem[\citeproctext]{ref-berlinet2011reproducing}
Berlinet A, Thomas-Agnan C (2004)
\href{https://doi.org/10.1007/978-1-4419-9096-9}{Reproducing kernel
hilbert spaces in probability and statistics}. Springer, New York, NY

\bibitem[\citeproctext]{ref-binkowski2018demystifying}
Bińkowski M, Sutherland DJ, Arbel M, Gretton A (2018) Demystifying {MMD}
{GAN}s. In: International conference on learning representations (ICLR)

\bibitem[\citeproctext]{ref-bridy2012coding}
Bridy A (2012) Coding creativity: Copyright and the artificially
intelligent author. Stanford Technology Law Review 2012:1--28

\bibitem[\citeproctext]{ref-bubeck2023sparks}
Bubeck S, Chandrasekaran V, Eldan R, et al (2023) Sparks of artificial
general intelligence: Early experiments with {GPT}-4. arXiv preprint
arXiv:230312712. \url{https://doi.org/10.48550/arXiv.2303.12712}

\bibitem[\citeproctext]{ref-byron2006tying}
Byron TM (2006) Tying up {Feist}'s loose ends: A probability theory of
copyrightable creativity. Wake Forest Intellectual Property Law Journal
7:45--95

\bibitem[\citeproctext]{ref-carlini2023quantifying}
Carlini N, Ippolito D, Jagielski M, et al (2023)
\href{https://openreview.net/forum?id=TatRHT_1cK}{Quantifying
memorization across neural language models}. In: The eleventh
international conference on learning representations (ICLR)

\bibitem[\citeproctext]{ref-chalkidis2019deep}
Chalkidis I, Kampas D (2019) Deep learning in law: Early adaptation and
legal word embeddings trained on large corpora. Artificial Intelligence
and Law 27:171--198. \url{https://doi.org/10.1007/s10506-018-9238-9}

\bibitem[\citeproctext]{ref-chang2023speak}
Chang K, Cramer M, Soni S, Bamman D (2023)
\href{https://doi.org/10.18653/v1/2023.emnlp-main.453}{Speak, memory: An
archaeology of books known to {ChatGPT}/{GPT}-4}. In: Proceedings of the
2023 conference on empirical methods in natural language processing.
Association for Computational Linguistics, pp 7312--7327

\bibitem[\citeproctext]{ref-chesterman2025good}
Chesterman S (2025) Good models borrow, great models steal: Intellectual
property rights and generative {AI}. Policy and Society 44:23--37.
\url{https://doi.org/10.1093/polsoc/puae006}

\bibitem[\citeproctext]{ref-chibaokabe2024tackling}
Chiba-Okabe H, Su WJ (2024) Tackling {GenAI} copyright issues:
Originality estimation and genericization. arXiv preprint
arXiv:240603341. \url{https://doi.org/10.48550/arXiv.2406.03341}

\bibitem[\citeproctext]{ref-diakopoulos2023memorized}
Diakopoulos N (2023)
\href{https://generative-ai-newsroom.com/finding-evidence-of-memorized-news-content-in-gpt-models-d11a73576d2}{Finding
evidence of memorized news content in {GPT} models}. Generative AI in
the Newsroom

\bibitem[\citeproctext]{ref-fu2023dreamsim}
Fu S, Tamir N, Sundaram S, et al (2023) Dreamsim: Learning new
dimensions of human visual similarity using synthetic data. arXiv
preprint arXiv:230609344.
\url{https://doi.org/10.48550/arXiv.2306.09344}

\bibitem[\citeproctext]{ref-ginsburg2019authors}
Ginsburg JC, Budiardjo LA (2019) Authors and machines. Berkeley
Technology Law Journal 34:343. \url{https://doi.org/10.15779/Z38SF2MC24}

\bibitem[\citeproctext]{ref-goodfellow2014generative}
Goodfellow I, Pouget-Abadie J, Mirza M, et al (2014) Generative
adversarial nets. In: Advances in neural information processing systems.
pp 2672--2680

\bibitem[\citeproctext]{ref-gretton2012kernel}
Gretton A, Borgwardt KM, Rasch MJ, et al (2012) A kernel two-sample
test. Journal of Machine Learning Research 13:723--773

\bibitem[\citeproctext]{ref-hain2022text}
Hain DS, Jurowetzki R, Buchmann T, Wolf P (2022) A text-embedding-based
approach to measuring patent-to-patent technological similarity.
Technological Forecasting and Social Change 177:121559.
\url{https://doi.org/10.1016/j.techfore.2022.121559}

\bibitem[\citeproctext]{ref-helmers2019automating}
Helmers L, Horn F, Biegler F, et al (2019) Automating the search for a
patent's prior art with a full text similarity search. PLOS ONE
14:e0212103. \url{https://doi.org/10.1371/journal.pone.0212103}

\bibitem[\citeproctext]{ref-heusel2017gans}
Heusel M, Ramsauer H, Unterthiner T, et al (2017)
\href{https://proceedings.neurips.cc/paper/2017/file/8a1d694707eb0fefe65871369074926d-Paper.pdf}{GANs
trained by a two time-scale update rule converge to a local nash
equilibrium}. In: Advances in neural information processing systems

\bibitem[\citeproctext]{ref-Ji2023}
Ji Z, Lee N, Frieske R, et al (2023) Survey of hallucination in natural
language generation. ACM Computing Surveys 55:1--38 (Article 248).
\url{https://doi.org/10.1145/3571730}

\bibitem[\citeproctext]{ref-kim2022mutual}
Kim J-H, Kim Y, Lee J, et al (2022) Mutual information divergence: A
unified metric for multimodal generative models. In: Advances in neural
information processing systems (NeurIPS). pp 35072--35086

\bibitem[\citeproctext]{ref-lake2023human}
Lake BM, Baroni M (2023) Human-like systematic generalization through a
meta-learning neural network. Nature 623:115--121

\bibitem[\citeproctext]{ref-lecun1998gradient}
LeCun Y, Bottou L, Bengio Y, Haffner P (1998) Gradient-based learning
applied to document recognition. Proceedings of the IEEE 86:2278--2324.
\url{https://doi.org/10.1109/5.726791}

\bibitem[\citeproctext]{ref-lemley2023generative}
Lemley MA (2023) How generative {AI} turns copyright law on its head.
SSRN Electronic Journal. \url{https://doi.org/10.2139/ssrn.4517702}

\bibitem[\citeproctext]{ref-liao2022artbench}
Liao P, Li X, Liu X, Keutzer K (2022) The ArtBench dataset: Benchmarking
generative models with artworks. arXiv preprint arXiv:220611404.
\url{https://doi.org/10.48550/arXiv.2206.11404}

\bibitem[\citeproctext]{ref-lin2024evaluating}
Lin E, Peng Z, Fang Y (2024) Evaluating and enhancing large language
models for novelty assessment in scholarly publications. arXiv preprint
arXiv:240916605. \url{https://doi.org/10.48550/arXiv.2409.16605}

\bibitem[\citeproctext]{ref-lin2023measuring}
Lin W, Yu W, Xiao R (2023) Measuring patent similarity based on text
mining and image recognition. Systems 11:294.
\url{https://doi.org/10.3390/systems11060294}

\bibitem[\citeproctext]{ref-mccoy2023much}
McCoy RT, Smolensky P, Çelikyilmaz A, et al (2023) How much do language
models copy from their training data? Evaluating linguistic novelty in
text generation using {RAVEN}. Transactions of the Association for
Computational Linguistics 11:652--670.
\url{https://doi.org/10.1162/tacl_a_00567}

\bibitem[\citeproctext]{ref-mcinnes2018umap}
McInnes L, Healy J, Melville J (2018) {UMAP}: Uniform manifold
approximation and projection for dimension reduction. arXiv preprint
arXiv:180203426. \url{https://doi.org/10.48550/arXiv.1802.03426}

\bibitem[\citeproctext]{ref-mikolov2013efficient}
Mikolov T, Chen K, Corrado G, Dean J (2013) Efficient estimation of word
representations in vector space. arXiv preprint arXiv:13013781.
\url{https://doi.org/10.48550/arXiv.1301.3781}

\bibitem[\citeproctext]{ref-muandet2017kernel}
Muandet K, Fukumizu K, Sriperumbudur B, Schölkopf B (2017) Kernel mean
embedding of distributions: A review and beyond. Foundations and
Trends{\textregistered} in Machine Learning 10:1--141.
\url{https://doi.org/10.1561/2200000060}

\bibitem[\citeproctext]{ref-mukherjee2023managing}
Mukherjee A, Chang HH (2023)
\href{https://cmr.berkeley.edu/2023/07/managing-the-creative-frontier-of-generative-ai-the-novelty-usefulness-tradeoff/}{Managing
the creative frontier of generative {AI}: The novelty--usefulness
tradeoff}. California Management Review Insights

\bibitem[\citeproctext]{ref-naeem2020reliable}
Naeem MF, Oh SJ, Uh Y, et al (2020) Reliable fidelity and diversity
metrics for generative models. In: Proceedings of the 37th international
conference on machine learning. PMLR, pp 7176--7185

\bibitem[\citeproctext]{ref-carlini2023extracting}
Nasr M, Carlini N, Hayase J, et al (2023) Scalable extraction of
training data from (production) language models. arXiv preprint
arXiv:231117035. \url{https://doi.org/10.48550/arXiv.2311.17035}

\bibitem[\citeproctext]{ref-phipson2010permutation}
Phipson B, Smyth GK (2010) Permutation p-values should never be zero:
Calculating exact p-values when permutations are randomly drawn.
Statistical Applications in Genetics and Molecular Biology 9:Article 39.
\url{https://doi.org/10.2202/1544-6115.1585}

\bibitem[\citeproctext]{ref-radford2021learning}
Radford A, Kim JW, Hallacy C, et al (2021) Learning transferable visual
models from natural language supervision. In: Proceedings of the 38th
international conference on machine learning. PMLR, pp 8748--8763

\bibitem[\citeproctext]{ref-roca2025good}
Roca T, Roman AC, Vega JT, et al (2025)
\href{https://doi.org/10.48550/arXiv.2507.18640}{How good are humans at
detecting AI-generated images? Learnings from an experiment}

\bibitem[\citeproctext]{ref-rombach2022high}
Rombach R, Blattmann A, Lorenz D, et al (2022)
\href{https://doi.org/10.1109/CVPR52688.2022.01042}{High-resolution
image synthesis with latent diffusion models}. In: Proceedings of the
IEEE/CVF conference on computer vision and pattern recognition. pp
10684--10695

\bibitem[\citeproctext]{ref-vsavelka2022legal}
Šavelka J, Ashley KD (2022) Legal information retrieval for
understanding statutory terms. Artificial Intelligence and Law
30:245--289

\bibitem[\citeproctext]{ref-shawe2004kernel}
Shawe-Taylor J, Cristianini N (2004)
\href{https://doi.org/10.1017/CBO9780511809682}{Kernel methods for
pattern analysis}. Cambridge University Press

\bibitem[\citeproctext]{ref-shibayama2021measuring}
Shibayama S, Yin D, Matsumoto K (2021) Measuring novelty in science with
word embedding. PLOS ONE 16:e0254034.
\url{https://doi.org/10.1371/journal.pone.0254034}

\bibitem[\citeproctext]{ref-silva2024artbrain}
Silva RSR, Lotfi A, Ihianle IK, et al (2024) {ArtBrain}: An explainable
end-to-end toolkit for classification and attribution of {AI}-generated
art and style. arXiv preprint arXiv:241201512.
\url{https://doi.org/10.48550/arXiv.2412.01512}

\bibitem[\citeproctext]{ref-solatorio2024gistembed}
Solatorio AV (2024)
\href{https://doi.org/10.48550/arXiv.2402.16829}{{GISTEmbed}: Guided
in-sample selection of training negatives for text embedding
fine-tuning}

\bibitem[\citeproctext]{ref-somepalli2023diffusion}
Somepalli G, Singla V, Goldblum M, et al (2023) Diffusion art or digital
forgery? Investigating data replication in diffusion models. In:
Proceedings of the IEEE/CVF conference on computer vision and pattern
recognition. pp 6048--6058

\bibitem[\citeproctext]{ref-sriperumbudur2010hilbert}
Sriperumbudur BK, Gretton A, Fukumizu K, et al (2010) Hilbert space
embeddings and metrics on probability measures. Journal of Machine
Learning Research 11:1517--1561

\bibitem[\citeproctext]{ref-stammbach2021docscan}
Stammbach D, Ash E (2021) Docscan: Unsupervised text classification via
learning from neighbors. arXiv preprint arXiv:210504024.
\url{https://doi.org/10.48550/arXiv.2105.04024}

\bibitem[\citeproctext]{ref-steinwart2008support}
Steinwart I, Christmann A (2008)
\href{https://doi.org/10.1007/978-0-387-77242-4}{Support vector
machines}. Springer Science \& Business Media

\bibitem[\citeproctext]{ref-vermont2012sine}
Vermont S (2012) The sine qua non of copyright is uniqueness, not
originality. Texas Intellectual Property Law Journal 20:327--386

\bibitem[\citeproctext]{ref-wang2004image}
Wang Z, Bovik AC, Sheikh HR, Simoncelli EP (2004) Image quality
assessment: From error visibility to structural similarity. IEEE
Transactions on Image Processing 13:600--612.
\url{https://doi.org/10.1109/TIP.2003.819861}

\bibitem[\citeproctext]{ref-wang2025distributed}
Wang Z, Farnia F, Lin Z, et al (2025) On the distributed evaluation of
generative models. In: Proceedings of the IEEE/CVF international
conference on computer vision (ICCV) workshops. pp 7644--7653

\bibitem[\citeproctext]{ref-westermann2020sentence}
Westermann H, Šavelka J, Walker VR, et al (2020)
\href{https://doi.org/10.3233/FAIA200860}{Sentence embeddings and
high-speed similarity search for fast computer assisted annotation of
legal documents}. In: Legal knowledge and information systems (JURIX
2020). IOS Press, pp 164--173

\bibitem[\citeproctext]{ref-xu2025labelfree}
Xu H, Ashley KD (2025)
\href{https://doi.org/10.18653/v1/2025.nllp-1.8}{Label-free
distinctiveness: Building a continuous trademark scale via synthetic
anchors}. In: Proceedings of the natural legal language processing
workshop 2025. pp 113--124

\bibitem[\citeproctext]{ref-zhang2024interpretable}
Zhang J, Li CT, Farnia F (2024) An interpretable evaluation of
entropy-based novelty of generative models. arXiv preprint
arXiv:240217287. \url{https://doi.org/10.48550/arXiv.2402.17287}

\bibitem[\citeproctext]{ref-zhang2018unreasonable}
Zhang R, Isola P, Efros AA, et al (2018)
\href{https://doi.org/10.1109/CVPR.2018.00068}{The unreasonable
effectiveness of deep features as a perceptual metric}. In: Proceedings
of the IEEE conference on computer vision and pattern recognition
(CVPR). pp 586--595

\end{CSLReferences}

\newpage

\setcounter{section}{0}
\renewcommand{\thesection}{\Alph{section}}
\renewcommand{\thesubsection}{\thesection.\arabic{subsection}}
\renewcommand{\thesubsubsection}{\thesubsection.\arabic{subsubsection}}
\renewcommand{\theparagraph}{\thesubsubsection.\arabic{paragraph}}
\renewcommand{\thesubparagraph}{\theparagraph.\arabic{subparagraph}}
\renewcommand{\thetable}{A\arabic{table}}
\renewcommand{\thefigure}{A\arabic{figure}}

\section{Appendix: Implementation
Details}\label{appendix-implementation-details}

This appendix documents a Python implementation of the distributional
distinctiveness framework developed in this paper. The implementation is
designed for reproducibility, extensibility, and clarity. It can serve
as a foundation for researchers adapting the methodology to new domains
(e.g., trademark visual distinctiveness, literary style analysis, music
generation) or practitioners deploying it in legal contexts.

\textbf{Repository:} During peer review, the complete codebase is
available at: \url{https://osf.io/\%5BINSERT_PLACEHOLDER_LINK\%5D}. Upon
acceptance, this will be permanently archived with a DOI.

\subsection{Architecture Overview}\label{architecture-overview}

The codebase follows a modular architecture with a domain-agnostic
statistical engine (domain-agnostic) separated from domain-specific
pipelines (data loading, embedding extraction, visualization):

\begin{verbatim}
Module 1: Shared Statistical Core
    ├── MMD computation (unbiased estimator)
    ├── Permutation test (non-parametric significance)
    └── Ablation engines (kernel, dimensionality, bandwidth, stability)

Module 2: MNIST Pipeline (validation)
Module 3: Patent Pipeline (text domain)
Module 4: AI Art Pipeline (image domain)
    ├── Category analysis (per-style MMD)
    ├── Evolution analysis (model generation trajectory)
    ├── Perceptual paradox (CLIP vs DreamSim vs VAE)
    ├── Memorization audit (item-level nearest-neighbor analysis)
    └── Robustness suite (ablations, perturbations)
Module 5: Exposition Summaries (formatted output)
Module 6: Main Execution (orchestration, configuration)
\end{verbatim}

Extending to a new domain requires: (1) a data loader returning samples
with labels, (2) an embedding extractor producing numerical vectors, and
(3) calls to the shared statistical functions. The core MMD and
permutation test logic remains unchanged.

\subsection{Dependencies}\label{dependencies}

\begin{verbatim}
# Core (required)
numpy, scipy, scikit-learn, umap-learn, matplotlib, seaborn, tqdm, joblib, pillow

# Deep Learning
torch              # All studies (LeNet for MNIST, CLIP for AI Art)
torchvision        # Data augmentation transforms, MNIST data loading
open_clip_torch    # CLIP embeddings for images
sentence_transformers  # SBERT embeddings for text

# Data
datasets           # Hugging Face datasets (Patent study)

# Optional (for extended AI Art analyses)
diffusers          # VAE latent extraction
dreamsim           # Perceptual embeddings
scikit-image       # SSIM computation for memorization audit
lpips              # Learned perceptual similarity for memorization audit

# Performance
numexpr            # Fast array expression evaluation
psutil             # Memory checking for precomputation decisions
\end{verbatim}

\subsection{Module 1: Shared Statistical
Core}\label{module-1-shared-statistical-core}

This module implements the theoretical framework from Section 3. All
functions operate on numerical vectors (embeddings) and are agnostic to
the underlying data type.

\subsubsection{Core MMD Functions}\label{core-mmd-functions}

\begin{itemize}
\tightlist
\item
  \texttt{\_compute\_sigma\_median\_heuristic(x,\ y)\ →\ float}:
  Computes the bandwidth parameter \(\sigma\) for the RBF kernel using
  the median heuristic: \(\sigma\) = median of all pairwise Euclidean
  distances in the combined sample. This data-driven approach
  automatically scales the kernel's sensitivity to the data's intrinsic
  dimensionality, avoiding manual tuning. The median heuristic is
  preferred over cross-validation because it is (a) deterministic, (b)
  computationally cheap, and (c) provides reasonable performance across
  diverse datasets without risk of overfitting to a specific comparison.
\item
  \texttt{mmd\_squared\_unbiased(x,\ y,\ kernel=\textquotesingle{}rbf\textquotesingle{},\ gamma=None)\ →\ float}:
  Computes the unbiased squared MMD estimator that has expected value
  zero under the null hypothesis regardless of sample size, enabling
  valid comparison across different n.~Uses
  \texttt{scipy.spatial.distance.pdist}/\texttt{cdist} for efficient
  pairwise distance computation (computing only unique pairs) combined
  with \texttt{numexpr} for fast kernel evaluation. Supports both RBF
  kernel (default, captures all moments) and linear kernel (captures
  only mean difference).
\item
  \texttt{permutation\_test(x,\ y,\ R,\ alpha,\ ...)\ →\ (p\_value,\ reject,\ lb,\ ub)}:
  Implements Algorithm 1, the non-parametric permutation test for MMD
  significance. Under \(H_0\) (identical distributions), pooling and
  reshuffling samples should produce MMD values similar to the observed
  value; if the observed MMD is extreme relative to the permutation
  distribution, we reject \(H_0\). \texttt{R} is then number of
  permutation iterations. With R=500, minimum attainable p-value is
  1/501 \(\sim\) 0.002, sufficient for \(\alpha\)=0.01. If
  \texttt{precompute} is True, the function computes the full kernel
  matrix once before permutations. This dramatically improves
  performance for large samples but requires O(n²) memory. The function
  automatically falls back to on-the-fly computation if memory is
  insufficient. \texttt{n\_jobs} parallelizes permutation iterations
  using \texttt{joblib} with thread-based backend (avoids memory copying
  overhead). The function uses the conservative Monte Carlo estimator
  (r+1)/(R+1) where r is the count of permuted statistics \(\geq\)
  observed. This avoids zero \emph{p}-values and is the standard
  approach in permutation testing literature.
\end{itemize}

\subsubsection{Shared Ablation
Functions}\label{shared-ablation-functions}

These functions implement domain-agnostic robustness analyses, ensuring
methodological consistency across studies.

\begin{itemize}
\tightlist
\item
  \texttt{compute\_stability\_analysis(x,\ y,\ dims\_list,\ n\_trials,\ sample\_size)\ →\ dict}:
  Implements Section 4.3: evaluates how UMAP compression error
  propagates to the MMD statistic. For each target dimension d, computes
  MMD in both full and reduced spaces across multiple trials, returning
  the mean and standard deviation of the absolute deviation. UMAP is fit
  on the pooled sample (X \(\cup\) Y), not separately on each group.
  Fitting separately would artificially inflate differences by allowing
  each group to find its own optimal subspace.
\item
  \texttt{compute\_kernel\_ablation(x,\ y,\ sample\_sizes,\ n\_trials,\ R,\ alpha,\ n\_jobs)\ →\ dict}:
  Compares RBF vs.~Linear kernel rejection rates. The Linear kernel MMD
  measures only centroid distance; the RBF kernel captures higher-order
  distributional structure. If both kernels reject at similar rates,
  distinctiveness is driven by mean shift; if RBF substantially
  outperforms Linear, the distributions differ in variance or shape.
\item
  \texttt{compute\_dimensionality\_ablation(x,\ y,\ dims\_list,\ sample\_sizes,\ n\_trials,\ R,\ alpha,\ n\_jobs)\ →\ dict}:
  Tests whether the statistical conclusion (reject/fail-to-reject) is
  stable under aggressive dimensionality reduction. Uses paired
  sampling: for each (n, trial), the same sample is tested across all
  dimensions, enabling clean comparison. UMAP requires the target
  dimension to be strictly less than the pooled sample size
  (\(d < 2n\)). When this constraint is violated, the function returns
  NaN rather than failing, allowing partial results. For example,
  reducing to \(d=32\) dimensions requires at least \(n=17\) samples per
  group (34 total).
\item
  \texttt{compute\_bandwidth\_ablation(x,\ y,\ sigma\_multipliers,\ sample\_sizes,\ ...)\ →\ dict}:
  Tests robustness to the bandwidth hyperparameter by scaling the
  median-heuristic \(\sigma\) by specified multipliers (e.g., 0.5×,
  1.0×, 2.0×). A ``plateau of significance'' (high rejection across all
  multipliers) indicates the result is not an artifact of specific
  tuning. The baseline \(\sigma\) is estimated from a larger fixed
  sample (default n=200) independent of the test sample size, reducing
  noise in the bandwidth estimate.
\end{itemize}

\subsection{Module 2: MNIST Study}\label{module-2-mnist-study}

Validates the framework using MNIST digits where ground truth is known
(e.g., digit 3 \(\neq\) digit 5). For robustness analyses, the digit
pair with the lowest off-diagonal \(\text{MMD}^2\) (hardest to
distinguish) is automatically selected at runtime, providing the most
conservative test. This module demonstrates the methodology in a
controlled setting before applying it to legally relevant domains. The
implementation uses PyTorch for neural network training and embedding
extraction.

\subsubsection{Model Architecture}\label{model-architecture}

The \texttt{LeNet5} class implements the classic convolutional
architecture as a PyTorch \texttt{nn.Module}:

\begin{verbatim}
Input (1×28×28) → Conv(6, 5×5) → AvgPool(2×2) → Conv(16, 5×5) → AvgPool(2×2) →
FC(256→120) → Dropout(0.1) → FC(120→84) [embedding] → Dropout(0.1) → FC(84→10)
\end{verbatim}

The 84-dimensional embedding layer (fc2) serves as the feature
representation for MMD analysis. Embeddings are extracted via the
model's \texttt{forward(x,\ extract\_embeddings=True)} method, which
returns activations before the final classification layer.

\subsubsection{Data Augmentation}\label{data-augmentation}

The \texttt{AugmentedDataset} class wraps PyTorch's \texttt{Dataset} to
apply transforms per-sample rather than to batched tensors. This is
necessary because torchvision applies identical random transforms to all
images in a batch; per-sample application ensures independent
randomization. Augmentation uses a single \texttt{RandomAffine}
transform combining rotation (±10°), translation (±10\%), and scaling
(0.9×--1.1×) in one interpolation step, reducing artifacts compared to
chained transforms.

\subsubsection{Core Functions}\label{core-functions}

\begin{itemize}
\tightlist
\item
  \texttt{mnist\_load\_and\_prepare\_data()}: Loads MNIST via
  torchvision, normalizes pixels to {[}0,1{]}, reshapes for CNN input.
\item
  \texttt{mnist\_build\_lenet5\_model()}: Factory function returning a
  LeNet5 \texttt{nn.Module} instance with 84-dimensional embedding
  layer.
\item
  \texttt{mnist\_train\_model(...)}: Trains with data augmentation,
  early stopping, and checkpointing. Uses Adam optimizer with
  cross-entropy loss. Automatic device selection prioritizes CUDA
  (NVIDIA GPU), then MPS (Apple Silicon), then CPU.
\item
  \texttt{mnist\_extract\_embeddings(...)}: Extracts 84-dim vectors
  (float32 precision for GPU/MPS compatibility) from the trained model's
  penultimate layer with the model in evaluation mode (dropout
  disabled).
\item
  \texttt{mnist\_compute\_mmd\_matrix(...)}: Computes 10×10 MMD matrix
  for all digit pairs. Diagonal entries use split-half negative control:
  each digit class is split into two disjoint halves, and MMD is
  computed between them. Under \(H_0\), these should yield
  non-significant results.
\item
  \texttt{mnist\_compute\_rejection\_rates(...)}: Estimates statistical
  power across sample sizes via Monte Carlo trials.
\end{itemize}

\subsubsection{Visualization}\label{visualization}

\begin{itemize}
\tightlist
\item
  \texttt{mnist\_plot\_mmd\_heatmap(...)}: Heatmap of MMD values with
  significance markers. Negative values (possible due to unbiased
  estimator variance) are clamped to zero for visualization only.
\item
  \texttt{mnist\_plot\_rejection\_rates(...)}: Line plot of rejection
  rate vs.~sample size.
\item
  \texttt{mnist\_plot\_stability\_curve(...)}: Plots approximation error
  vs.~UMAP dimensions (Section 4.3).
\end{itemize}

\subsubsection{Robustness Suite}\label{robustness-suite}

\begin{itemize}
\tightlist
\item
  \texttt{mnist\_run\_robustness\_suite(...)}: Orchestrates all
  ablations (stability, kernel, dimensionality, representation,
  bandwidth, perturbation) and saves results. Generates
  \texttt{mnist\_perturbation\_results.pkl} (serialized perturbation
  data) and \texttt{mnist\_ablation\_tables.txt} (formatted tables for
  all ablation results).
\item
  \texttt{mnist\_compute\_representation\_ablation(...)}: Compares raw
  784-dim pixels vs.~learned 84-dim embeddings. For structurally simple
  objects like digits, raw pixels should work comparably---validating
  that MMD functions correctly even without sophisticated embeddings.
\item
  \texttt{mnist\_perturbation\_analysis(...)}: Tests robustness to
  Gaussian noise (parameterized by SNR) and grid watermarks
  (parameterized by SWR) using a paired sample design (the same images
  are used for clean and perturbed conditions, projected jointly via
  UMAP).
\item
  \texttt{mnist\_perturbation\_table\_combined(...)}: Generates a
  consolidated perturbation table showing \(\Delta\)MMD and
  \emph{p}-values across multiple digits (used in manuscript Table 2).
\item
  \texttt{mnist\_save\_ablation\_tables(...)}: Prints and saves
  formatted tables for all ablation results (kernel, bandwidth,
  dimensionality, representation, perturbation). Tables are both printed
  to stdout for monitoring and saved to
  \texttt{mnist\_ablation\_tables.txt} for reference during manuscript
  writing.
\end{itemize}

\subsection{Module 3: Patent Study}\label{module-3-patent-study}

Validates the framework on text using patent abstracts from different
IPC sections. Demonstrates that the methodology generalizes beyond
images to semantic text embeddings.

\subsubsection{Core Functions}\label{core-functions-1}

\begin{itemize}
\tightlist
\item
  \texttt{patent\_load\_dataset(...)}: Loads from
  \texttt{ccdv/patent-classification} via Hugging Face. Filters for
  sections A (Human Necessities), C (Chemistry), H (Electricity). Cleans
  text using regex to remove explicit class labels that could cause
  leakage. Collects 2× samples to enable split-half negative controls.
\item
  \texttt{patent\_extract\_embeddings(...)}: Extracts 384-dim embeddings
  using SentenceTransformers (\texttt{GIST-small-Embedding-v0}, with
  \texttt{all-MiniLM-L6-v2} as fallback). Embeddings are L2-normalized
  for RBF kernel stability.
\item
  \texttt{patent\_compute\_mmd\_matrix(...)} and
  \texttt{patent\_compute\_rejection\_rates(...)}: Standard MMD analysis
  with split-half negative controls on diagonal.
\end{itemize}

\subsubsection{Visualization and Output}\label{visualization-and-output}

\begin{itemize}
\tightlist
\item
  \texttt{patent\_plot\_mmd\_heatmap(...)} and
  \texttt{patent\_plot\_rejection\_rates(...)}: Analogous to MNIST
  visualizations.
\item
  \texttt{patent\_save\_summary\_tables(...)}: Prints and saves
  formatted summary tables (MMD matrix, \emph{p}-values, rejection
  rates) to \texttt{patent\_summary\_tables.txt}.
\end{itemize}

\subsection{Module 4: AI Art Study}\label{module-4-ai-art-study}

The primary legal application: comparing human-created art to
AI-generated art across multiple models and artistic styles.

\subsubsection{Data Loading and
Embedding}\label{data-loading-and-embedding}

Internal helper functions (\texttt{\_slugify},
\texttt{\_extract\_style\_from\_original\_class},
\texttt{\_filter\_embeddings\_by\_style}) handle filename normalization
and style-based filtering.

\begin{itemize}
\tightlist
\item
  \texttt{art\_load\_dataset(...)}: Loads images from an
  AI-ArtBench-like directory structure with \emph{stratified sampling}
  by artistic style. The \texttt{max\_images\_per\_style} parameter
  (default 250) ensures equal representation of each style within each
  category, yielding 2,500 images per category (250 × 10 styles). The
  \texttt{categories\_map} parameter maps target categories to folder
  prefixes. Supports optional manifest presence check for evolution
  analysis.
\item
  \texttt{art\_extract\_clip\_embeddings(...)}: Extracts 1024-dim
  semantic embeddings using CLIP (ViT-H-14-quickgelu, dfn5b pretrained).
  CLIP captures high-level semantic and stylistic features suitable for
  art comparison.
\end{itemize}

\subsubsection{Category-Level Analysis (Primary
Pipeline)}\label{category-level-analysis-primary-pipeline}

The primary AI Art analysis operates at the style level, computing Human
vs.~AI distinctiveness within each artistic movement:

\begin{itemize}
\tightlist
\item
  \texttt{art\_compute\_category\_mmd\_analysis(...)}: Iterates through
  styles, computes 3×3 MMD matrix (Human × AI(SD) × AI(LD)) per style.
\item
  \texttt{art\_compute\_category\_rejection\_rates(...)}: Computes
  per-style rejection rate curves.
\item
  \texttt{art\_compute\_category\_summary(...)}: Ranks styles by MMD
  (convergence ranking), identifies fast/median/slow converging styles
  via terciles, computes threshold sample sizes for 95\% power.
\end{itemize}

\subsubsection{Category-Level
Visualization}\label{category-level-visualization}

\begin{itemize}
\tightlist
\item
  \texttt{art\_plot\_category\_heatmaps(...)}: Grid of per-style 3×3
  heatmaps (2×5 layout for 10 styles).
\item
  \texttt{art\_plot\_category\_rejection\_rates(...)}: Rejection rate
  vs.~sample size, one line per style.
\item
  \texttt{art\_plot\_category\_mmd\_comparison(...)}: Grouped bar chart
  comparing Human-AI(SD) and Human-AI(LD) MMD across styles.
\item
  \texttt{art\_print\_category\_exposition\_summary(...)}: Formatted
  text summary of category-level findings.
\end{itemize}

\subsubsection{Evolution Analysis (Section
6.5)}\label{evolution-analysis-section-6.5}

\begin{itemize}
\tightlist
\item
  \texttt{art\_compute\_evolution\_analysis(...)}: Computes MMD between
  a human baseline and each model generation (LD → SD → SDXL → FLUX →
  FLUX-Krea). The human baseline sample is fixed to ensure that changes
  in MMD reflect model evolution, not sampling variation in the human
  reference. SDXL and FLUX images were generated using
  \texttt{generate\_images.py} with consistent prompts matching
  AI-ArtBench style categories. For MMD testing, comparisons use up to
  500 samples per group (controlled by \texttt{sample\_cap}).
\item
  \texttt{art\_plot\_evolution\_curve(...)}: Line plot of MMD vs.~model
  generation.
\item
  \texttt{art\_plot\_evolution\_grid(...)}: Creates a 3×6 grid of
  exemplar images showing visual evolution across model generations
  (Human → LD → SD → SDXL → FLUX → FLUX-Krea) for three representative
  styles. Images are hand-picked and stored in
  \texttt{art\_results/evolution\_grid\_images/}.
\end{itemize}

\subsubsection{Perceptual Paradox Analysis (Section
6.7)}\label{perceptual-paradox-analysis-section-6.7}

\begin{itemize}
\tightlist
\item
  \texttt{art\_extract\_vae\_embeddings(...)}: Extracts 16,384-dim VAE
  latents (64×64×4 flattened) from \texttt{stabilityai/sd-vae-ft-mse},
  the Stable Diffusion encoder fine-tuned on reconstruction loss. Tests
  whether distinctiveness is an artifact of the model's compression
  scheme.
\item
  \texttt{art\_extract\_dreamsim\_embeddings(...)}: Extracts embeddings
  from DreamSim, trained to match human perceptual judgments. Tests the
  ``perceptual paradox'': distinct in semantic space (CLIP) but
  indistinguishable in perceptual space (DreamSim).
\item
  \texttt{art\_compute\_perceptual\_paradox\_by\_style(...)}: The
  primary perceptual paradox analysis. Computes MMD for each of 10
  artistic styles across 3 embedding types (CLIP, DreamSim, VAE) and 3
  comparisons (Human vs AI(SD), Human vs AI(LD), AI(SD) vs AI(LD)),
  yielding 90 MMD values total. Returns per-style results and aggregate
  statistics (mean, std, min, max) with significance counts across
  styles.
\item
  \texttt{art\_plot\_perceptual\_paradox(...)}: Grouped bar chart
  visualizing the perceptual paradox results. Shows 3 comparison groups
  × 3 embedding bars per group, with error bars indicating min-max range
  across styles and ``n/10'' significance counts above each bar.
  Hypothesis: CLIP (AI-naive) \textgreater{} DreamSim (AI-aware)
  \textgreater{} VAE (AI-native).
\item
  \texttt{art\_compute\_embedding\_comparison(...)}: Legacy function for
  single-style embedding comparison. Superseded by
  \texttt{art\_compute\_perceptual\_paradox\_by\_style()} for aggregate
  analysis.
\item
  \texttt{art\_plot\_embedding\_comparison(...)}: Grouped bar chart for
  single-style embedding comparison.
\end{itemize}

\subsubsection{Style Comparison (Magnitude
Contextualization)}\label{style-comparison-magnitude-contextualization}

\begin{itemize}
\tightlist
\item
  \texttt{art\_compute\_style\_comparison(...)}: Computes MMD between
  pairs of human art movements (e.g., Impressionism vs.~Realism).
  Calibrates the ``legal distance'': is Human-AI divergence larger or
  smaller than the gap between major human movements? Uses exact style
  matching via \texttt{\_extract\_style\_from\_original\_class()} to
  avoid substring contamination (e.g., ``realism'' incorrectly matching
  ``surrealism'').
\end{itemize}

\subsubsection{Robustness and
Perturbation}\label{robustness-and-perturbation}

\begin{itemize}
\tightlist
\item
  \texttt{art\_run\_robustness\_suite(...,\ target\_style=None,\ embeddings\_raw=None,\ comparison\_pairs=None)}:
  Orchestrates all ablations. If \texttt{target\_style} is provided,
  filters to that style before analysis. The \texttt{embeddings\_raw}
  parameter accepts raw 1024-dim CLIP embeddings for dimensionality
  ablation (which tests reduction from high dimensions). The
  \texttt{comparison\_pairs} parameter accepts a list of tuples, e.g.,
  \texttt{{[}(\textquotesingle{}Human\textquotesingle{},\ \textquotesingle{}AI\ (SD)\textquotesingle{}),\ (\textquotesingle{}Human\textquotesingle{},\ \textquotesingle{}AI\ (LD)\textquotesingle{}){]}},
  enabling ablations across multiple comparisons in a single call.
  Results are saved with pair-specific suffixes (e.g.,
  \texttt{kernel\_ablation\_results\_h\_sd.pkl},
  \texttt{kernel\_ablation\_results\_h\_ld.pkl}).
\item
  \texttt{\_comparison\_pair\_suffix(pair)}: Helper function that
  converts comparison tuples to filename suffixes, e.g.,
  \texttt{(\textquotesingle{}Human\textquotesingle{},\ \textquotesingle{}AI\ (SD)\textquotesingle{})}
  → \texttt{\textquotesingle{}\_h\_sd\textquotesingle{}}.
\item
  \texttt{art\_perturbation\_analysis(...)}: Tests robustness to noise
  and watermarks, analogous to MNIST. Uses a paired sample design with
  joint UMAP projection; sample size controlled by \texttt{n\_samples}
  parameter.
\item
  \texttt{art\_perturbation\_table(...)}: Generates formatted p-value
  tables for perturbation results (used in manuscript Table 4).
\item
  \texttt{art\_print\_robustness\_summary(...)}: Formatted text summary
  of robustness findings.
\item
  \texttt{art\_save\_ablation\_tables(...)}: Prints and saves formatted
  ablation tables (kernel, bandwidth, dimensionality, embedding
  comparison, perturbation) to \texttt{art\_ablation\_tables.txt}.
\end{itemize}

\subsubsection{Memorization Audit (Item-Level
Analysis)}\label{memorization-audit-item-level-analysis}

The distributional MMD test answers whether the \emph{process} is
distinct; the memorization audit answers whether specific \emph{outputs}
exhibit suspicious similarity to training data. This bifurcated approach
is legally necessary: a generative process can be distributionally novel
(\(Q \neq P\)) while still occasionally regurgitating near-copies.

\emph{Key insight:} Distributional tests average over the output space,
potentially masking a small fraction of memorized outputs. If an AI
model is 99\% creative and 1\% regurgitative, MMD will indicate
``distinct,'' but copyright law still cares about that 1\%.

\begin{itemize}
\tightlist
\item
  \texttt{art\_compute\_memorization\_audit(embeddings\_raw,\ images,\ categories,\ original\_classes,\ config,\ output\_dir)}:
  The primary memorization detection function. For each AI-generated
  image, finds its nearest neighbor among human images \emph{within the
  same artistic style} using three complementary metrics:

  \begin{itemize}
  \tightlist
  \item
    \textbf{CLIP cosine similarity}: Semantic proximity in the 1024-dim
    embedding space. High similarity suggests the AI output captures the
    same semantic content as a human work.
  \item
    \textbf{SSIM} (Structural Similarity Index): Pixel-level structural
    correspondence measuring luminance, contrast, and structure. Detects
    visual copying that may not register semantically.
  \item
    \textbf{LPIPS} (Learned Perceptual Image Patch Similarity): Deep
    perceptual distance trained to match human similarity judgments.
    Bridges semantic and structural measures.
  \end{itemize}
\end{itemize}

\emph{Within-style comparisons:} Comparisons are constrained to the same
artistic style (e.g., AI impressionist images compared only to human
impressionist images). This is critical because cross-style comparisons
would conflate style distance with potential copying---an AI Baroque
image should not be flagged simply for being distant from human
Impressionism.

\emph{Threshold establishment:} The human-human baseline establishes
what ``normal'' within-style similarity looks like. For each human
image, we find its nearest neighbor among \emph{other} human images of
the same style. The 99th percentile of this distribution becomes the
detection threshold for CLIP and SSIM (where higher = more similar). For
LPIPS (where lower = more similar), we use the 1st percentile.

\emph{Exceedance rate:} The key output metric. What percentage of AI
outputs have nearest-neighbor similarity exceeding the human-human
baseline threshold? An exceedance rate near 1\% is expected by
construction (matching the threshold percentile); rates substantially
higher suggest systematic memorization.

\begin{itemize}
\item
  \texttt{art\_plot\_memorization\_audit(audit\_results)}: Generates a
  3-panel histogram visualization showing the distribution of
  nearest-neighbor similarities for Human-Human (baseline),
  AI(SD)-Human, and AI(LD)-Human, with threshold lines marked. Useful
  for understanding the full distribution of similarity scores.
\item
  \texttt{art\_plot\_memorization\_exceedance\_chart(audit\_results)}:
  The primary visualization for the memorization audit. Generates a
  grouped bar chart showing exceedance rates across metrics (CLIP, SSIM,
  LPIPS) and models (SD, LD). Bar heights represent the mean exceedance
  rate across styles; whiskers show the min-max range. A horizontal
  dashed line at 1\% marks the expected baseline (since the threshold is
  the 99th percentile of human-human similarity, exactly 1\% of human
  works ``exceed'' by construction). Rates at or below this line
  indicate the AI produces high-similarity outputs no more frequently
  than human artists.
\item
  \texttt{art\_save\_memorization\_audit\_table(audit\_results,\ filename)}:
  Saves a formatted table summarizing exceedance rates across all three
  metrics for both AI model types. This table provides the empirical
  basis for the ``bifurcated inquiry'' discussed in Section 2.
\end{itemize}

\emph{Legal significance:} The combination of high MMD (process
distinctiveness) and low exceedance rates (rare memorization) suggests a
tool with occasional defects rather than a systematic copying machine.
This distinction is relevant for remedies: targeted damages for specific
infringements vs.~broad injunctive relief for market substitution.

\subsection{Module 5: Exposition
Summaries}\label{module-5-exposition-summaries}

Functions that extract and format key results for academic writing:

\begin{itemize}
\tightlist
\item
  \texttt{print\_mnist\_exposition\_summary(...)}: Sample size
  thresholds, negative control statistics, significance rates.
\item
  \texttt{print\_patent\_exposition\_summary(...)}: Cross-section
  comparisons, negative control verification.
\item
  \texttt{art\_print\_category\_exposition\_summary(...)}: Per-style MMD
  values, convergence rankings, threshold sample sizes.
\item
  \texttt{print\_mnist\_robustness\_exposition\_summary(...)}:
  Consolidates all MNIST ablation findings.
\item
  \texttt{print\_art\_robustness\_exposition\_summary(...)}:
  Consolidates evolution and robustness findings for AI Art.
\end{itemize}

\subsection{Module 6: Main Execution}\label{module-6-main-execution}

\subsubsection{Configuration}\label{configuration}

All parameters are centralized in the configuration section preceding
the \texttt{if\ \_\_name\_\_\ ==\ "\_\_main\_\_"} block. Key parameters
include:

\begin{itemize}
\tightlist
\item
  \texttt{SAMPLE\_CAP\ =\ 500}: Maximum samples per class/category for
  MMD computations
\item
  \texttt{PATENT\_N\_SAMPLES\ =\ 500}: Samples per IPC section for
  patent study
\item
  \texttt{ART\_MAX\_IMAGES\_PER\_STYLE\ =\ 250}: Images per artistic
  style (stratified sampling)
\item
  \texttt{R\ =\ 500}: Permutation iterations for all hypothesis tests
\item
  \texttt{ALPHA\ =\ 0.01}: Significance level
\item
  UMAP: 64 dimensions, cosine metric
\end{itemize}

Extended parameters are organized into configuration dictionaries
(\texttt{MNIST\_ROBUSTNESS\_CONFIG}, \texttt{ART\_ROBUSTNESS\_CONFIG},
\texttt{ART\_PERTURBATION\_CONFIG},
\texttt{MEMORIZATION\_AUDIT\_CONFIG}, etc.) for clarity.

The memorization audit is controlled by:

\begin{itemize}
\tightlist
\item
  \texttt{RUN\_MEMORIZATION\_AUDIT\ =\ True}: Toggle to enable/disable
  the item-level analysis
\item
  \texttt{MEMORIZATION\_AUDIT\_CONFIG}: Dictionary containing
  \texttt{threshold\_percentile} (default 99) and \texttt{device} (e.g.,
  `mps' for Apple Silicon, `cuda' for NVIDIA)
\end{itemize}

\subsubsection{Execution Flow}\label{execution-flow}

\begin{enumerate}
\def\labelenumi{\arabic{enumi}.}
\tightlist
\item
  Seed Management: \texttt{set\_all\_seeds(seed)} initializes NumPy,
  PyTorch (including CUDA if available), and Python random seeds for
  reproducibility. For deterministic GPU operations, the function also
  sets \texttt{torch.backends.cudnn.deterministic\ =\ True}.
\item
  Directory Setup: Creates \texttt{mnist\_results/},
  \texttt{patent\_results/}, \texttt{art\_results/}.
\item
  MNIST Study: Train LeNet → Extract embeddings → MMD matrix → Rejection
  rates → Auto-select hardest pair → Robustness suite → Save \& plot.
\item
  Memory Cleanup: After MNIST completes, the model is deleted and GPU
  memory is released via \texttt{torch.cuda.empty\_cache()} (NVIDIA) or
  \texttt{torch.mps.empty\_cache()} (Apple Silicon). This ensures
  subsequent studies have maximum available memory.
\item
  Patent Study: Load \& embed → MMD matrix → Rejection rates → Save \&
  plot.
\item
  AI Art Study: Load \& embed → Category analysis → Evolution analysis →
  Perceptual paradox → Memorization audit → Robustness (on
  representative styles) → Perturbation → Save \& plot.
\item
  Exposition Summaries: Print formatted results for each study.
\end{enumerate}

\subsubsection{Automated Style
Selection}\label{automated-style-selection}

To avoid running expensive ablations on all 10 styles, the script
automatically selects three representative styles based on category
analysis results:

\begin{itemize}
\tightlist
\item
  One from the \emph{fast-converging tercile} (low MMD)
\item
  One from the \emph{median tercile}
\item
  One from the \emph{slow-converging tercile} (high MMD)
\end{itemize}

Robustness and perturbation analyses run only on these three styles.

\subsubsection{Output Formats}\label{output-formats}

\begin{itemize}
\tightlist
\item
  \texttt{.npy}: Pure numpy arrays (embeddings, MMD matrices, p-value
  matrices)
\item
  \texttt{.pkl}: Python dictionaries (rejection rates, ablation results,
  robustness suites)
\item
  \texttt{.npz}: Compressed numpy archives (patent embeddings
  dictionary)
\item
  \texttt{.png}: All visualizations (heatmaps, line plots, bar charts)
\item
  \texttt{.txt}: Formatted tables for reference during manuscript
  writing (e.g., \texttt{mnist\_ablation\_tables.txt},
  \texttt{patent\_summary\_tables.txt},
  \texttt{art\_ablation\_tables.txt},
  \texttt{art\_memorization\_audit\_table.txt})
\end{itemize}

\subsubsection{Embedding Caching}\label{embedding-caching}

CLIP embedding extraction is computationally expensive
(\textasciitilde30 minutes on GPU). To accelerate re-runs, the AI Art
study caches embeddings with validation:

\begin{itemize}
\tightlist
\item
  \textbf{Cache files}: \texttt{art\_clip\_embeddings\_raw.npy}
  (1024-dim), \texttt{art\_clip\_embeddings.npy} (64-dim UMAP-reduced),
  \texttt{art\_categories.npy}, \texttt{art\_original\_classes.npy}
\item
  \textbf{Validation}: On load, the cache is validated by checking (1)
  sample count matches loaded images and (2) category sets match. If
  validation fails, embeddings are re-extracted.
\item
  \textbf{Cache invalidation}: Changing
  \texttt{RUN\_EVOLUTION\_ANALYSIS} or modifying the image dataset
  invalidates the cache automatically.
\end{itemize}

This reduces ablation-only re-runs from \textasciitilde2 hours to
\textasciitilde30 minutes.

\subsection{Extending to New Domains}\label{extending-to-new-domains}

To apply this framework to a new domain (e.g., trademark logos, music
samples, literary texts):

\begin{enumerate}
\def\labelenumi{\arabic{enumi}.}
\item
  \emph{Write a data loader} returning \texttt{(samples,\ labels)} where
  samples can be any format your embedding extractor accepts.
\item
  \emph{Write an embedding extractor} that converts samples to
  fixed-length numerical vectors. Choose an embedding model appropriate
  to your domain:

  \begin{itemize}
  \tightlist
  \item
    Images: CLIP, ResNet, domain-specific models
  \item
    Text: SBERT, legal-specific transformers
  \item
    Audio: wav2vec, audio spectrograms + CNN
  \end{itemize}
\item
  \emph{Call the shared statistical functions}:

  \begin{itemize}
  \tightlist
  \item
    \texttt{mmd\_squared\_unbiased(emb\_A,\ emb\_B)} for distributional
    distance
  \item
    \texttt{permutation\_test(emb\_A,\ emb\_B,\ ...)} for significance
    testing
  \item
    \texttt{compute\_stability\_analysis(...)} and
    \texttt{compute\_kernel\_ablation(...)} for robustness checks
  \end{itemize}
\item
  \emph{Adapt visualization functions} or use the outputs directly for
  custom plots.
\end{enumerate}

The statistical core (Module 1) requires no modification. Domain
expertise is concentrated in the data loader and embedding extractor.

\end{document}